\newtheorem{teo}{Theorem}[section]
\newtheorem{lemma}[teo]{Lemma}
\newtheorem{prop}[teo]{Proposition}
\newtheorem{coro}[teo]{Corollary}
\newtheorem{esempio}[teo]{Example}
\theoremstyle{definition}
\newtheorem{defn}[teo]{Definition}
\newtheorem{oss}[teo]{Remark}
\newcommand{\deltabr}{\Delta \langle B\rangle}
\newcommand{\sigmau}{\overline{\sigma}^2}
\newcommand{\sigmad}{\underline{\sigma}^2}
\newcommand{\ii}{_{t_i}}
\newcommand{\ipl}{_{t_{i+1}}}
\newcommand{\imi}{_{t_{i-1}}}
\newcommand{\NM}{\mathbb{N}}
\newcommand{\Pro}{\mathcal{P}}
\newcommand{\E}{\mathbb{E}}
\newcommand{\brac}{\langle B\rangle}
\newcommand{\F}{\mathcal{F}}
\newcommand{\Hcal}{\mathcal{H}}
\newcommand{\R}{\mathbb{R}}
\newcommand{\ut}{_{t\in[0,T]}}
\lbrace\begin{array}{@{}l@{}}}%
\DeclareMathOperator*{\esssup}{ess\,sup}
\newcommand{\subalign}[1]{%
  \vcenter{%
    \Let@ \restore@math@cr \default@tag
    \baselineskip\fontdimen10 \scriptfont\tw@
    \advance\baselineskip\fontdimen12 \scriptfont\tw@
    \lineskip\thr@@\fontdimen8 \scriptfont\thr@@
    \lineskiplimit\lineskip
    \ialign{\hfil$\m@th\scriptstyle##$&$\m@th\scriptstyle{}##$\crcr
      #1\crcr
    }%
  }
}
\numberwithin{equation}{section}
\newcommand{\condesp}[2][\F_t]       { E_G\left.\left[#2\right|#1\right]}
\newcommand{\esp}[2][]       { E_G\left[#2\right]}
\begin{document}
\author{Francesca Biagini\thanks{Department of Mathematics, Workgroup Financial and Insurance Mathematics, University of Munich (LMU), Theresienstra\ss e 39, 80333 Munich, Germany. Emails: francesca.biagini@math.lmu.de, jacopo.mancin@math.lmu.de, meyer-brandis@math.lmu.de} \textsuperscript{,}\thanks{Secondary affiliation: Department of Mathematics, University of Oslo, Box 
1053, Blindern, 0316, Oslo, Norway} \and Jacopo Mancin\footnotemark[1] \and Thilo Meyer Brandis\footnotemark[1]}  
\title{Robust Mean-Variance Hedging via $G$-Expectation}
\date{}
\maketitle

\begin{abstract}
\noindent
In this paper we study mean-variance hedging under the $G$-expectation framework. Our analysis is carried out by exploiting the $G$-martingale representation theorem and the related probabilistic tools, in a continuous financial market with two assets, where the discounted risky one is modeled as a symmetric $G$-martingale. By tackling progressively larger classes of contingent claims, we are able to explicitly compute the optimal strategy under general assumptions on the form of the contingent claim. 
\end{abstract}

\section{Introduction}
Mean-variance hedging is a classical method in Mathematical Finance for pricing and hedging of contingent claims in incomplete markets. In this paper we consider the mean-variance hedging problem in the $G$-expectation framework in continuous time. Our analysis deeply relies on the \emph{quasi probabilistic} tools provided by the $G$-calculus and thus distinguishes itself from other works on model uncertainty such as the BSDEs approach (see~\cite{nunno} as a reference), the parameter uncertainty setting (see for example~\cite{miss}) or the one period model examined in~\cite{weiwei}.
 
The $G$-expectation space, which represents a generalization of the usual probability space,  was introduced in 2006 by Peng~\cite{Peng:Gexp} for modeling volatility uncertainty and then progressively developed to include most of the classical results of probability theory and stochastic calculus (see~\cite{dhp}, ~\cite{comparison}, ~\cite{osuka}, ~\cite{Peng:dynamicrisk}, ~\cite{Peng:representation} and~\cite{soner:representation} to cite some of them). As a result the $G$-expectation theory has become a very useful framework to cope with volatility ambiguity in finance and many authors have studied some classical problems of stochastic finance, such as no arbitrage conditions, super-replication and optimal control problems in this new setting (see for example~\cite{recursive} and~\cite{Vorbrink}). \\
\noindent
In this context we assume that the discounted risky asset $(X_t)_{t\in[0,T]}$ is a symmetric $G$-martingale (see Definition~\ref{defn:martin}). This means that we consider a financial market that is intrinsically incomplete because of the uncertainty affecting the volatility of $X$. Since perfect replication of a claim $H$ by means of self-financing portfolios will not always be possible, we look for the self-financing strategy which is as close as possible in a quadratic sense to $H$ in a robust way. More precisely we aim at solving the optimal problem 
\begin{equation}\label{numb1}
\inf_{(V_0,\phi)\in\mathbb{R}_+\times \Phi } J_0(V_0,\phi)=\inf_{(V_0,\phi)\in\mathbb{R}_+\times \Phi } \esp{\left(H-V_T(V_0,\phi)\right)^2},
\end{equation}
where $\Phi$ is a space of suitable strategies defined in Definition~\ref{defn:strategia} and $V_T(V_0,\phi)$ stands for the terminal value of the admissible portfolio $(V_0,\phi)$. The objective functional can be interpreted as a stochastic game between the agent and the market, the latter displaying the worst case volatility scenario and the former choosing the best possible strategy.
In the classical setting (see~\cite{Schweizer} for an overview), if the underlying discounted asset is a local martingale, this is equivalent to retrieve the Galtchouk-Kunita-Watanabe decomposition of $H$, i.e.\ to find the projection of $H$ onto the closed space of square integrable stochastic integrals of $X$. In the $G$-expectation framework such result cannot be used. However the structure of $G$-martingales has been clarified in several works such as~\cite{Peng:representation}, ~\cite{soner:representation} and~\cite{song}. \\
We base our analysis on these results and consider $H$ with decomposition~\eqref{star} to solve the robust mean-variance hedging problem. Moreover, in order to guarantee the $M_G^2$-integrability of the optimal hedging strategy (see Section \ref{section2}), the volatility uncertainty setting imposes some additional regularity on $H$ with respect to the classical case, namely $H\in L_G^{2+\epsilon}(\mathcal{F}_T)$ for some $\epsilon>0$ instead of $H\in L_G^{2}(\mathcal{F}_T)$. 

From a technical point of view tackling \eqref{numb1} is very different from solving the classical mean-variance problem in a standard probability setting. In fact the nonlinearity of the model prevents the orthogonality of $B$ and $\langle B\rangle$, namely the $G$-Brownian motion and its quadratic variation (see~\cite{hum}). This in turn limits the possibility to compute explicitly expressions of the type
\[
\esp{\int_0^T\theta_sdB_s\int_0^T\xi_sd\langle B\rangle_s},
\]
for suitable processes $\theta$ and $\xi$, which is a desirable condition when adopting a quadratic criterion.\\
Our main contribution is the explicit computation of the optimal mean-variance hedging portfolio for a wide class of contingent claims. As $L_G^{2+\epsilon}(\mathcal{F}_T)$ is the closure under the $\Vert \cdot\Vert_{2+\epsilon}$-norm of $L_{ip}(\mathcal{F}_T)$, we can focus on claims with martingale decomposition \eqref{star}, where the finite variation part is explicitely characterized. As shown by Theorem \ref{theconv}, given any approximating sequence $(H^n)_{n\in\mathbb{N}}\subseteq L_{ip}(\mathcal{F}_T)$ for $H\in L_G^{2+\epsilon}(\mathcal{F}_T)$, we obtain that the optimal value functions $J_n^\ast$ for $H^n$ converge to the optimal value function $J^\ast$ for $H$.

We first assume $\eta$ to be a continuous process, deterministic or depending only on $\langle B \rangle$. The class of claims admitting this particular decomposition is already wide enough and includes the quadratic polynomials of $B$ and the Lipschitz functions of $\langle B \rangle$. This last result is particularly interesting from a practical perspective as it incorporates a wide class of \emph{volatility derivatives}, such as volatility swaps.\\
For this kind of claims we are able to provide a full description of the optimal portfolio. In the general case obtaining a complete description of the optimal mean-variance strategy is much more involved. We consider the situation in which $\eta$ is a piecewise constant process $\eta_s=\sum_{i=0}^{n-1}\eta\ii\mathbb{I}_{(t_i,t_{i+1}]}(s)$ and outline a stepwise procedure that we solve explicitly for $n=2$. In addition we provide a lower and upper bound for the terminal risk. This limitation is not completely unexpected since it analogously arises also in the classical context of one single prior, where the discounted asset price $(X_t)\ut$ is modeled as a semimartingale. In this case the solution to the mean variance hedging problem is only implicit and described in a \emph{feedback form} (see \cite{pham}) as no orthogonal projection of the claim on the space of the square integrable integrals with respect to $X$ is possible.

The paper is organized as follows. In Section 2 we introduce some fundamental preliminaries on the $G$-expectation theory and also present some new results on stochastic calculus. In Section 3 we describe the market model and we formulate the mean-variance hedging problem. In Section 4 we provide the explicit solution for the optimal mean-variance portfolio for some classes of contingent claims. In Section 5 we provide a lower and upper bound for the optimal terminal risk.

\section{$G$-Setting}
We outline here an introduction to the theory of sublinear expectations, $G$-Brownian motion and the related stochastic calculus. The results from this section can be found in~\cite{dhp}, ~\cite{Peng:dynamicrisk} and~\cite{song}. Moreover we present some new insights concerning the $G$-martingale decomposition and $G$-convex functions, and provide new estimates, see Lemma~\ref{minus}, ~\ref{quadrato} and Section 2.4. 
\subsection{The $G$-Expectation}
Let $\Omega$ be a given set and $\mathcal{H}$ be a vector lattice of real-valued functions defined on $\Omega$ containing $1$. $\mathcal{H}$ is a space of random variables. Assume in addition that if $X_1,\dots,X_n\in \mathcal{H}$, then $\varphi(X_1,\dots,X_n)\in\mathcal{H}$ for any $\varphi\in C_{l,Lip}(\mathbb{R}^n)$, $n\geq1$, where $\varphi\in C_{l,Lip}(\mathbb{R}^n)$ denotes the set  of real-valued functions $\psi$ defined on $\mathbb{R}^n$ such that 
\[
|\psi(x)-\psi(y)|\leq C(1+|x|^k+|y|^k)|x-y|, \quad\forall x,y\in\mathbb{R}^n,
\]
where $k$ is an integer depending on the function $\psi$.
A nonlinear expectation is defined as follows.

\begin{defn}
A nonlinear expectation $\E$ is a functional $\mathcal{H}\mapsto\mathbb{R}$ satisfying the following properties
\begin{enumerate}
\item Monotonicity: If $X,Y\in\mathcal{H}$ and $X\geq Y$ then $\E(X)\geq\E(Y)$.
\item Preserving of constants: $\E(c)=c$.
\item Sub-additivity: 
$$
\E(X+Y)\leq\E(X)+\E(Y), \qquad \forall X,Y\in\mathcal{H}.
$$
\item Positive homogeneity: $\E(\lambda X)=\lambda\E(X)$, $\forall \lambda\geq0$, $X\in\mathcal{H}$.
\item Constant translatability. $\E(X+c)=\E(X)+c$.
\end{enumerate}
The triple $(\Omega, \mathcal{H}, \E)$ is called a sublinear expectation space.
\label{defn:sub}
\end{defn}

\noindent
\begin{defn} 
In a sublinear expectation space $(\Omega, \mathcal{H}, \E)$ a random variable $Y\in\Hcal$ is said to be independent from another random variable $X\in\Hcal$ under $\E$ if for any test function $\psi\in C_{l,Lip}(\mathbb{R}^2)$ we have
$$
\E(\psi(X,Y))=\E(\E(\psi(x,Y))_{X=x}),
$$
where $\psi(x,Y)\in\Hcal$ for every $x\in\mathbb{R}$ as $\psi(x,\cdot)\in C_{l,Lip}(\mathbb{R})$.
\end{defn}

\begin{oss}
Note from the previous definition that in a sublinear expectation space the condition ``$X$ is independent to $Y$" does not automatically imply ``$Y$ is independent to $X$".
\end{oss}

\begin{defn}
Let $X_1$ and $X_2$ be two random variables defined on the sublinear expectation spaces $(\Omega_1, \mathcal{H}_1, \E_1)$ and $(\Omega_2, \mathcal{H}_2, \E_2)$ respectively. They are called identically distributed, denoted by $X_1\sim X_2$, if
$$
\E_1(\psi(X_1))=\E_2(\psi(X_2)), \qquad \forall\psi\in C_{l,Lip}(\mathbb{R}).
$$
We call $\bar{X}$ an independent copy of $X$ if $\bar{X}\sim X$ and $\bar{X}$ is independent from $X$.
\end{defn}

\noindent The $G$-normal distribution in a sublinear expectation space is then defined as follows.
\begin{defn}
A random variable $X$ on a sublinear expectation space $(\Omega, \mathcal{H}, \E)$ is called $G$-normal distributed if for any $a,b\geq0$
$$
aX+b\bar{X} \sim \sqrt{a^2+b^2}X,
$$
where $\bar{X}$ is an independent copy of $X$. The letter $G$ denotes the function 
$$
G(y):=\frac{1}{2}\E(yX^2): \mathbb{R}\mapsto\mathbb{R}.
$$
\end{defn}
\noindent Such $X$ is symmetric, i.e.\ $\E(X)=\E(-X)=0$. In addition we have the following identity
$$
G(y)=\frac{1}{2}\overline{\sigma}^2y^+-\frac{1}{2}\underline{\sigma}^2y^-,
$$
with $\overline{\sigma}^2:=\E(X^2)$ and $\underline{\sigma}^2:=-\E(-X^2)$. We write $X$ is $N(\{0\}\times[\underline{\sigma}^2,\overline{\sigma}^2])$ distributed. 

\begin{defn}
A process $(B_t)_{t\geq0}$ on a sublinear expectation space $(\Omega, \mathcal{H}, \E)$ is called $G$-Brownian motion if the following properties hold true:

\begin{itemize}
\item[(i)] $B_0=0$.
\item[(ii)] For each $t,s\geq0$ the increment $B_{t+s}-B_t$ is $N(\{0\}\times[\underline{\sigma}^2s,\overline{\sigma}^2s])$ distributed and independent from $(B_{t_1},B_{t_2,},\dots,B_{t_n})$ for any $n\in\mathbb{N}$, $0\leq t_1\leq\cdots\leq t_n\leq t$.
\end{itemize}

\end{defn}

\noindent We thus have the same properties as in the classical case, as well as that $(B_{t+t_0}-B_{t_0})_{t\geq0}$ is a $G$-Brownian motion for all $t_0\geq0$.
We now introduce the construction of $G$-expectation and the corresponding $G$-Brownian motion. We fix a time horizon $T>0$ and set $\Omega_T:=C_0([0,T],\mathbb{R})$, the space of all $\mathbb{R}$-valued continuous paths $(\omega_t)_{t\in[0,T]}$ with $\omega_0=0$. Let $B=(B_t)_{t\in[0,T]}$ be the canonical process on $\Omega_T$ defined as $B_t(\omega):=\omega_t$, $t\in[0,T]$.

\noindent We consider the following space of random variables:
$$
L_{ip}(\Omega_T):=\{\varphi(B_{t_1},\cdots,B_{t_n})|n\in\mathbb{N}, t_1,\dots,t_n\in[0,T], \varphi\in C_{l,Lip}(\mathbb{R}^n)\}.
$$

\noindent
The $G$-Brownian motion is constructed on $L_{ip}(\Omega_T)$. For this purpose let $(\xi_i)_{i\in\mathbb{N}}$ be a sequence of random variables on a sublinear expectation space $(\tilde{\Omega},\tilde{\mathcal{H}}, \tilde{E})$ such that $\xi_i$ is $G$-normal distributed and $\xi_{i+1}$ is independent of $(\xi_1,\dots,\xi_{i})$ for each integer $i\geq1$. A sublinear expectation on $L_{ip}(\Omega_T)$ is then constructed by the following procedure: for each $X\in L_{ip}(\Omega_T)$ with $X=\varphi(B_{t_1}-B_{t_0},\cdots,B_{t_n}-B_{t_{n-1}})$ for some $\varphi\in C_{l,Lip}(\mathbb{R}^n)$, $t_1,\dots,t_n\in[0,T]$, set
$$
E_G(\varphi(B_{t_1}-B_{t_0},\cdots,B_{t_n}-B_{t_{n-1}})):=\tilde{E}(\varphi(\sqrt{t_{1}-t_0}\xi_{1},\dots,\sqrt{t_{n}-t_{n-1}}\xi_{n}).
$$
It is then possible to show that $E_G$ consistently defines a sublinear expectation on $L_{ip}(\Omega_T)$ and the canonical process $B$ represents a $G$-Brownian motion (see~\cite{Peng:dynamicrisk}). 

\begin{defn}
The sublinear expectation $E_G: L_{ip}(\Omega_T) \mapsto\mathbb{R}$ defined through the above procedure is called \emph{$G$-expectation}. The canonical process $(B_t)_{t\in[0,T]}$ on such sublinear expectation space $(\Omega_T,L_{ip}(\Omega_T), E_G)$ is a \emph{$G$-Brownian motion}.
\end{defn}

\noindent
The related $G$-conditional expectation of the random variable $X\in L_{ip}(\Omega_T)$ under $\Omega_{t_i}:=C_0([0,t_i],\R)$ is defined by
$$
E_G(\varphi(B_{t_1}-B_{t_0},\cdots,B_{t_n}-B_{t_{n-1}})|\Omega_{t_i}):=\psi(B_{t_1}-B_{t_0},\cdots,B_{t_n}-B_{t_{n-1}}),
$$
where $\psi(x_1,\dots,x_i):=\tilde{E}(\varphi(x_1,\dots,x_i,\sqrt{t_{i+1}-t_i}\xi_{i+1},\dots,\sqrt{t_{n}-t_{n-1}}\xi_{n})$. \\
Let now $\Vert\xi\Vert_p:=(E_G(|\xi|^p))^{\frac{1}{p}}$ for $\xi\in L_{ip}(\Omega_T)$, $p\geq 1$. Then for any $t\in[0,T]$, $E_G(\cdot|\Omega_t)$ can be continuously extended to $L_G^p(\Omega_T)$, the completion of $L_{ip}(\Omega_T)$ under the norm $\Vert\xi\Vert_p$. \noindent The following property is quite useful.
\begin{prop}[Proposition 22 of~\cite{Peng:Gexp}]
\label{prop:useful}
Let $Y\in L_G^1({\Omega_T})$ be such that $E_G(Y)=-E_G(-Y)$. Then we have
$$
E_G(X+Y)=E_G(X)+E_G(Y),\qquad \forall\; X\in L_G^1(\Omega_T).
$$
\end{prop}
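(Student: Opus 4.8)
The plan is to derive the identity from the sub-additivity axiom alone, applied twice and in opposite directions, using the hypothesis $E_G(Y)=-E_G(-Y)$ to upgrade an inequality to an equality. It is worth recalling at the outset that positive homogeneity only yields $E_G(\lambda X)=\lambda E_G(X)$ for $\lambda\ge 0$, so in general $E_G(-Y)\neq -E_G(Y)$; the assumption on $Y$ is precisely the extra input that makes the two bounds meet.

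First I would record the easy inequality: by sub-additivity,
\[
E_G(X+Y)\le E_G(X)+E_G(Y).
\]
For the reverse inequality, I would apply sub-additivity to the decomposition $X=(X+Y)+(-Y)$:
\[
E_G(X)=E_G\big((X+Y)+(-Y)\big)\le E_G(X+Y)+E_G(-Y)=E_G(X+Y)-E_G(Y),
\]
the last equality being where the hypothesis enters. Rearranging gives $E_G(X)+E_G(Y)\le E_G(X+Y)$, and combining this with the previous display yields $E_G(X+Y)=E_G(X)+E_G(Y)$.

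The only points requiring attention — and this is a matter of care rather than a genuine obstacle — are the domain issues: one must know that $X+Y\in L_G^1(\Omega_T)$ so that all three quantities above are finite, which holds because $L_G^1(\Omega_T)$, being the completion of $L_{ip}(\Omega_T)$ under $\|\cdot\|_1$, is a vector space; and one must know that sub-additivity, which is part of the definition of $E_G$ on $L_{ip}(\Omega_T)$, persists on $L_G^1(\Omega_T)$, which follows from the $\|\cdot\|_1$-continuity of the extension of $E_G$. No approximation of $Y$ by ``symmetric'' elements of $L_{ip}(\Omega_T)$ is required: once sub-additivity is available on the whole space $L_G^1(\Omega_T)$, the argument is purely algebraic.
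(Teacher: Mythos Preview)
Your argument is correct and is precisely the standard two-line proof: sub-additivity gives $E_G(X+Y)\le E_G(X)+E_G(Y)$, and sub-additivity applied to $X=(X+Y)+(-Y)$ together with the hypothesis $E_G(-Y)=-E_G(Y)$ gives the reverse inequality. The paper itself does not supply a proof of this proposition; it merely quotes it as Proposition~22 of Peng's foundational paper, where the argument is exactly the one you wrote.
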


\noindent
The $G$-expectation can be seen as a ``worst case expectation". Let $\mathcal{F}=\mathcal{B}(\Omega_T)$ be the Borel $\sigma$-algebra and consider the probability space $(\Omega_T,\mathcal{F},P)$. Let $W=(W_t)_{t\in[0,T]}$ be a classical Brownian motion on this space. The filtration generated by $W$ is denoted by $\mathbb{F}=(\mathcal{F}_t)_{t\in[0,T]}$, where $\mathcal{F}_t:=\sigma\{W_s|0\leq s\leq t\}\vee\mathcal{N}$, and $\mathcal{N}$ denotes the collection of $P$-null subsets. Let $\Theta$ be the bounded closed subset $\Theta:=[\underline{\sigma},\overline{\sigma}]$ such that 
\[
G(y)=\frac{1}{2}\sup_{\sigma\in\Theta}\left(y\sigma^2\right)=
\begin{cases}
\frac{1}{2}y\overline{\sigma}^2  & \text{if $y\geq 0$}, \\
\frac{1}{2}y\underline{\sigma}^2 & \text{if $y<0$},
\end{cases}
\] 
and denote by $\mathcal{A}_{t,T}^\Theta$ the collection of all the $\Theta$-valued $\mathbb{F}$-adapted processes on $[t,T]$. For any $\sigma=(\sigma_t)\ut\in\mathcal{A}_{t,T}^\Theta$ and $s\in[t,T]$ we define
\begin{equation}
\label{equation: law}
B_s^{t,\sigma}:=\int_t^s\sigma_udW_u.
\end{equation}
Let $P^\sigma$ be the law of the process $B_t^{0,\sigma}=\int_0^t\sigma_udW_u$, $t\in[0,T]$, i.e.\ $P^\sigma=P\circ(B^{0,\sigma})^{-1}$. Define 

\begin{equation}
\label{equation:pro1}
\mathcal{P}_1:=\{P^\sigma\;|\;\sigma\in\mathcal{A}_{0,T}^\Theta\},
\end{equation}

\noindent
and $\mathcal{P}:=\bar{\mathcal{P}}_1$, as the closure of $\mathcal{P}_1$ under the topology of weak convergence. We can now formulate the main result (see~\cite{dhp} for the proof):
\begin{teo}
For any $\varphi\in C_{l,Lip}(\mathbb{R}^n)$, $n\in\mathbb{N}$, $0\leq t_1\leq\cdots\leq t_n\leq T$, we have
\begin{equation*}
\begin{split}
E_G(\varphi(B_{t_1},\dots,B_{t_n}-B_{t_{n-1}}))&=\sup_{\sigma\in\mathcal{A}_{0,T}^\Theta} E^P(\varphi(B_{t_1}^{0,\sigma},\dots,B_{t_n}^{t_{n-1},\sigma}))\\
&=\sup_{\sigma\in\mathcal{A}_{0,T}^\Theta} E^{P^\sigma}(\varphi(B_{t_1},\dots,B_{t_n}-B_{t_{n-1}}))\\
&=\sup_{P^\sigma\in\mathcal{P}_1} E^{P^\sigma}(\varphi(B_{t_1},\dots,B_{t_n}-B_{t_{n-1}})).
\end{split}
\end{equation*}
Furthermore,
$$
E_G(X)=\sup_{P\in\mathcal{P}}E^P(X), \qquad \forall X\in L_G^1(\F_T).
$$
\end{teo}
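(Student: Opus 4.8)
The plan is to proceed in two stages: first prove all the identities in the first display for cylinder random variables $\varphi(B_{t_1},\dots,B_{t_n}-B_{t_{n-1}})\in L_{ip}(\Omega_T)$, and then extend the last formula to every $X\in L_G^1(\F_T)$ by density. In the first display the second equality is just the change of variables under $P^\sigma=P\circ(B^{0,\sigma})^{-1}$ applied to the continuous functional $\omega\mapsto\varphi(\omega_{t_1},\dots,\omega_{t_n}-\omega_{t_{n-1}})$, while the third is nothing but the definition~\eqref{equation:pro1} of $\mathcal{P}_1$; so the only substantial point at this stage is
\[
E_G\bigl(\varphi(B_{t_1},\dots,B_{t_n}-B_{t_{n-1}})\bigr)=\sup_{\sigma\in\mathcal{A}_{0,T}^\Theta}E^P\bigl(\varphi(B^{0,\sigma}_{t_1},\dots,B^{t_{n-1},\sigma}_{t_n})\bigr).
\]

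To establish this I would use the link between the $G$-expectation and the Barenblatt equation. For $\phi\in C_{l,Lip}(\R)$ the function $u(s,x):=E_G(\phi(x+B_s))$ is the unique viscosity solution of the nonlinear heat equation $\partial_s u=G(\partial^2_{xx}u)$ with $u(0,\cdot)=\phi$, and a classical verification argument for the associated stochastic control problem identifies it with $u(s,x)=\sup_{\sigma\in\mathcal{A}_{0,s}^\Theta}E^P\bigl(\phi(x+\int_0^s\sigma_r\,dW_r)\bigr)$, since $\sup_{\sigma\in\Theta}\tfrac12\sigma^2 a=G(a)$. Plugging this one-step representation into the recursive definition of $E_G$ --- writing, with $\Delta_iB:=B_{t_i}-B_{t_{i-1}}$, $E_G(\varphi(\Delta_1B,\dots,\Delta_nB))=E_G(\psi(\Delta_1B,\dots,\Delta_{n-1}B))$ where $\psi(x_1,\dots,x_{n-1})=E_G(\varphi(x_1,\dots,x_{n-1},\Delta_nB))$ --- and performing a backward induction on $n$ that repeatedly uses the independence of the increments and the flow property of $B^{\cdot,\sigma}$, I would obtain the displayed identity. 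The collapse of the nested suprema over the successive intervals $[t_{i-1},t_i]$ into a single supremum over $\mathcal{A}_{0,T}^\Theta$ is the standard dynamic-programming step and requires a measurable-selection argument.

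For the second stage, set $\widetilde{\mathbb{E}}(X):=\sup_{P\in\mathcal{P}}E^P(X)$. Every $X\in L_{ip}(\Omega_T)$ is a continuous functional of the path with at most polynomial growth; since $\mathcal{P}_1$ --- and hence its weak closure $\mathcal{P}$ --- is tight with uniformly integrable moments of every order, passing to the closure does not change the supremum, so by the first stage $\widetilde{\mathbb{E}}=E_G$ on $L_{ip}(\Omega_T)$. On this subspace $\widetilde{\mathbb{E}}$ is $1$-Lipschitz for the norm $\Vert\cdot\Vert_1=E_G(|\cdot|)$, because $|\widetilde{\mathbb{E}}(X)-\widetilde{\mathbb{E}}(Y)|\le\widetilde{\mathbb{E}}(|X-Y|)=E_G(|X-Y|)$; as $L_G^1(\F_T)$ is by construction the $\Vert\cdot\Vert_1$-completion of $L_{ip}(\Omega_T)$, $\widetilde{\mathbb{E}}$ extends uniquely and continuously to $L_G^1(\F_T)$, and the extension still coincides with $E_G$. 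The only remaining issue is to realise this completion inside a genuine space of random variables on which $\sup_{P\in\mathcal{P}}E^P(\cdot)$ literally makes sense; this follows from the regularity of $E_G$, i.e.\ its continuity along decreasing sequences of bounded continuous functions tending to $0$ quasi-surely.

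The heart of the argument --- and the step I expect to be the main obstacle --- is the weak compactness of $\mathcal{P}$, equivalently the tightness of $\mathcal{P}_1$, on which both the closure step and the regularity of $E_G$ rest. It should follow from the uniform estimates $\sup_{\sigma\in\mathcal{A}_{0,T}^\Theta}E^P\bigl(\sup_{t\le T}|B^{0,\sigma}_t|^p\bigr)<\infty$ and $E^P\bigl(|B^{0,\sigma}_t-B^{0,\sigma}_s|^p\bigr)\le C_p|t-s|^{p/2}$, which come from the Burkholder--Davis--Gundy inequality together with the uniform bound $|\sigma|\le\overline{\sigma}$, combined with Kolmogorov's tightness criterion. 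A secondary technical nuisance is the regularity required in the verification theorem for the Barenblatt equation, which is genuinely $C^{1,2}$ only in the non-degenerate case $\underline{\sigma}>0$; the degenerate case $\underline{\sigma}=0$ is then reached by letting $\underline{\sigma}$ decrease to $0$ and passing to the limit.
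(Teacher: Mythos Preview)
The paper does not actually supply a proof of this theorem: it is stated with the attribution ``see~\cite{dhp} for the proof'' and no argument is given in the text. So there is no in-paper proof to compare your proposal against.

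That said, your outline is a faithful sketch of the Denis--Hu--Peng approach referenced by the authors: the identification of $E_G$ on one-step cylinders with the value function of a stochastic control problem via the $G$-heat equation, the backward induction over the partition points with a measurable-selection/dynamic-programming step to collapse the nested suprema, and then the extension from $L_{ip}(\Omega_T)$ to $L_G^1(\F_T)$ by the $\Vert\cdot\Vert_1$-Lipschitz continuity of both $E_G$ and $\sup_{P\in\mathcal{P}}E^P(\cdot)$, resting on the tightness (hence weak compactness) of $\mathcal{P}$ obtained from BDG-type moment bounds and Kolmogorov's criterion. You have also correctly flagged the two genuine technical points --- the measurable selection in the dynamic programming step and the regularity of the $G$-heat equation when $\underline{\sigma}=0$ --- which are exactly the places where the full proof in~\cite{dhp} does real work. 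Nothing in your plan is wrong or missing at the level of a proof sketch; it simply reproduces the cited argument rather than offering an alternative.
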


\noindent
Finally, given the set of probability measures $\mathcal{P}$, we introduce here a notation that will be useful later on.
\begin{defn}
A set $A$ is said \emph{polar} if $P(A)=0$ $\forall P\in\mathcal{P}$. A property is said to hold \emph{quasi surely (q.s.)} if it holds outside a polar set.
\end{defn}
\noindent
In the rest of the paper we work in the setting outlined above.

\subsection{Stochastic Calculus of It\^o type with $G$-Brownian Motion}\label{section2}
We now introduce the stochastic integral with respect to a $G$-Brownian motion. To this purpose we summarize some results of~\cite{Peng:dynamicrisk}, if not mentioned otherwise, that are useful in the sequel. For $p\geq1$ fixed, we consider the following type of simple processes: for a given partition $\{t_0,\dots,t_N\}$ of $[0,T]$, $N\in\mathbb{N}$, we set
\begin{equation}
\label{equation:rep}
\eta_t(\omega)=\sum_{j=0}^{N-1}\xi_j(\omega)\mathbb{I}_{[t_j,t_{j+1})}(t),	
\end{equation}

\noindent
where $\xi_i\in L_G^p(\F_{t_i})$, $i\in0,\dots,N-1$. The collection of this type of processes is denoted by $M_G^{p,0}(0,T)$. For each $\eta\in M_G^{p,0}(0,T)$ let $\Vert\eta\Vert_{M_G^p}:=(E_G\int_0^T|\eta_s|^pds)^{\frac{1}{p}}$ and denote by $M_G^p(0,T)$ the completion of $M_G^{p,0}(0,T)$ under the norm $\Vert\cdot\Vert_{M_G^p}$.

\begin{defn}
For $\eta\in M_G^{2,0}(0,T)$ with the representation in~\eqref{equation:rep} we define the integral mapping $I:M_G^{2,0}(0,T)\mapsto L_G^2(\F_T)$ by
$$
I(\eta)=\int_0^T\eta(s)dB_s:=\sum_{j=0}^{N-1}\eta_j(B_{t_{j+1}}-B_{t_j}).
$$
\end{defn}

\begin{lemma}[Lemma 30 of~\cite{Peng:Gexp}]
The mapping $I:M_G^{2,0}(0,T)\mapsto L_G^2(\F_T)$ is a linear continuous mapping and thus can be continuously extended to $I:M_G^{2}(0,T)\mapsto L_G^2(\F_T)$. 
\end{lemma}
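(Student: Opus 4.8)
The plan is to show that $I$ is linear and Lipschitz continuous as a map $(M_G^{2,0}(0,T),\Vert\cdot\Vert_{M_G^2})\to(L_G^2(\F_T),\Vert\cdot\Vert_2)$ and then to extend it by continuity, using that $M_G^{2,0}(0,T)$ is dense in $M_G^2(0,T)$ by definition of the latter and that $L_G^2(\F_T)$ is a Banach space. Well-definedness (i.e.\ $I(\eta)\in L_G^2(\F_T)$ for $\eta=\sum_{j=0}^{N-1}\xi_j\mathbb{I}_{[t_j,t_{j+1})}$) follows from the independence of $B_{t_{j+1}}-B_{t_j}$ from $\F_{t_j}$, which yields the identity $E_G[\xi_j^2(B_{t_{j+1}}-B_{t_j})^2\,|\,\F_{t_j}]=\overline{\sigma}^2\xi_j^2(t_{j+1}-t_j)$, together with $\xi_j^2\in L_G^1(\F_{t_j})$ (H\"older's inequality for $E_G$). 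Linearity is immediate once one passes to a common refinement of the partitions of the two integrands.

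The core is the norm estimate $\Vert I(\eta)\Vert_2^2=E_G[I(\eta)^2]\le\overline{\sigma}^2\,E_G[\int_0^T|\eta_s|^2\,ds]=\overline{\sigma}^2\Vert\eta\Vert_{M_G^2}^2$, which gives continuity with constant $\overline{\sigma}$. I would prove it by backward induction along the partition, using the tower property of the conditional $G$-expectation. Writing $S_m:=\sum_{j=0}^{m-1}\xi_j(B_{t_{j+1}}-B_{t_j})$ and expanding in the last block, $S_N^2=S_{N-1}^2+2S_{N-1}\xi_{N-1}(B_{t_N}-B_{t_{N-1}})+\xi_{N-1}^2(B_{t_N}-B_{t_{N-1}})^2$, one treats the three pieces separately: $S_{N-1}^2$ and $\overline{\sigma}^2\sum_{j=0}^{N-2}\xi_j^2(t_{j+1}-t_j)$ are $\F_{t_{N-1}}$-measurable and thus pass through $E_G[\cdot\,|\,\F_{t_{N-1}}]$ unchanged; the mixed term is $\F_{t_{N-1}}$-conditionally symmetric with vanishing conditional $G$-expectation (the increment being symmetric and independent of $\F_{t_{N-1}}$), so it is separated off at no cost by the conditional version of Proposition~\ref{prop:useful}; and $E_G[\xi_{N-1}^2(B_{t_N}-B_{t_{N-1}})^2\,|\,\F_{t_{N-1}}]=\overline{\sigma}^2\xi_{N-1}^2(t_N-t_{N-1})$. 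Combining these, $E_G[S_N^2-\overline{\sigma}^2\sum_{j=0}^{N-1}\xi_j^2(t_{j+1}-t_j)\,|\,\F_{t_{N-1}}]=S_{N-1}^2-\overline{\sigma}^2\sum_{j=0}^{N-2}\xi_j^2(t_{j+1}-t_j)$, which has the same form with $N$ lowered by one; iterating down to $t_0$ (where $S_0=0$) gives $E_G[I(\eta)^2-\overline{\sigma}^2\int_0^T|\eta_s|^2\,ds]=0$. Since $E_G$ is only sub-additive this is \emph{not} an isometry, but it does give $E_G[I(\eta)^2]\le E_G[I(\eta)^2-\overline{\sigma}^2\int_0^T|\eta_s|^2\,ds]+\overline{\sigma}^2E_G[\int_0^T|\eta_s|^2\,ds]=\overline{\sigma}^2\Vert\eta\Vert_{M_G^2}^2$, which is exactly what is needed.

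Finally, a bounded linear operator defined on a dense subspace of a normed space, taking values in a Banach space, extends uniquely to a bounded linear operator on the whole space with the same norm; applied here this produces the announced continuous extension $I:M_G^2(0,T)\to L_G^2(\F_T)$. I expect the only real obstacle to be the bookkeeping in the inductive step --- keeping track of which summands are $\F_{t_j}$-measurable (and so come out of the conditional expectation exactly) versus merely conditionally symmetric (and so come out only via Proposition~\ref{prop:useful}) --- together with the small but essential point that sub-additivity of $E_G$ forces the inequality $E_G[I(\eta)^2]\le\overline{\sigma}^2\Vert\eta\Vert_{M_G^2}^2$ rather than an equality. The integrability used along the way ($I(\eta)\in L_G^2$, the mixed term in $L_G^1$, $|\eta|^2\in M_G^1$) is routine, from H\"older's inequality for $E_G$ and the second-moment formula for $G$-Brownian increments.
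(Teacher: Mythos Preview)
Your proof is correct and follows the standard argument from Peng's original paper. Note that the present paper does not supply its own proof of this lemma but simply cites it as Lemma~30 of~\cite{Peng:Gexp}; your backward-induction derivation of the estimate $E_G[I(\eta)^2]\le\overline{\sigma}^2\Vert\eta\Vert_{M_G^2}^2$, followed by the bounded-linear-extension argument, is precisely the approach taken there.
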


\noindent 
It is then possible to show that the integral has similar properties as in the classical It\^o case.

\begin{defn}
The quadratic variation of the $G$-Brownian motion is defined as 
$$
\langle B\rangle_t=B_t^2-2\int_0^t B_sdB_s, \qquad \forall t\leq T,
$$
and it is a continuous increasing process which is absolutely continuous with respect to the Lebesgue measure $dt$ (see Definition 2.2 in~\cite{song}). 
\end{defn}

\noindent
Here $\langle B\rangle_t$, $t\in[0,T]$, perfectly characterizes the part of uncertainty, or ambiguity, of $B$. For $s,t\geq0$, we have that $\langle B\rangle_{s+t}-\langle B\rangle_s$ is independent of $\F_s$ and $\langle B\rangle_{s+t}-\langle B\rangle_s\sim\langle B\rangle_t$. We say that $\langle B\rangle_t$ is $N([\underline{\sigma}^2t,\overline{\sigma}^2t]\times \{0\})$-distributed, i.e., for all $\varphi\in C_{l,Lip}(\mathbb{R})$,
\begin{equation}
\label{maximallydist}
E_G(\varphi(\langle B\rangle_t))=\sup_{\underline{\sigma}^2\leq v\leq\overline{\sigma}^2}\varphi(vt).
\end{equation}
The quadratic variation of the $G$-Brownian motion thus satisfies the following definition.

\begin{defn}
An $n$-dimensional random vector $X$ on a sublinear expectation space $(\Omega,\mathcal{H},\mathbb{E})$ is called \emph{maximally distributed} if there exists a closed set $\Gamma\subset\mathbb{R}^n$ such that
\[
\mathbb{E}(\varphi(X))= \sup_{x\in \Gamma} \varphi(x),
\]
for all $\varphi\in C_{l,Lip}(\mathbb{R}^n)$.
\end{defn}

\noindent
The integral with respect to the quadratic variation of $G$-Brownian motion $\int_0^t\eta_sd\langle B\rangle_s$ is introduced analogously. Firstly for all $\eta\in M_G^{1,0}(0,T)$, and then, again by continuity, for all $\eta\in M_G^{1}(0,T)$.

\begin{defn}
\label{defn:martin}
A process $M=(M_t)_{t\in[0,T]}$, such that $M_t\in L_G^1(\F_t)$ for any $t\in[0,T]$, is called $G$-martingale if $E_G(M_t|\mathcal{F}_s)=M_s$ for all $s\leq t\leq T$. If $M$ and $-M$ are both $G$-martingales, $M$ is called a symmetric $G$-martingale.
\end{defn}

\noindent
Denote, for $t\in[0,T]$ and $P\in\mathcal{P}$,
\[
\mathcal{P}(t,P):=\{P'\in\mathcal{P}:P'=P \text{ on }\mathcal{F}_t\}.
\]
By means of the characterization of the conditional G-expectation (see~\cite{soner:representation} for more details) we have that $M$ is a G-martingale if and only if for all $0\leq s\leq t\leq T$, $P\in\mathcal{P}$,
\begin{equation}
\label{equation:mart}
M_s=\esssup_{Q'\in\mathcal{P}(s,P)}E^{Q'}(M_t|\F_s), \qquad P-a.s.
\end{equation}

\noindent
This shows that a G-martingale $M$ can be seen as a multiple prior martingale which is a supermartingale under each $P\in\Pro$. We next give another characterization of $G$-martingales via the following representation theorem.
\begin{teo}[Theorem 2.2 of~\cite{nonlinear}]
\label{decopeng}
Let $H\in L_{ip}(\Omega_T)$, then for every $0\leq t\leq T$ we have 
\begin{equation}
\label{decopengform}
\condesp[\F_t]{H}= \esp{H}+\int_0^t\theta_sdB_s+\int_0^t\eta_sd\langle B\rangle_s-2\int_0^TG(\eta_s)ds,
\end{equation}
where $(\theta_t)\ut\in M_G^2(0,T)$ and $(\eta_t)\ut\in M^1_G(0,T)$.
\end{teo}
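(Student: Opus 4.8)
\emph{Proof idea.} The plan is to first establish~\eqref{decopengform} when $H$ is a smooth function of a single increment, where it is an immediate consequence of the $G$-It\^o formula applied to the solution of the nonlinear heat equation that characterises the $G$-expectation; then to concatenate the representation along the grid $0\le t_1\le\cdots\le t_n\le T$ on which $H$ is built; and finally to drop the smoothness assumption by mollification together with a stability argument.

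\textbf{Smooth one-step case.} For $H=\varphi(B_T)$ with $\varphi\in C_b^\infty(\mathbb R)$ one has $\condesp[\F_t]{\varphi(B_T)}=u(t,B_t)$ quasi surely, where $u\in C^{1,2}$ solves the $G$-heat equation $\partial_t u+G(\partial^2_{xx}u)=0$ on $[0,T)\times\mathbb R$ with terminal datum $u(T,\cdot)=\varphi$ (see~\cite{Peng:dynamicrisk}). The $G$-It\^o formula gives
\begin{equation*}
du(s,B_s)=\partial_t u(s,B_s)\,ds+\partial_x u(s,B_s)\,dB_s+\tfrac12\,\partial^2_{xx}u(s,B_s)\,d\brac_s,
\end{equation*}
and, putting $\theta_s:=\partial_x u(s,B_s)$ and $\eta_s:=\tfrac12\,\partial^2_{xx}u(s,B_s)$, the equation together with the positive $1$-homogeneity of $G$ (whence $\partial_t u(s,B_s)=-G(\partial^2_{xx}u(s,B_s))=-2G(\eta_s)$) turns the above into the differential form of~\eqref{decopengform}; integrating on $[0,t]$ and using $u(0,0)=\esp{\varphi(B_T)}$ settles this case. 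Note that $\int_0^\cdot\theta_s\,dB_s$ is a symmetric $G$-martingale while $\int_0^\cdot\eta_s\,d\brac_s-2\int_0^\cdot G(\eta_s)\,ds$ is non-increasing, in agreement with~\eqref{equation:mart}.

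\textbf{Concatenation and integrability.} For $H=\varphi(B_{t_1},\dots,B_{t_n})$ with $\varphi$ smooth I would use backward induction on $n$: on $[t_{n-1},t_n]$ the coordinates $B_{t_1},\dots,B_{t_{n-1}}$ act as frozen parameters, so the previous step yields the representation there and shows $\condesp[\F_{t_{n-1}}]{H}=\tilde\varphi(B_{t_1},\dots,B_{t_{n-1}})$ with $\tilde\varphi$ again smooth and with derivatives of at most polynomial growth, by the interior regularity of the $G$-heat equation; the induction hypothesis applied to $\tilde\varphi$ on $[0,t_{n-1}]$, together with the matching at $t_{n-1}$ (continuity of $s\mapsto\condesp[\F_s]{H}$), completes the step. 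On each subinterval $\theta$ and $\eta$ are compositions of $B$ with functions of polynomial growth, hence lie in $M_G^2(0,T)$, resp.\ $M_G^1(0,T)$, and so do the concatenated processes.

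\textbf{Removing smoothness, and the main difficulty.} A general $H\in L_{ip}(\Omega_T)$ is $\varphi(B_{t_1},\dots,B_{t_n})$ with $\varphi\in C_{l,Lip}(\mathbb R^n)$; mollifying, $\varphi^m\in C_b^\infty(\mathbb R^n)$ converges to $\varphi$ locally uniformly with the same local Lipschitz behaviour, so $H^m\to H$ in $L_G^p(\F_T)$ for every $p\ge1$. The previous steps give the decomposition for each $H^m$ with coefficients $(\theta^m,\eta^m)$, and one passes to the limit using the $L_G^p$-contractivity of $\condesp[\F_t]{\cdot}$, the continuity of the stochastic integral on $M_G^2(0,T)$, a priori bounds making $(\theta^m)_m$ Cauchy in $M_G^2(0,T)$, and the structure theory of $G$-martingales to identify the limiting non-increasing part as $\int_0^\cdot\eta_s\,d\brac_s-2\int_0^\cdot G(\eta_s)\,ds$ with $\eta\in M_G^1(0,T)$. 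I expect this quantitative step to be the crux: since $B$ and $\brac$ are not orthogonal under $E_G$, the naive $L^2$-expansion is unavailable and the uniform estimates rely on the fine properties of $G$-martingales; and in the genuinely degenerate regime $\underline{\sigma}=0$ the $G$-heat equation loses uniform parabolicity, so the $C^{1,2}$-regularity underlying the It\^o step must be recovered through a vanishing-viscosity regularisation or the quasi-sure / $2$BSDE machinery of~\cite{soner:representation}. This is the technical heart of the cited Theorem~2.2 of~\cite{nonlinear}.
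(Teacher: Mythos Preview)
The paper does not supply its own proof of this statement: the theorem is quoted as Theorem~2.2 of~\cite{nonlinear} and used as an external input throughout, so there is nothing in the present paper to compare your argument against. That said, your outline---solve the $G$-heat equation for a smooth cylinder function, apply the $G$-It\^o formula to read off $\theta=\partial_x u$ and $\eta=\tfrac12\partial^2_{xx}u$, concatenate by backward induction along the partition $0\le t_1\le\cdots\le t_n\le T$, and close under the $L_G^p$-norm via mollification---is precisely the strategy of Peng's original proof in~\cite{nonlinear}, so your sketch is faithful to the source the paper invokes.

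Two minor remarks. First, the upper limit $T$ in the last integral of~\eqref{decopengform} is a typographical slip in the paper (for the right-hand side to be $\F_t$-measurable it must be $\int_0^t$), and your argument naturally produces $\int_0^t$; this is consistent with how the paper later uses the decomposition, e.g.\ in~\eqref{equation:k} and~\eqref{star}. Second, your worry about the degenerate case $\underline\sigma=0$ is well placed in general, but the paper works under the standing assumption that $\Theta=[\underline\sigma,\overline\sigma]$ with $\underline\sigma>0$ (implicit in the use of the interior $C^{1,2}$-regularity and in the Feynman--Kac arguments of Section~4), so the vanishing-viscosity detour is not needed here.
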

\noindent In particular, the nonsymmetric part
\begin{equation}
\label{equation:k}
-K_t:=\int_0^t\eta_sd\langle B\rangle_s-\int_0^t2G(\eta_s)ds,
\end{equation}
$t\in[0,T]$, is a $G$-martingale that is continuous and non-increasing with quadratic variation equal to zero. A similar decomposition can be obtained for all $G$-martingales in $L_G^\beta(\F_T)$, with $\beta>1$.
\begin{teo}[Theorem 4.5 of~\cite{song}]
\label{teo:decomposition}
Let $\beta>1$ and $H\in L_G^\beta(\F_T)$. Then the $G$-martingale $M$ with $M_t:=E_G(H|\mathcal{F}_t)$, $t\in[0,T]$, has the following representation
\[
M_t= X_0+\int_0^t \theta_s dB_s - K_t,
\]
where $K$ is a continuous, increasing process with $K_0=0$, $K_T\in L_G^\alpha(\F_T)$, $(\theta_t)\ut\in M_G^\alpha(0,T)$, $\forall \alpha\in[1,\beta)$, and $-K$ is a $G$-martingale.
\end{teo}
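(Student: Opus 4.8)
The plan is to reduce to Lipschitz claims --- where Peng's decomposition (Theorem~\ref{decopeng}) supplies everything --- and then reach a general $H\in L_G^\beta(\F_T)$ by density, controlling the two pieces $\theta$ and $K$ through uniform a priori $M_G^\alpha$/$L_G^\alpha$ estimates along the approximating sequence. For $H\in L_{ip}(\Omega_T)$, Theorem~\ref{decopeng} gives $M_t=X_0+\int_0^t\theta_s\,dB_s-K_t$ with $X_0=E_G(H)$, $\theta\in M_G^2(0,T)$, and $K_t=-\int_0^t\eta_s\,d\langle B\rangle_s+\int_0^t2G(\eta_s)\,ds$ continuous, non-decreasing, $K_0=0$, $-K$ a $G$-martingale with zero quadratic variation; moreover $K_T=X_0+\int_0^T\theta_s\,dB_s-H\in L_G^\beta(\F_T)$, so in this case all assertions already hold with $\alpha=\beta$ and the real content is the stability of the decomposition.

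The core step is the a priori estimate: for every $\alpha\in[1,\beta)$ there is $C=C(\alpha,\beta,T,\underline{\sigma},\overline{\sigma})$ such that, for $H\in L_{ip}(\Omega_T)$,
\[
\Big\|\left(\int_0^T|\theta_s|^2\,ds\right)^{1/2}\Big\|_\alpha+\|K_T\|_\alpha\le C\,\|H\|_\beta .
\]
First, the $G$-Doob maximal inequality gives $\|\sup_{t\le T}|M_t|\|_\beta\le C_\beta\|H\|_\beta$. Next one bounds $\|K_T\|_\alpha$: by the multiple-prior description~\eqref{equation:mart}, under every $P\in\mathcal{P}$ the process $M$ is a supermartingale whose Doob--Meyer increasing part is exactly $K$, the same $K$ for all $P$; a Garsia--Neveu/Lenglart-type bound for this compensator, made uniform over $\mathcal{P}$ through the $G$-evaluation operator of~\cite{song}, yields $\|K_T\|_\alpha\le C\,\|\sup_{t\le T}|M_t|\|_\beta$ --- and it is here that the exponent has to drop from $\beta$ to any strictly smaller $\alpha$. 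Finally, $\int_0^t\theta_s\,dB_s=M_t-X_0+K_t$ produces an $L_G^\alpha$-bound on $\sup_{t\le T}\big|\int_0^t\theta_s\,dB_s\big|$, which the $G$-Burkholder--Davis--Gundy inequality converts into the asserted $M_G^\alpha$-bound on $\theta$.

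Now let $H\in L_G^\beta(\F_T)$ and pick $H^n\in L_{ip}(\Omega_T)$ with $\|H^n-H\|_\beta\to0$, which we may take Cauchy; let $M^n_t=X^n_0+\int_0^t\theta^n_s\,dB_s-K^n_t$ be the corresponding decompositions. From $|M^n_t-M^m_t|=|E_G(H^n|\F_t)-E_G(H^m|\F_t)|\le E_G(|H^n-H^m|\,|\,\F_t)$ and $G$-Doob we get $\|\sup_{t\le T}|M^n_t-M^m_t|\|_\beta\le C_\beta\|H^n-H^m\|_\beta\to0$ and $|X^n_0-X^m_0|\le\|H^n-H^m\|_\beta$. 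Subtracting decompositions, $M^n_t-M^m_t=(X^n_0-X^m_0)+\int_0^t(\theta^n_s-\theta^m_s)\,dB_s-(K^n_t-K^m_t)$, where $K^n-K^m$ is still the common continuous finite-variation part of $M^n-M^m$ under each $P\in\mathcal{P}$; rerunning the estimate of the previous paragraph on this identity (it used only that structural feature, not monotonicity) gives
\[
\Big\|\left(\int_0^T|\theta^n_s-\theta^m_s|^2\,ds\right)^{1/2}\Big\|_\alpha+\|K^n_T-K^m_T\|_\alpha\le C\,\|H^n-H^m\|_\beta\longrightarrow0
\]
for every $\alpha\in[1,\beta)$. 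Hence $\theta^n$ is Cauchy in $M_G^\alpha(0,T)$ and $K^n_T$ in $L_G^\alpha(\F_T)$. Put $\theta:=\lim_n\theta^n$, $M_t:=E_G(H|\F_t)=\lim_nM^n_t$ (uniformly in $t$, in $L_G^\beta$), $X_0:=E_G(H)$ and $K_t:=X_0+\int_0^t\theta_s\,dB_s-M_t$; then $K^n_t-K_t=(X^n_0-X_0)+\int_0^t(\theta^n_s-\theta_s)\,dB_s-(M^n_t-M_t)\to0$ uniformly in $t$, in $L_G^\alpha$, by $G$-Doob and $G$-BDG. Therefore $K$ is continuous, non-decreasing, $K_0=0$, $-K$ is a $G$-martingale (an $L_G^1$-limit of the $G$-martingales $-K^n$), $K_T=\lim_nK^n_T\in L_G^\alpha(\F_T)$ and $\theta\in M_G^\alpha(0,T)$ for every $\alpha\in[1,\beta)$, which is exactly the claimed representation.

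The genuine obstacle is the $L_G^\alpha$-control of the increasing part $K$. In the classical single-prior case this is immediate, either from $L^p$-boundedness of the Doob--Meyer compensator or from the Hilbert-space orthogonality of $\int\theta\,dB$ and $K$ in the Kunita--Watanabe picture; under the $G$-expectation the sublinearity destroys both the linearity of the decomposition and that orthogonality, so $K$ must be bounded uniformly over the entire family $\mathcal{P}$ of priors. This is precisely what the maximal inequalities for the $G$-evaluation operator developed in~\cite{song} are designed for, and it is the reason one cannot keep $\alpha=\beta$ but is forced to allow any $\alpha<\beta$.
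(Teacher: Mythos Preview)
The paper does not prove this theorem at all: it is quoted verbatim as Theorem~4.5 of Song~\cite{song} and used as a black box. There is therefore no ``paper's own proof'' against which to compare your argument.

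That said, your sketch is a faithful outline of how Song's proof actually goes: start from the explicit decomposition~\eqref{decopengform} on $L_{ip}(\Omega_T)$, prove the a priori estimate
\[
\Big\|\Big(\int_0^T|\theta_s|^2\,ds\Big)^{1/2}\Big\|_\alpha+\|K_T\|_\alpha\le C\,\|H\|_\beta
\]
for every $\alpha<\beta$, and pass to the limit along a Cauchy sequence $H^n\to H$ in $L_G^\beta$. You have also correctly located the genuine difficulty, namely the uniform-in-$\mathcal{P}$ control of the increasing part $K$, and correctly pointed to Song's maximal inequalities for the $G$-evaluation as the tool that delivers it (and forces the drop from $\beta$ to $\alpha<\beta$). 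One small caveat: in the Cauchy step you apply the a priori bound to $M^n-M^m$, whose finite-variation part $K^n-K^m$ is no longer monotone; Song handles this by bounding $K^n_T-K^m_T$ through the separate controls on the martingale part and on $M^n-M^m$ itself rather than by rerunning the Garsia--Neveu argument verbatim, so your parenthetical ``it used only that structural feature, not monotonicity'' is doing real work and would need to be justified if you were writing this out in full.
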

\noindent It then easily follows as a corollary that a $G$-martingale is symmetric if and only if the process $K$ is equal to zero, thus every symmetric $G$-martingale can be represented as a stochastic integral in the $G$-Brownian motion.

Finally we provide some insights on how the representation of the $G$-martingale $(E_G(H|\mathcal{F}_t))\ut$ is linked to the one of $(E_G(-H|\mathcal{F}_t))\ut$. We focus on the particular class of random variables for which the process $\eta$ appearing in~\eqref{equation:k} is stepwise constant. To ease the notation we explicitly prove the case in which 
\[
\eta_s=\mathbb{I}_{(t,T]}(s)\bar{\eta},
\]
where $0<t<T$, $s\in[0,T]$ and $\bar{\eta}\in L_{ip}(\Omega_t)$, but the generalization to $n$ steps is straightforward. 
\begin{lemma}
\label{minus}
Let 
\[
H=\esp{H}+\int_0^T\theta_s dB_s +\bar{\eta}(\brac_T-\brac_t)-2G(\bar{\eta})(T-t),
\]
where $(\theta_s)_{s\in[0,T]} \in M_G^2(0,T)$, and $\bar{\eta}\in L_{ip}(\F_t)$ is such that
\[
|\bar{\eta}|=\esp{|\bar{\eta}|}+\int_0^t\mu_s dB_s+\int_0^t\xi_s d\brac_s-2\int_0^t G(\xi_s)ds,
\]
for some processes $(\mu_s)_{s\in[0,t]}\in M_G^2(0,t)$ and $(\xi_s)_{s\in[0,t]}\in M_G^1(0,t)$. Then the decomposition of $-H$ is given by
\[
-H=\esp{-H}+\int_0^T\bar{\mu}_s dB_s +\int_0^T \bar{\xi_s}d\brac_s-2\int_0^TG(\bar{\xi}_s)ds,
\]
where 
\[
\bar{\mu}_s=
\begin{cases}
\mu_s(\sigmau-\sigmad)(T-t)-\theta_s,   & \text{if $s\in [0,t]$},\\
-\theta_s, & \text{if $s\in(t,T]$},
\end{cases}
\]
and
\[
\bar{\xi}_s=
\begin{cases}
\xi_s(\sigmau-\sigmad)(T-t),  & \text{if $s\in [0,t]$}, \\
-\bar{\eta}, & \text{if $s\in(t,T]$}.
\end{cases}
\] 
\end{lemma}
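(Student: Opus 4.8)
The plan is to compute $E_G(-H\mid\mathcal{F}_s)$ directly, exploiting the decomposition of $H$ together with the decomposition of $|\bar\eta|$, and then to read off $\bar\mu$ and $\bar\xi$ by matching with the general form guaranteed by Theorem~\ref{decopeng}. The starting observation is that the symmetric part $\int_0^T\theta_s\,dB_s$ simply changes sign, so the whole difficulty is concentrated in the nonsymmetric term $\bar\eta(\brac_T-\brac_t)-2G(\bar\eta)(T-t)$, whose negative is $-\bar\eta(\brac_T-\brac_t)+2G(\bar\eta)(T-t)$. Since $-K$ in~\eqref{equation:k} is a $G$-martingale and by Proposition~\ref{prop:useful} the symmetric summand $\int_0^T\theta_s\,dB_s$ can be pulled out of $E_G$, on $(t,T]$ the computation is immediate: for $s\in(t,T]$ one has $\bar\eta\in L_{ip}(\Omega_t)\subseteq L_{ip}(\Omega_s)$, hence $-\bar\eta(\brac_T-\brac_t)+2G(\bar\eta)(T-t)$ is, conditionally on $\mathcal{F}_s$, already in the canonical decomposed form with diffusion coefficient $0$ and $\brac$-coefficient $-\bar\eta$, plus the compensator $-2G(-\bar\eta)(T-s)$ wait—one must be careful: the correct normalization is $\int_s^T(-\bar\eta)\,d\brac_u-2\int_s^TG(-\bar\eta)\,du$, and using $2G(-\bar\eta)=(\sigmau-\sigmad)\bar\eta^{-}\cdot(\dots)$, more usefully $2G(\bar\eta)+2G(-\bar\eta)=(\sigmau-\sigmad)|\bar\eta|$, one sees where the extra term $|\bar\eta|(\sigmau-\sigmad)(T-t)$ comes from. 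This identity $2G(x)+2G(-x)=(\sigmau-\sigmad)|x|$ is the arithmetic heart of the argument.

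Concretely, I would first write
\[
-H=\esp{-H}-\int_0^T\theta_s\,dB_s-\bar\eta(\brac_T-\brac_t)+2G(\bar\eta)(T-t),
\]
using $\esp{-H}=-\esp{H}+\bigl(2G(\bar\eta)+2G(-\bar\eta)\bigr)(T-t)$? — no: rather one keeps $\esp{-H}$ abstract and verifies at the end that the candidate decomposition has the correct constant term by taking $s=0$. Then for $s\in(t,T]$, add and subtract $2G(-\bar\eta)(T-t)$ to rewrite the nonsymmetric part as a genuine $G$-martingale increment $\int_s^T(-\bar\eta)\,d\brac_u-2\int_s^TG(-\bar\eta)\,du$ plus an $\mathcal{F}_t$-measurable remainder of the form $\bigl(2G(\bar\eta)+2G(-\bar\eta)\bigr)(T-t)=(\sigmau-\sigmad)|\bar\eta|(T-t)$, which must itself be decomposed on $[0,t]$. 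For $s\in[0,t]$, plug in the given decomposition of $|\bar\eta|$: the term $(\sigmau-\sigmad)(T-t)|\bar\eta|$ contributes $(\sigmau-\sigmad)(T-t)\bigl(\int_0^t\mu_u\,dB_u+\int_0^t\xi_u\,d\brac_u-2\int_0^tG(\xi_u)\,du\bigr)$; collecting the $dB$-coefficients gives $\mu_s(\sigmau-\sigmad)(T-t)-\theta_s$ and the $d\brac$-coefficients give $\xi_s(\sigmau-\sigmad)(T-t)$, exactly as claimed, provided one checks that $(\sigmau-\sigmad)(T-t)\cdot 2G(\xi_s)=2G\bigl(\bar\xi_s\bigr)$ on $[0,t]$, which holds by positive homogeneity of $G$ since $\sigmau-\sigmad\ge 0$ and $T-t>0$.

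The remaining routine checks are: that $\bar\mu\in M_G^2(0,T)$ and $\bar\xi\in M_G^1(0,T)$ (immediate from $\theta\in M_G^2$, $\mu\in M_G^2$, $\xi\in M_G^1$, $\bar\eta\in L_{ip}(\Omega_t)\subseteq M_G^1$, and boundedness of the deterministic factor $(\sigmau-\sigmad)(T-t)$); that the two pieces match continuously at $s=t$; and that $-K^{-H}_t:=\int_0^t\bar\xi_s\,d\brac_s-2\int_0^tG(\bar\xi_s)\,ds$ is indeed non-increasing, which follows because on $[0,t]$ it is $(\sigmau-\sigmad)(T-t)$ times the non-increasing process $-K^{|\bar\eta|}$, and on $(t,T]$ its increment is $-\bar\eta(\brac_T-\brac_t)-2G(-\bar\eta)(T-t)\le 0$ by definition of $G$. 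The main obstacle — really the only conceptual point — is handling the interaction between the two nested decompositions: the nonsymmetric part of $H$ lives on $(t,T]$ but its compensator mismatch under the sign flip feeds back an $\mathcal{F}_t$-measurable random variable $|\bar\eta|$ that must be re-expanded using its own decomposition on $[0,t]$, and one must track the coefficients through this substitution without losing the $G$-martingale (non-increasing, zero quadratic variation) property. Verifying the constant term by evaluating at $s=0$ closes the argument, since by Theorem~\ref{decopeng} the decomposition is unique.
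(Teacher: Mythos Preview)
Your proposal is correct and follows essentially the same route as the paper: compute $E_G(-H\mid\mathcal{F}_s)$ separately on $[0,t]$ and $(t,T]$, pull out the symmetric stochastic integral, use the identity $2G(x)+2G(-x)=(\sigmau-\sigmad)|x|$ to convert the flipped nonsymmetric part into a genuine $G$-martingale increment plus the $\mathcal{F}_t$-measurable remainder $(\sigmau-\sigmad)(T-t)|\bar\eta|$, and then substitute the given decomposition of $|\bar\eta|$ on $[0,t]$. The paper carries out the $(t,T]$ step by splitting $\bar\eta=\bar\eta^{+}-\bar\eta^{-}$ and computing $\condesp[\F_s]{-\bar\eta\,\brac_T}$ directly, whereas you add and subtract $2G(-\bar\eta)(T-t)$ to recognize the canonical form immediately; these are two presentations of the same algebra, and your additional remarks on integrability, matching at $s=t$, and the non-increasing property are checks the paper leaves implicit.
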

\begin{proof}
For $s<t$ we have by the properties of $\langle B\rangle$ and of the conditional $G$-expectation that 
\[
\begin{split}
&\condesp[\F_s]{-H}\\
=&\condesp[\F_s]{-\esp{H}-\int_0^T\theta_u dB_u -\bar{\eta}(\brac_T-\brac_t)+2G(\bar{\eta})(T-t)}\\
=&-\esp{H}-\int_0^s\theta_u dB_u +\condesp[\F_s]{-\bar{\eta}(\brac_T-\brac_t)+2G(\bar{\eta})(T-t)}\\
=&-\esp{H}-\int_0^s\theta_u dB_u +\\
&\quad+\condesp[\F_s]{\condesp[\F_t]{-\bar{\eta}(\brac_T-\brac_t)+2G(\bar{\eta})(T-t)}}\\
=&-\esp{H}-\int_0^s\theta_u dB_u +(\sigmau-\sigmad)(T-t)\condesp[\F_s]{|\bar{\eta}|}\\
=&-\esp{H}+(\sigmau-\sigmad)(T-t)\esp{|\bar{\eta}|}+\int_0^s\left(\mu_u(\sigmau-\sigmad)(T-t)-\theta_u\right) dB_u\\
&\quad + (\sigmau-\sigmad)(T-t)\int_0^s\xi_u d\brac_u-2\int_0^sG(\xi_u(\sigmau-\sigmad)(T-t))du\\
=&\esp{-H}+\int_0^s\left(\mu_u(\sigmau-\sigmad)(T-t)-\theta_u\right) dB_u+\\
&\quad + (\sigmau-\sigmad)(T-t)\int_0^s\xi_u d\brac_u-2\int_0^sG(\xi_u(\sigmau-\sigmad)(T-t))du,
\end{split}
\]
where in the last equality we used the fact that 
\[
\esp{H}+\esp{-H}=\esp{K_T}=\esp{-\bar{\eta}(\brac_T-\brac_t)+2G(\bar{\eta})(T-t)}. 
\]
On the other hand, when $s>t$
\[
\begin{split}
&\condesp[\F_s]{-\bar{\eta}(\brac_T-\brac_t)+2G(\bar{\eta})(T-t)}\\
=&2G(\bar{\eta})(T-t)+\bar{\eta}\brac_t +\condesp[\F_s]{-\bar{\eta}\brac_T}\\
\end{split}
\]
\[
\begin{split}
=&2G(\bar{\eta})(T-t)+\bar{\eta}\brac_t +\bar{\eta}^+\left(\condesp[\F_s]{-\brac_T+\sigmad T}-\sigmad T\right)+\\
&\qquad\qquad\qquad\qquad +\bar{\eta}^-\left(\condesp[\F_s]{\brac_T-\sigmau T}+\sigmau T\right)\\
=&2G(\bar{\eta})(T-t)+\bar{\eta}\brac_t +\bar{\eta}^+(-\brac_s+\sigmad s -\sigmad T)+\bar{\eta}^-(\brac_s-\sigmau s+\sigmau T)\\
=&2G(\bar{\eta})(T-t)+\bar{\eta}\brac_t-\bar{\eta}\brac_s+2G(-\bar{\eta})(T-s)\\
=&2G(\bar{\eta})(T-t)-\bar{\eta}(\brac_s-\brac_t)+2G(-\bar{\eta})(T-t)-2G(-\bar{\eta})(s-t)\\
=&|\bar{\eta}|(\sigmau-\sigmad)(T-t)-\bar{\eta}(\brac_s-\brac_t)-2G(-\bar{\eta})(s-t),
\end{split}
\]
where we used the fact that 
\[
2G(x)+2G(-x)=|x|(\sigmau-\sigmad)\qquad\forall\; x\in\R.
\]
This completes the proof.
\end{proof}

\subsection{$G$-Jensen's Inequality}
Denote now with $\mathbb{S}(d)$ the space of symmetric matrices of dimension $d$. In the framework of $G$-expectation, the usual Jensen's inequality in general does not hold. Nevertheless an analogue to this result can be proved also in this setting, introducing the notion of $G$-convexity. 
\begin{defn} 
A $C^2$-function $h : \mathbb{R}\mapsto\mathbb{R}$ is called $G$-convex if the following condition holds for each $(y, z, A) \in \mathbb{R}^3$:
\[
G(h'(y)A + h''(y)zz^\top ) - h''(y)G(A) \geq 0,
\]
where $h'$ and $h''$ denote the first and the second derivatives of $h$, respectively.
\end{defn}
\noindent
Using this definition, Proposition 5.4.6 of~\cite{Peng:dynamicrisk} shows the following result.
\begin{prop}
\label{jensen}
The following two conditions are equivalent:
\begin{itemize}
\item The function $h$ is $G$-convex.
\item The following Jensen inequality holds: 
\[
\condesp{h(X)}\geq h(\condesp{X}), \quad t\in [0,T],
\]
for each $X\in L_G^1(\F_T)$ such that $h(X)\in L_G^1(\F_T)$.
\end{itemize}
\end{prop}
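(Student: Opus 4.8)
The plan is to prove the equivalence by establishing each implication separately, exploiting the $G$-martingale representation Theorem~\ref{decopeng} and a Taylor-type expansion argument.

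First I would prove that $G$-convexity of $h$ implies the Jensen inequality. The idea is to approximate: it suffices to prove the claim for $X\in L_{ip}(\F_T)$ and then pass to the limit using the density of $L_{ip}(\F_T)$ in $L_G^1(\F_T)$ together with the continuity of the conditional $G$-expectation (one must be slightly careful that $h(X)$ stays in $L_G^1$, which is exactly the standing hypothesis). For $X\in L_{ip}(\F_T)$, consider the $G$-martingale $Y_s:=\condesp[\F_s]{X}$, which by Theorem~\ref{decopeng} has the form $Y_s=\esp{X}+\int_0^s\theta_u\,dB_u+\int_0^s\eta_u\,d\brac_u-2\int_0^s G(\eta_u)\,du$. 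Apply the $G$-It\^o formula to $h(Y_s)$ on $[t,T]$: the $dB$-term integrates to something with zero $G$-expectation conditionally on $\F_t$ (it is a symmetric $G$-martingale increment), and the remaining drift terms combine into an integral whose integrand, evaluated pathwise, has the sign controlled by the $G$-convexity condition $G(h'(Y_u)\theta_u^2\cdot(\text{something}) + h''(Y_u)\theta_u\theta_u^\top)-h''(Y_u)G(\theta_u\theta_u^\top)\geq 0$ after accounting for the $d\brac$ versus $2G\,ds$ compensation. More precisely, the increasing/nonpositive structure of $-K$ together with the $G$-convexity inequality forces $\condesp{h(Y_T)}\ge h(Y_t)$, i.e.\ $\condesp{h(X)}\ge h(\condesp{X})$.

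Second, for the converse, I would assume the Jensen inequality holds for all admissible $X$ and derive the pointwise $G$-convexity condition at an arbitrary $(y,z,A)$. The standard trick is to test the inequality on a well-chosen family of random variables of the form $X^\varepsilon = y + \varepsilon\,(\text{linear in } B_{t+h}-B_t) + \tfrac{\varepsilon^2}{2}\cdot(\text{combination producing a prescribed } d\brac\text{-coefficient})$ over a small time interval $[t,t+h]$, arranged so that $\condesp[\F_t]{X^\varepsilon}$ is computable and $\condesp[\F_t]{h(X^\varepsilon)}$ admits a second-order expansion in $\varepsilon$ and $h$ whose leading correction term is precisely $h\,G(h'(y)A+h''(y)zz^\top)-h''(y)G(A)$ times a positive factor. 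Dividing by that positive factor and letting the perturbation shrink yields the required inequality; since $(y,z,A)$ was arbitrary, $h$ is $G$-convex. This direction essentially reverses the It\^o computation of the first part.

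The main obstacle I anticipate is the first direction's It\^o expansion: unlike the classical case, $B$ and $\brac$ are not orthogonal and the It\^o formula for $G$-Brownian motion produces both a $\tfrac12 h''(Y_u)\,d\brac_u$ (from $(dB_u)^2=d\brac_u$) and the genuine $d\brac_u$ term coming from $\eta_u\,d\brac_u - 2G(\eta_u)\,du$, and these must be bookkept carefully so that the residual drift is seen to be nonnegative exactly under the stated $G$-convexity condition. Handling the approximation step rigorously — ensuring $h(X^n)\to h(X)$ in $L_G^1$ when $X^n\to X$ in $L_G^1$, which needs a local-Lipschitz/uniform-integrability control on $h$ — is a secondary but nontrivial technical point. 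Since this is Proposition 5.4.6 of~\cite{Peng:dynamicrisk}, I would in the paper simply cite it; the sketch above indicates why it is true.
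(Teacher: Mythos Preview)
The paper does not give its own proof of this proposition: it simply records it as Proposition 5.4.6 of~\cite{Peng:dynamicrisk} and moves on. Your proposal ends at exactly the same place (``I would in the paper simply cite it''), so you are in agreement with the paper's treatment.

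As a side remark on your sketch: the heuristic via the $G$-It\^o expansion of $h(Y_s)$ along the $G$-martingale $Y_s=\condesp[\F_s]{X}$ is indeed one correct route, but your display of the controlling inequality is garbled --- the roles of $z$ and $A$ should come from the $dB$-coefficient $\theta_u$ and the $d\brac$-coefficient (here the combination of $\eta_u$ and $\tfrac12 h''(Y_u)\theta_u^2$), not both from $\theta_u$, and the relevant condition reads $G(h'(y)A+h''(y)z^2)\ge h'(y)G(A)$ rather than with $h''(y)G(A)$ on the right. The original proof in~\cite{Peng:dynamicrisk} proceeds instead through the PDE characterization: one shows that if $u$ solves the nonlinear $G$-heat equation with terminal data $\varphi$, then $G$-convexity of $h$ makes $h\circ u$ a viscosity subsolution with terminal data $h\circ\varphi$, and the comparison principle yields the Jensen inequality. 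Either route works; since you are citing the result anyway, the imprecision in the sketch is harmless.
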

\noindent
As a particular case we show that the Jensen's inequality holds in the $G$-framework for $h(x)=x^2$, proving that this function is $G$-convex.
\begin{lemma}
\label{quadrato}
In the one dimensional case, the function $x \mapsto x^2$ is $G$-convex.
\end{lemma}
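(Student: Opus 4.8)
The plan is to verify directly the algebraic inequality that defines $G$-convexity. For $h(x)=x^2$ we have $h'(y)=2y$ and $h''(y)=2$, and in dimension one $zz^\top=z^2\ge 0$; substituting into the $G$-convexity inequality, the statement amounts to
\[
G\bigl(2yA+2z^2\bigr)-2y\,G(A)\ge 0\qquad\text{for every }(y,z,A)\in\mathbb{R}^3,
\]
equivalently, dividing by $2$ and using the positive homogeneity of $G$, to $G(yA+z^2)\ge y\,G(A)$ for all $y,A\in\mathbb{R}$ and all $z$.

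First I would dispose of the variable $z$. Since $z^2\ge 0$ and $G$ is monotone — both facts being immediate from the representation $G(a)=\tfrac12\sup_{\sigma\in\Theta}a\sigma^2$ — we get $G(yA+z^2)\ge G(yA)$, so it suffices to prove $G(yA)\ge y\,G(A)$ for all $y,A\in\mathbb{R}$. If $y\ge 0$ this is an equality, again by positive homogeneity. If $y<0$, applying positive homogeneity with the positive factor $-y$ gives $G(yA)=(-y)\,G(-A)$, hence
\[
G(yA)-y\,G(A)=(-y)\bigl(G(A)+G(-A)\bigr).
\]
By the identity $2G(x)+2G(-x)=|x|(\sigmau-\sigmad)$ already exploited in the proof of Lemma~\ref{minus}, and since $\sigmad\le\sigmau$, the bracket equals $\tfrac12|A|(\sigmau-\sigmad)\ge 0$; as $-y>0$, the whole expression is nonnegative. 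This closes both cases.

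Combining the two steps yields $G(yA+z^2)\ge y\,G(A)$ for all $(y,z,A)$, which is exactly the $G$-convexity of $x\mapsto x^2$; by Proposition~\ref{jensen} it follows in particular that the $G$-Jensen inequality $\condesp{X^2}\ge\bigl(\condesp{X}\bigr)^2$ holds for every $X$ with $X,X^2\in L_G^1(\F_T)$. I do not expect a genuine obstacle here beyond bookkeeping: the only point requiring care is the sign case $y<0$, where $G(yA)$ cannot be reduced by positive homogeneity directly and one must instead reflect $A$ and invoke $G(A)+G(-A)\ge 0$; the term $z^2$ is harmless, being nonnegative with $G$ monotone.
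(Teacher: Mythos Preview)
Your proof is correct. Both you and the paper reduce to the same inequality $G(2yA+2z^2)\ge 2y\,G(A)$, but the organisation differs. The paper expands $G$ via the explicit formula $G(a)=\tfrac12(a^+\sigmau-a^-\sigmad)$ and then runs a case analysis on the signs of $A$, $y$, and $yA+z^2$, checking each combination by hand. You instead exploit the structural properties of $G$: monotonicity lets you drop the $z^2$ term at the outset, positive homogeneity disposes of $y\ge 0$ in one line, and for $y<0$ you reduce to the reflection identity $G(A)+G(-A)=\tfrac12|A|(\sigmau-\sigmad)\ge 0$. Your route is a bit more conceptual and avoids the sub-cases on the sign of $yA+z^2$; the paper's is more hands-on but makes the dependence on $\sigmau,\sigmad$ explicit at every step. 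Either way the verification is elementary.
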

\begin{proof}
According to the definition we have to check if, for each $(y,z,A)\in\mathbb{R}^3$,
\[
G(2yA+2z^2)\geq 2yG(A),
\]
which is 
\begin{equation}
\label{jen}
(yA+z^2)^+\overline{\sigma}^2-(yA+z^2)^-\underline{\sigma}^2\geq y(A^+\overline{\sigma}^2-A^-\underline{\sigma}^2).
\end{equation}
This can be done by cases. When both $A$ and $y$ are greater than zero the condition is obvious. If $A$ is positive but $y$ is negative the only situation to study is when $yA+z^2<0$. In this case Condition~\eqref{jen} becomes
\[
\begin{split}
(yA+z^2)\underline{\sigma}^2&\geq yA\overline{\sigma}^2\\
yA(\underline{\sigma}^2-\overline{\sigma}^2)+z^2\underline{\sigma}^2&\geq0,
\end{split}
\]
which is always satisfied since $yA(\sigmad-\sigmau)>0$. The case in which $A$ is negative is analogue.
\end{proof}

\subsection{Some Estimates}
Motivated by the issues we incurred when dealing with mean-variance hedging, we show here an estimation for the value of $\esp{\int_0^T\theta_tdB_t\int_0^T\eta_td\langle B\rangle_t}$, for suitable processes $(\theta_t)\ut$ and $(\eta_t)\ut$.
\begin{prop}
\label{estimatees}
Let $(\theta_t)\ut$ and $(\eta_t)\ut$ be processes in $M_G^1(0,T)$ such that $(\eta_t\int_0^t\theta_sdB_s)\ut$ and $(\theta_t\int_0^t\eta_sd \langle B\rangle_s)\ut$ both belong to $M_G^1(0,T)$. Then it holds that
\[
\esp{\int_0^T\theta_tdB_t\int_0^T\eta_td\langle B\rangle_t}\leq \esp{\int_0^T2G(\eta_s\int_0^s\theta_u dB_u)ds}.
\]
\end{prop}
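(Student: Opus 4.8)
The plan is to reduce the statement, via a $G$-It\^o integration by parts, to an estimate involving only an integral against $\langle B\rangle$, and then to use the elementary comparison $\underline{\sigma}^2\,dt\leq d\langle B\rangle_t\leq\overline{\sigma}^2\,dt$, which holds quasi surely as an inequality between measures (equivalently, the density of $\langle B\rangle$ with respect to $dt$ takes values in $[\underline{\sigma}^2,\overline{\sigma}^2]$ q.s.). To begin, I would write $N_t:=\int_0^t\theta_s\,dB_s$ and $A_t:=\int_0^t\eta_s\,d\langle B\rangle_s$. Since $A$ has absolutely continuous paths its mutual variation with $N$ vanishes, so the product rule gives
\[
\Big(\int_0^T\theta_t\,dB_t\Big)\Big(\int_0^T\eta_t\,d\langle B\rangle_t\Big)=N_TA_T=\int_0^T N_t\,\eta_t\,d\langle B\rangle_t+\int_0^T A_t\,\theta_t\,dB_t .
\]

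Next I would apply $E_G$ to this identity. The hypothesis $\big(\theta_t\int_0^t\eta_s\,d\langle B\rangle_s\big)_{t\in[0,T]}=(\theta_tA_t)_{t\in[0,T]}\in M_G^1(0,T)$ guarantees that $\int_0^T A_t\theta_t\,dB_t$ is a symmetric $G$-martingale issued from $0$, so $\esp{\int_0^T A_t\theta_t\,dB_t}=-\esp{-\int_0^T A_t\theta_t\,dB_t}=0$; applying Proposition~\ref{prop:useful} with $Y=\int_0^T A_t\theta_t\,dB_t$ removes that term and leaves
\[
\esp{\int_0^T\theta_t\,dB_t\int_0^T\eta_t\,d\langle B\rangle_t}=\esp{\int_0^T\Big(\eta_t\int_0^t\theta_s\,dB_s\Big)d\langle B\rangle_t}.
\]
For the surviving term I would set $\psi_t:=\eta_t N_t=\eta_t\int_0^t\theta_s\,dB_s$, which lies in $M_G^1(0,T)$ by assumption; writing $\psi=\psi^+-\psi^-$ and invoking the bound on $d\langle B\rangle_t$ yields, quasi surely,
\[
\int_0^T\psi_t\,d\langle B\rangle_t\leq\overline{\sigma}^2\int_0^T\psi_t^+\,dt-\underline{\sigma}^2\int_0^T\psi_t^-\,dt=\int_0^T\big(\overline{\sigma}^2\psi_t^+-\underline{\sigma}^2\psi_t^-\big)\,dt=\int_0^T 2G(\psi_t)\,dt ,
\]
and by monotonicity of $E_G$ this becomes $\esp{\int_0^T\psi_t\,d\langle B\rangle_t}\leq\esp{\int_0^T 2G(\psi_t)\,dt}$, which together with the previous display is precisely the asserted inequality.

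The step that requires care is the second summand of the integration by parts formula: one must ensure that $\int_0^{\cdot}\theta_tA_t\,dB_t$ is genuinely a symmetric $G$-martingale, in particular that it has vanishing $G$-expectation — and this is exactly why the statement asks for $(\theta_tA_t)_t\in M_G^1(0,T)$ rather than merely $\theta,\eta\in M_G^1(0,T)$. Rigorously this, together with the It\^o product rule above, I would obtain by approximating $\theta A$ (and likewise $\eta N$) in the $M_G^1$-norm by simple processes, for which all the statements are immediate, and then passing to the limit, which is legitimate by the continuity of the stochastic integrals involved. The remaining ingredients — the pathwise bound on $d\langle B\rangle$ and the sub-additivity and monotonicity of $E_G$ — are routine.
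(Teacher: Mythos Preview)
Your argument is correct and follows essentially the same route as the paper: apply the $G$-It\^o product rule, drop the $dB$-integral in expectation via Proposition~\ref{prop:useful}, and then bound $\int_0^T\psi_t\,d\langle B\rangle_t$ by $\int_0^T 2G(\psi_t)\,dt$. The only cosmetic difference is in this last step: the paper adds and subtracts $\int_0^T 2G(\psi_t)\,dt$ and uses that $\int_0^T\psi_t\,d\langle B\rangle_t-\int_0^T 2G(\psi_t)\,dt$ is a nonpositive $G$-martingale with zero $G$-expectation together with sub-additivity, whereas you invoke the pathwise bound $\underline{\sigma}^2\,dt\leq d\langle B\rangle_t\leq\overline{\sigma}^2\,dt$ directly --- these are two phrasings of the same inequality.
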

\begin{proof}
By applying the It\^o formula for $G$-Brownian motion (see Section 5.4 in~\cite{Peng:Gexp}), we obtain
\[
\begin{split}
&\esp{\int_0^T\theta_tdB_t\int_0^T\eta_td\langle B\rangle_t}\\
=&\esp{\int_0^T \eta_s \left(\int_0^s \theta_u dB_u\right)d\langle B\rangle_s+\int_0^T\theta_s\left(\int_0^s\eta_u d\langle B\rangle_u\right)dB_s}\\
=&\esp{\int_0^T \eta_s \left(\int_0^s \theta_u dB_u\right)d\langle B\rangle_s}.
\end{split}
\]
The result is then achieved by noticing that 
\[
\begin{split}
&\esp{\int_0^T \eta_s \left(\int_0^s \theta_u dB_u\right)d\langle B\rangle_s}\\
=&E_G\Big[\int_0^T \eta_s \left(\int_0^s \theta_u dB_u\right)d\langle B\rangle_s+ \int_0^T2G(\eta_s\int_0^s\theta_u dB_u)ds+\\
&\qquad\qquad\qquad\qquad\qquad - \int_0^T2G(\eta_s\int_0^s\theta_u dB_u)ds\Big]\\
\leq &\esp{\int_0^T2G(\eta_s\int_0^s\theta_u dB_u)ds}.
\end{split}
\]
\end{proof}

\noindent
As a corollary, we apply the result of Proposition~\ref{estimatees} to provide an estimate for the value of $\esp{B_t\langle B\rangle_t}$. Thanks to the It\^o formula for $G$-Brownian motion we see that this problem is equivalent to the computation of $\esp{B_t^3}$. Since
\begin{equation}
\label{estima}
B_t^3=3\int_0^t B_s^2 dB_s + 3 \int_0^t B_s d\langle B\rangle_s
\end{equation}
and 
\begin{equation}
\label{estima2}
B_t\langle B\rangle_t=\int_0^tB_sd\langle B\rangle_s+\int_0^t\langle B\rangle_sdB_s,
\end{equation}
we have 
\[
\esp{B_t^3}=3\esp{\int_0^t B_s d\langle B\rangle_s}=3\esp{B_t\langle B\rangle_t}.
\]
The explicit computation of $\esp{B_t^{2n+1}}$, for $n\in\mathbb{N}$, has been studied extensively in~\cite{hum}, but still no closed form has been retrieved. Hence the following estimates may be of interest.

\begin{coro}
For each $t\in[0,T]$ it holds that
\begin{equation}
\label{cro}
\esp{B_t\langle B\rangle_t}\leq\esp{\int_0^t2G(B_s)ds}=\frac{\sigmau-\sigmad}{\sqrt{2\pi}}\overline{\sigma}\frac{2}{3} t^{3/2}.
\end{equation}
\end{coro}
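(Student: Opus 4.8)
We must prove an inequality and then evaluate its right‑hand side explicitly.

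\medskip
\noindent\emph{The inequality.} This is a direct instance of Proposition~\ref{estimatees}. The plan is to apply it on $[0,T]$ with the choice $\theta_s=\eta_s=\mathbb{I}_{[0,t]}(s)$, where $t$ is the fixed time of the statement. Then $\int_0^T\theta_s\,dB_s=B_t$ and $\int_0^T\eta_s\,d\brac_s=\brac_t$, so the left‑hand side of Proposition~\ref{estimatees} is exactly $\esp{B_t\brac_t}$, while on the right‑hand side $\eta_s\int_0^s\theta_u\,dB_u$ equals $B_s$ for $s\le t$ and vanishes for $s>t$, so $\int_0^T 2G\big(\eta_s\int_0^s\theta_u\,dB_u\big)\,ds=\int_0^t 2G(B_s)\,ds$. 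The integrability hypotheses are trivially met: $\theta,\eta$ are bounded simple processes, and $(\mathbb{I}_{[0,t]}(s)B_s)_{s\in[0,T]}$, $(\mathbb{I}_{[0,t]}(s)\brac_s)_{s\in[0,T]}$ are continuous, adapted and of finite $M_G^1$‑norm because $\esp{|B_s|}$ and $\esp{\brac_s}$ grow sublinearly in $s$. This yields $\esp{B_t\brac_t}\le\esp{\int_0^t 2G(B_s)\,ds}$.

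\medskip
\noindent\emph{Evaluating the right‑hand side.} First I would linearize $G$. Since $2G(x)=\sigmau x^+-\sigmad x^-$ and $x^{\pm}=\tfrac12(|x|\pm x)$,
\[
2G(B_s)=\frac{\sigmau-\sigmad}{2}\,|B_s|+\frac{\sigmau+\sigmad}{2}\,B_s .
\]
Integrating in $s$ and noting that $\int_0^t B_s\,ds=\int_0^t(t-s)\,dB_s$ (integration by parts, obtained from the $G$‑It\^o formula applied to $sB_s$) is the terminal value of a symmetric $G$‑martingale, hence satisfies $E_G(\cdot)=-E_G(-\cdot)=0$, Proposition~\ref{prop:useful} lets me discard the $B_s$‑term, so that
\[
\esp{\int_0^t 2G(B_s)\,ds}=\frac{\sigmau-\sigmad}{2}\,\esp{\int_0^t |B_s|\,ds}.
\]

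\medskip
\noindent It then remains to compute $\esp{\int_0^t |B_s|\,ds}$, which is where the only genuine work lies. For each $P\in\Pro$, Fubini gives $E^{P}\big[\int_0^t|B_s|\,ds\big]=\int_0^t E^{P}[|B_s|]\,ds\le\int_0^t\esp{|B_s|}\,ds$; evaluating against the constant‑volatility measure $P^{\overline\sigma}\in\Pro$, under which $B_s$ is $N(0,\overline\sigma^2 s)$‑distributed, produces the matching lower bound, so $\esp{\int_0^t|B_s|\,ds}=\int_0^t\esp{|B_s|}\,ds$. Since $x\mapsto|x|$ is convex, the worst‑case volatility for the $G$‑normal quantity $\esp{|B_s|}$ is $\overline\sigma$ (the map $s\mapsto\esp{|B_s|}$ solves the $G$‑heat equation, whose solution with convex initial datum coincides with that of the linear heat equation at rate $\overline\sigma^2/2$), hence $\esp{|B_s|}=E^{P^{\overline\sigma}}[|B_s|]=\sqrt{2\overline\sigma^2 s/\pi}=\tfrac{2\overline\sigma\sqrt s}{\sqrt{2\pi}}$. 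Inserting this and using $\int_0^t\sqrt s\,ds=\tfrac23 t^{3/2}$ gives
\[
\esp{\int_0^t 2G(B_s)\,ds}=\frac{\sigmau-\sigmad}{2}\cdot\frac{2\overline\sigma}{\sqrt{2\pi}}\cdot\frac23 t^{3/2}=\frac{\sigmau-\sigmad}{\sqrt{2\pi}}\,\overline\sigma\,\frac23 t^{3/2},
\]
which, combined with the first part, is the assertion. The main obstacle is precisely the interchange of $E_G$ with the time integral together with the identification of $\esp{|B_s|}$; both are dispatched by the single observation that $P^{\overline\sigma}$ is simultaneously optimal for every $s$, which holds because $|\cdot|$ is convex and $E^{P}[B_s]=0$ for all $P\in\Pro$.
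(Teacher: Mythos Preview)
Your proof is correct and, for the equality part, takes a genuinely different route from the paper. For the inequality both you and the paper simply invoke Proposition~\ref{estimatees}. For the explicit evaluation, however, the paper discretizes $(B_s)_{s\in[0,t]}$ by a step process $B^n$, shows $\sum_i 2G(B_{t_{i-1}})\Delta t_i\to\int_0^t 2G(B_s)\,ds$ in $L_G^1$, and then runs a backward induction with conditional $G$-expectations, using at each step that $2G(\cdot)$ is convex to conclude that the supremum is attained at $P^{\overline\sigma}$; only then is the integral computed under $P^{\overline\sigma}$. Your argument bypasses this approximation entirely via the algebraic identity $2G(x)=\tfrac{\sigmau-\sigmad}{2}|x|+\tfrac{\sigmau+\sigmad}{2}x$, removes the linear part through the symmetry of $\int_0^t(t-s)\,dB_s$ and Proposition~\ref{prop:useful}, and then identifies $P^{\overline\sigma}$ as simultaneously optimal for every $s$ by Fubini and the convexity of $|\cdot|$. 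Your route is shorter and more elementary for this particular functional; the paper's discretization-plus-backward-induction technique, on the other hand, applies verbatim to $\esp{\int_0^t f(B_s)\,ds}$ for any convex $f$ without needing a special decomposition, and illustrates a device reused later in the paper.
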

\begin{proof}
We only have to prove the equality in~\eqref{cro} and to this end we use an approximation argument. To this purpose, let $\{t_i\}_{i=0,\dots,n}$ be a partition of $[0,t]$ with $t_0=0$, $t_n=t$ and $t_i-t_{i-1}=\frac{t}{n}$ for each $i=1,\dots,n$. Then the process $(B_s^n)_{s\in[0,t]}\in M_G^{1,0}(0,T)$ defined as
\[
B^n_t:=\sum_{i=1}^nB_{t_{i-1}}\mathbb{I}_{[t_{i-1},t_i)}(t)
\]
converges in $M_G^1(0,t)$ to $(B_s)_{s\in[0,t]}$. In fact, by direct computation,
\begin{align}
\notag
\esp{|\int_0^t (B_s-B_s^n)ds|}&\leq \esp{\int_0^t |B_s-B_s^n|ds}\leq n\int_0^{t_1}\esp{|B_s|}ds\\
\label{14*}&=n\int_0^{t_1}\frac{\overline{\sigma}\sqrt{2s}}{\sqrt{\pi}}ds=\frac{\overline{\sigma}\sqrt{2}}{\sqrt{\pi}}\frac{2}{3}\left(\frac{t}{n}\right)^{\frac{3}{2}}n\xrightarrow[n\to\infty]{}0,
\end{align}
where we used a result from Example 19 in~\cite{Peng:Gexp} to argue that $\esp{|B_s|}=E[|N(0,\overline{\sigma}^2s)|]$, and the stationarity of the increments of the $G$-Brownian motion. Using this result we can prove that 
\[
\sum_{i=1}^{n} G(B_{t_{i-1}})(t_i-t_{i-1})\xrightarrow[n\to\infty]{L_G^1(0,t)}\int_0^tG(B_s)ds.
\]
In fact
\[
\begin{split}
&\esp{\left|\int_0^tG(B_s)ds-\sum_{i=1}^{n} G(B_{t_{i-1}})(t_i-t_{i-1})\right|}\\
&\esp{\left|\int_0^t((B_s^+\overline{\sigma}^2-B_s^-\underline{\sigma}^2)ds-\sum_{i=1}^{n} (B_{t_{i-1}}^+\overline{\sigma}^2-B_{t_{i-1}}^-\underline{\sigma}^2)(t_i-t_{i-1})\right|}\\
&= \esp{\left|\int_0^t\overline{\sigma}^2(B_s^+-(B_s^n)^+)ds-\int_0^t\underline{\sigma}^2(B_s^--(B_s^n)^-)ds\right|}
\\
&\leq\esp{\left|\int_0^t\overline{\sigma}^2(B_s^+-(B_s^n)^+)ds\right|}+\esp{\left|\int_0^t\underline{\sigma}^2(B_s^--(B_s^n)^-)ds\right|}\\
&\leq\esp{\int_0^t\left|\overline{\sigma}^2(B_s^+-(B_s^n)^+)\right|ds}+\esp{\int_0^t\left|\underline{\sigma}^2(B_s^--(B_s^n)^-)\right|ds},
\end{split}
\]
and the value of both expectations tends to zero as $n$ goes to infinity as $|X^+-Y^+|\leq|X-Y|$, $|X^--Y^-|\leq|X-Y|$ and because of~\eqref{14*}. We now evaluate 

\begin{align}
\notag&\quad\;\esp{\sum_{i=1}^n2G(B\imi)(t_i-t_{i-1})}\\
\label{pro33}&=\esp{\sum_{i=0}^{n-1}2G(B_{{t_i-1}})(t_i-t_{i-1})+\condesp[\F_{t_{n-2}}]{2G(B_{t_{n-1}})(t_n-t_{n-1})}}.
\end{align}
To compute the term inside the conditional expectation in~\eqref{pro33} note that 
\[
\condesp[\F_{t_{n-2}}]{2G(B_{t_{n-1}})(t_n-t_{n-1})}=f(B_{t_{n-2}}),
\]
where
\[
\begin{split}
f(x)&:=\condesp[\F_{t_{n-2}}]{2G(B_{t_{n-1}}-B_{t_{n-2}}+x)(t_n-t_{n-1})}\\
&=\esp{2G(B_{t_{n-1}}-B_{t_{n-2}}+x)(t_n-t_{n-1})}\\
&=E^{P^{\overline{\sigma}}}\left[2G(B_{t_{n-1}}-B_{t_{n-2}}+x)(t_n-t_{n-1})\right]
\end{split}
\]
since $2G(B_{t_{n-1}}-B_{t_{n-2}}+x)(t_n-t_{n-1})$ is a convex function of $B_{t_{n-1}}-B_{t_{n-2}}$. Proceeding by induction and letting $n\to\infty$ we get
\[
\begin{split}
&\esp{\int_0^t2G(B_s)ds}=E^{P^{\overline{\sigma}}}\left[\int_0^t2G(B_s)ds\right]\\
&\;=\int_0^t\left(\sigmau E^P\left[ \left(\int_0^s \overline{\sigma}dW_u\right)^+\right]-\sigmad E^P\left[ \left(\int_0^s \overline{\sigma}dW_u\right)^-\right]\right)ds\\
&\;=\int_0^t\Bigg(\sigmau\left(\overline{\sigma}\int_0^\infty x\frac{1}{\sqrt{2\pi}}\sqrt{s}e^{-\frac{x^2}{2}}dx\right)+\\
&\qquad\qquad\qquad\qquad+\sigmad \left(\overline{\sigma}\int_{-\infty}^0 x\frac{1}{\sqrt{2\pi}}\sqrt{s}e^{-\frac{x^2}{2}}dx\right)\Bigg)ds\\
&\;=\frac{\sigmau-\sigmad}{\sqrt{2\pi}}\overline{\sigma}\int_0^t\sqrt{s}ds=\frac{\sigmau-\sigmad}{\sqrt{2\pi}}\overline{\sigma}\frac{2}{3}t^{3/2}.
\end{split}
\]
Finally note that in the same way we can compute
\[
\begin{split}
\esp{-B_t\langle B \rangle_t}&=\esp{-\int_0^tB_sd\langle B\rangle_s}\leq\esp{-\int_0^tG(-B_s)ds}\\
&=E^{P^{\overline{\sigma}}}\left[\int_0^t2G(-B_s)ds\right]=\frac{\sigmau-\sigmad}{\sqrt{2\pi}}\overline{\sigma}\frac{2}{3} t^{3/2},
\end{split}
\]
which is precisely the same as $\esp{\int_0^t2G(B_s)ds}$.
\end{proof}

\section{Robust Mean-Variance Hedging}
\subsection{The Setting}
We start by fixing a finite time horizon $T$ and the measurable space $(\Omega,\mathcal{F})$, where $\Omega:= \{\omega \in C([0,T],\mathbb{R}) : \omega(0) = 0\}$, $\mathbb{F}:= \{\mathcal{F}_t, t\in[0,T]\}$ is the filtration generated by the canonical process $B$ and $\mathcal{F}=\mathcal{F}_T$.
\begin{oss}
This choice of the measurable space will allow us to use the results on stochastic calculus with respect to the $G$-Brownian motion and in particular the $G$-martingale representation Theorem~\ref{teo:decomposition} presented in Section 2. This assumption can be done without loss of generality as, for any probability measure $P$ on $(\Omega,\mathcal{F})$ denoting with $\bar{\mathbb{F}}^P:= \{\bar{\mathcal{F}}^P_t, t\in[0,T]\}$ the $P$-augmented filtration, we have the following lemma (see~\cite{soner:representation} for the proof).
\begin{lemma}  
For any $\bar{\mathcal{F}}_t^P$-measurable random variable $\xi$, there exists a unique (P-a.s.) $\mathcal{F}_t$-measurable random variable $\tilde{\xi}$ such that $\tilde{\xi}=\xi$, P-a.s..
Similarly, for every $\bar{\mathcal{F}}_t^P$-progressively measurable process $X$, there exists a unique $\mathcal{F}_t$-progressively measurable process $\tilde{X}$ such that $\tilde{X}=X$, $dt \otimes dP$-a.e.. Moreover, if $X$ is $P$-almost surely continuous, then one can choose $\tilde{X}$ to be $P$-almost surely continuous.
\end{lemma}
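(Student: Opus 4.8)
This is a classical fact about $P$-augmentations, which I would establish in three stages: the random-variable case, the (progressive) measurability of the process version, and the continuity refinement.

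\emph{Random variables.} Recall that $\bar{\mathcal{F}}_t^P$ is obtained from $\mathcal{F}_t$ by adjoining all subsets of $P$-null sets of $\mathcal{F}_T$, so it is the $P$-completion of $\mathcal{F}_t$ inside $(\Omega,\mathcal{F}_T,P)$; in particular, to every $A\in\bar{\mathcal{F}}_t^P$ there corresponds $B\in\mathcal{F}_t$ with $P(A\,\triangle\,B)=0$. I would first treat an indicator $\xi=\mathbb{I}_A$ by setting $\tilde\xi:=\mathbb{I}_B$, so that $\{\tilde\xi\neq\xi\}\subseteq A\,\triangle\,B$ is $P$-null; by linearity this passes to $\bar{\mathcal{F}}_t^P$-measurable simple functions, and then to a general nonnegative $\bar{\mathcal{F}}_t^P$-measurable $\xi$ by choosing simple $\xi_n\uparrow\xi$, picking $\mathcal{F}_t$-measurable versions $\tilde\xi_n$ with $N_n:=\{\tilde\xi_n\neq\xi_n\}$ $P$-null, and putting $\tilde\xi:=\limsup_n\tilde\xi_n$, which is $\mathcal{F}_t$-measurable and coincides with $\xi$ on the $P$-full set $\Omega\setminus\bigcup_nN_n$. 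A signed $\xi$ is handled through $\xi^{\pm}$, and uniqueness is immediate since two $\mathcal{F}_t$-measurable functions that each agree $P$-a.s.\ with $\xi$ agree $P$-a.s.\ with one another.

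\emph{Processes.} After a truncation I may assume $X$ bounded and argue by a functional monotone class theorem on $[0,T]\times\Omega$: let $\mathcal{V}$ be the vector space of bounded $\bar{\mathbb{F}}^P$-progressively measurable processes admitting an $\mathbb{F}$-progressively measurable version (equal $dt\otimes dP$-a.e.). For an elementary process $\mathbb{I}_{(s,u]}(t)\mathbb{I}_A(\omega)$ with $A\in\bar{\mathcal{F}}_s^P$, the random-variable step supplies an $\mathcal{F}_s$-measurable $\widetilde{\mathbb{I}_A}$, and $\mathbb{I}_{(s,u]}(t)\widetilde{\mathbb{I}_A}(\omega)$ is then an $\mathbb{F}$-progressively measurable version; $\mathcal{V}$ contains the constants and is stable under bounded monotone limits (take $\limsup$ of the versions, the exceptional set being a countable union of $dt\otimes dP$-null sets); and uniqueness up to $dt\otimes dP$-null sets is as before.

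\emph{The main obstacle and the continuity.} The elementary processes above generate only the $\bar{\mathbb{F}}^P$-\emph{predictable} $\sigma$-algebra, while $X$ is merely progressively measurable, and closing this gap is the crux. Here I would use that $\mathbb{F}$ is the canonical filtration of a process with continuous paths: applying the random-variable step tensored with the Borel $\sigma$-algebra in time, I first produce a $\bar{\mathbb{F}}^P$-adapted, jointly measurable $Y$ with $Y=X$ $dt\otimes dP$-a.e., and then regularize, for instance via left Lebesgue averages $\tilde X_t:=\limsup_{\mathbb{Q}\ni h\downarrow 0}\tfrac1h\int_{(t-h)\vee 0}^{t}Y_s\,ds$, to obtain an honest $\mathbb{F}$-progressively measurable modification, Lebesgue differentiation giving $\tilde X=X$ $dt\otimes dP$-a.e.; arranging that this regularization yields an adapted, jointly measurable object, rather than merely an ``a.e.-in-$t$'' one, is exactly the point where the continuity of the filtration enters, and it is here that I expect the real work. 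Finally, if $X$ has $P$-a.s.\ continuous paths, then by Fubini the set $D(\omega):=\{t:\tilde X_t(\omega)=X_t(\omega)\}$ has full Lebesgue measure for $P$-a.e.\ $\omega$, hence is dense, so $X_t(\omega)=\lim_{\mathbb{Q}\ni q\uparrow t}\tilde X_q(\omega)$ on that $P$-full set; the process $X'_t:=\limsup_{\mathbb{Q}\ni q\uparrow t}\tilde X_q$ (with $X'_0:=\tilde X_0$) is then $\mathcal{F}_t$-measurable, progressively measurable, $P$-a.s.\ continuous, and equal to $X$ $dt\otimes dP$-a.e., which completes the argument.
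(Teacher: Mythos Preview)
The paper does not supply its own proof of this lemma; it simply cites \cite{soner:representation}. So there is nothing in the paper to compare your argument to beyond noting that the authors defer to Soner--Touzi--Zhang.

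On the merits of your proposal: the random-variable step is standard and correct, and your treatment of the continuous-path refinement at the end is fine once a progressively measurable $\mathbb{F}$-version $\tilde X$ is in hand. The genuine gap is exactly where you flag it. Your monotone-class argument reaches only the $\bar{\mathbb{F}}^P$-predictable $\sigma$-algebra, and the proposed repair via Lebesgue averages is circular as written: you say you ``first produce a $\bar{\mathbb{F}}^P$-adapted, jointly measurable $Y$ with $Y=X$ $dt\otimes dP$-a.e.'', but $X$ itself already has those properties, and if instead you meant the $\mathcal{B}([0,T])\otimes\mathcal{F}_T$-measurable version coming from the tensored random-variable step, that $Y$ is not adapted, so $\limsup_{h\downarrow 0}\tfrac1h\int_{(t-h)\vee 0}^{t}Y_s\,ds$ need not be $\mathcal{F}_t$-measurable. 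Either way the regularization does not deliver an $\mathbb{F}$-progressive object.

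What actually closes the gap in \cite{soner:representation} is the concrete structure of the canonical space: on $\Omega=C_0([0,T])$ one has the path-stopping map $\omega\mapsto\omega^{\cdot\wedge t}$, and $\mathcal{F}_t$ is precisely the $\sigma$-algebra generated by it, so an $\mathcal{F}_t$-measurable version of $X_t$ can be manufactured by composing with this map in a way that is simultaneously measurable in $t$. If you want to salvage your approach without invoking that device, one clean route is to apply the product-completion fact $\mathcal{B}([0,t])\otimes\bar{\mathcal{F}}_t^P\subset\overline{\mathcal{B}([0,t])\otimes\mathcal{F}_t}^{\,dt\otimes P}$ along a countable dense set of times and then patch, but the consistency of the resulting versions across different $t$ still requires an argument you have not given.
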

\end{oss}
\noindent
We consider the following discounted assets
\[
\begin{cases}
dX_t=X_tdB_t,  & X_0>0,\\
d\gamma_t=0, & \gamma_0=1,
\end{cases}
\] 
where $(\gamma_t)\ut$ denotes the discounted risk-free asset. In analogy to what is done in~\cite{Schweizer}, we take into consideration the space of strategies of the following type.
\begin{defn}
\label{defn:strategia}
A trading strategy $\varphi=(\phi_t,\eta_t)\ut$ is called admissible if $(\phi_t)\ut\in\Phi$, where
\[
\Phi:=\left\{\phi\text{ predictable }\Big| \esp{\left(\int_0^T\phi_tX_tdB_t\right)^2}<\infty\right\},
\]
$\eta$ is adapted, and it is self-financing, i.e.
\[
V_t(\varphi)=\eta_t\gamma_t+\phi_t X_t =V_0+\int_0^t\phi_sdX_s, \qquad \forall\;t\in[0,T].
\]
\end{defn} 

\noindent
The value of such strategies $\varphi\in\Phi$ at any time $t\in[0,T]$ is then completely determined by $(V_0,\phi)$, so that we can write $V_t(\varphi)=V_t(V_0,\phi)$ for all $t\in[0,T]$. 

We consider the problem of hedging a contingent claim $H\in L_G^{2+\epsilon}(\F_T)$, for an $\epsilon>0$, using admissible trading strategies. This integrability condition on $H$ is required in order to be able to use the $G$-martingale representation theorem. As a claim $H$ can be perfectly replicated with such a strategy only if it is symmetric, for a general derivative $H$ the idea of robust mean-variance hedging is to minimize the residual terminal risk defined as
\begin{equation}
\label{eq:mean-variance}
J_0(V_0,\phi):=E_G\left[\left(H-V_T(V_0,\phi)\right)^2\right]=\sup_{P\in\Pro}E^P\left[\left(H-V_T(V_0,\phi)\right)^2\right]
\end{equation}
by the choice of $(V_0,\phi)$. That is we wish to solve 
\begin{equation}
\label{problemadef}
\inf_{(V_0,\phi)\in\mathbb{R}_+\times \Phi } J_0(V_0,\phi)=\inf_{(V_0,\phi)\in\mathbb{R}_+\times \Phi } \esp{\left(H-V_T(V_0,\phi)\right)^2},
\end{equation}
as it is done in~\cite{Schweizer} in the classical case in which a unique prior exists. If an optimal $(V_0^\ast,\phi^\ast)\in\mathbb{R}_+\times \Phi$ exists for the problem~\eqref{problemadef}, we call $\phi^\ast$ optimal mean-variance strategy with optimal mean-variance portfolio 
\[
V_t=V_0^\ast +\int_0^t\phi_s^\ast dX_s, \quad t\in[0,T].
\]
The functional in~\eqref{problemadef} can be interpreted as a stochastic game between the agent and the market, the latter displaying the worst case volatility scenario and the former choosing the best possible strategy. When we have $\Pro=\{P\}$ this problem is solved thanks to the Galtchouk-Kunita-Watanabe decomposition, by projecting $H$ onto the linear space $\{I=x+\int_0^T\phi_sdX_s\;| \;x\in \mathbb{R}\text{ and } \phi\in\Phi\}$ (for more on this in the classical case we refer again to~\cite{Schweizer}). Here the situation is more cumbersome for several reasons. Firstly, there exists no orthogonal decomposition of the space of $L_G^{2+\epsilon}$-integrable $G$-martingales. Moreover a symmetric criterion does not distinguish between a buyer or a seller, so the best hedging strategy should be optimal both for $H$ and $-H$. This prevents us from using straightforwardly the $G$-martingale representation theorem as the coefficients in the decomposition of $H$ are a priori different from those coming from the decomposition of $-H$, see Lemma~\ref{minus}. Nevertheless we can get some insights from its direct application. 
\begin{lemma}\label{boundv0}
The initial wealth $V_0^\ast$ of the optimal mean-variance portfolio lies in the interval $[-E_G[-H],E_G[H]]$.
\end{lemma}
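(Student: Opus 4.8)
The plan is to exploit the symmetry of the quadratic objective functional, together with the observation that shifting the terminal portfolio value $V_T(V_0,\phi)$ by a constant costs exactly that constant (since $\gamma$ is constant, the wealth equation gives $V_T(V_0,\phi) = V_0 + \int_0^T \phi_s\, dX_s$, so replacing $V_0$ by $V_0+c$ replaces $V_T$ by $V_T+c$). First I would fix an optimal pair $(V_0^\ast,\phi^\ast)$ and write $J_0(V_0,\phi^\ast)$ as a function of the real variable $V_0$ only, namely $g(V_0):=\esp{(H-V_0-\int_0^T\phi^\ast_s\,dX_s)^2}$. By positivity and subadditivity of $E_G$, $g$ is a convex function of $V_0$ (indeed $v\mapsto (a-v)^2$ is convex for each realization, and $E_G$ preserves convexity being a sup of linear expectations), so its minimizer $V_0^\ast$ is characterized by the one-sided inequalities $g(V_0^\ast)\le g(V_0^\ast\pm\delta)$ for all $\delta>0$.

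The core step is then to turn these inequalities into the claimed bounds. Set $R:=H-\int_0^T\phi^\ast_s\,dX_s$, so that $g(v)=\esp{(R-v)^2}$. Expanding $(R-v)^2=(R-V_0^\ast)^2 - 2(R-V_0^\ast)(v-V_0^\ast)+(v-V_0^\ast)^2$ and using constant translatability of $E_G$ for the last deterministic term, together with subadditivity applied to the cross term, one obtains
\[
g(v)\le g(V_0^\ast) + 2|v-V_0^\ast|\,\esp{-\mathrm{sgn}(v-V_0^\ast)(R-V_0^\ast)} + (v-V_0^\ast)^2 .
\]
Letting $v\downarrow V_0^\ast$ and $v\uparrow V_0^\ast$ respectively, optimality of $V_0^\ast$ forces $\esp{-(R-V_0^\ast)}\ge 0$ and $\esp{R-V_0^\ast}\ge 0$; equivalently $-\esp{-(R-V_0^\ast)}\le 0\le \esp{R-V_0^\ast}$. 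Since the stochastic integral $\int_0^T\phi^\ast_s\,dX_s$ is a symmetric $G$-martingale (it lies in $M_G^2$ by admissibility, Definition~\ref{defn:strategia}), Proposition~\ref{prop:useful} lets me add and subtract it freely inside $E_G$: $\esp{R-V_0^\ast}=\esp{H}-V_0^\ast$ and $\esp{-(R-V_0^\ast)}=\esp{-H}+V_0^\ast$. The two inequalities then read $-\esp{-H}-V_0^\ast \le 0$ and $0\le \esp{H}-V_0^\ast$, i.e. $-\esp{-H}\le V_0^\ast\le \esp{H}$, which is exactly the assertion.

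The main obstacle is the handling of the cross term: because $E_G$ is only subadditive, the first-order condition is an inequality rather than an equality, and one must be careful that the "derivative-type" bound is applied in the correct direction on each side of $V_0^\ast$ so that both one-sided optimality conditions point the right way. A secondary technical point is justifying that $\esp{\mp(R-V_0^\ast)}$ is finite and that Proposition~\ref{prop:useful} applies; this is where the admissibility condition $\esp{(\int_0^T\phi_sX_s\,dB_s)^2}<\infty$ and $H\in L_G^{2+\epsilon}(\F_T)$ enter, ensuring $R\in L_G^2(\F_T)\subseteq L_G^1(\F_T)$ and that the integral term is a genuine symmetric $G$-martingale so that $\esp{\,\cdot + \int_0^T\phi^\ast_s\,dX_s}$ splits additively.
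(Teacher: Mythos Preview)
Your proof is correct and takes a genuinely different route from the paper. The paper invokes the $G$-martingale representation theorem (Theorem~\ref{teo:decomposition}) for both $H$ and $-H$, writes
\[
J_0(V_0,\phi)=(\esp{H}-V_0)^2+\esp{\Big(\int_0^T(\theta_s-\phi_sX_s)dB_s-K_T\Big)^2-2K_T(\esp{H}-V_0)},
\]
and the analogous identity with $\bar K_T$, and then reads off the bounds from the sign of $K_T,\bar K_T\ge 0$: for $V_0>\esp{H}$ both the quadratic and the $-2K_T(\cdot)$ term increase with $V_0$, and symmetrically on the other side. Your argument avoids the representation theorem altogether: you fix $\phi$, exploit convexity of $v\mapsto\esp{(R-v)^2}$, and extract a one-sided first-order condition via subadditivity, reducing everything to $\esp{\pm(R-V_0^\ast)}\ge 0$; the symmetric $G$-martingale property of $\int_0^T\phi^\ast_sX_s\,dB_s$ (through Proposition~\ref{prop:useful}) then collapses $\esp{R}$ to $\esp{H}$ and $\esp{-R}$ to $\esp{-H}$. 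What your approach buys is economy: it is self-contained, uses only the sublinear-expectation axioms plus the mean-zero property of the stochastic integral, and in fact shows that for \emph{every} fixed $\phi$ the minimizer in $V_0$ already lies in $[-\esp{-H},\esp{H}]$, so existence of a global optimizer is not needed. The paper's approach, on the other hand, makes explicit the structural role of $K_T$ and $\bar K_T$, which is the thread running through the rest of the analysis.
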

\begin{proof}
Let
\begin{align}
\label{lab}H&=E_G[H]+\int_0^T\theta_sdB_s-K_T,\\
\notag -H&=E_G[-H]+\int_0^T\bar{\theta}_sdB_s-\bar{K}_T,
\end{align}
be the $G$-martingale decomposition of $H$ and $-H$ for suitable processes $(\theta_t)\ut$, $(\bar{\theta}_t)\ut$, $(K_t)\ut$ and $(\bar{K}_t)\ut$, as given in Theorem~\ref{teo:decomposition}, respectively. It then follows that
\begin{equation}
\label{mv}
\begin{split}
&E_G\left[\left(H-V_0-\int_0^T\phi_sX_sdB_s\right)^2\right]\\
=&E_G\left[\left(E_G[H]-V_0+\int_0^T(\theta_s-\phi_sX_s)dB_s-K_T\right)^2\right]\\
=&(E_G[H]-V_0)^2+E_G\Bigg[\left(\int_0^T\left(\theta_s-\phi_sX_s\right)dB_s-K_T\right)^2+\\
&\qquad\qquad\qquad\qquad-2K_T(E_G[H]-V_0)\Bigg],
\end{split}
\end{equation}
and similarly
\[
\begin{split}
&E_G\left[\left(-H+V_0+\int_0^T\phi_sX_sdB_s\right)^2\right]\\
=&E_G\left[\left(E_G[-H]+V_0+\int_0^T(\bar{\theta}_s+\phi_sX_s)dB_s-\bar{K}_T\right)^2\right]\\
=&(E_G[-H]+V_0)^2+E_G\Bigg[\left(\int_0^T\left(\bar{\theta}_s+\phi_sX_s\right)dB_s-\bar{K}_T\right)^2+\\
&\qquad\qquad\qquad\qquad\qquad-2\bar{K}_T(E_G[-H]+V_0)\Bigg],
\end{split}
\]
by the properties of the stochastic integrals with respect to the $G$-Brownian motion and Proposition~\ref{prop:useful}.
From the expressions above we see that, as $K_T$ and $\bar{K}_T$ are strictly positive random variables, the optimal initial wealth $V_0^\ast$ is in the interval $[-E_G[-H],E_G[H]]$.
\end{proof}
\noindent
This agrees with the results on no-arbitrage pricing presented in~\cite{Vorbrink}, thanks to which we can argue that $V_0$ should indeed be in $(-E_G[-H],E_G[H])$, as long as $-E_G[-H]<E_G[H]$. When the claim is symmetric, i.e.\ $\esp{H}=-\esp{-H}$, it is also perfectly replicable and we would then have $V_0^\ast=\esp{H}$ and $(\phi^\ast_t)\ut=(\theta_t/X_t)\ut$, as in the classical case.

As for the initial value, it is possible to show that also the optimal trading strategy must belong to some bounded set in the $M^2_G$ norm.

\begin{lemma}\label{boundphi}
Let be given a contingent claim $H\in L_G^{2+\epsilon}(\mathcal{F}_T)$ with
\[
H=E_G[H]+\int_0^T\theta_sdB_s-K_T,
\]
for some $(\theta_t)\ut\in M_G^2(0,T)$ and $K_T\in L_G^2(\mathcal{F}_T)$. Then there exists a $R\in\mathbb{R}_+$ such that 
\[
\inf_{(V_0,\phi)\in\mathbb{R}_+\times \Phi } J_0(V_0,\phi)=\inf_{\overset{(V_0,\phi)\in\mathbb{R}_+\times \Phi}
{\Vert \int_0^T(\theta_s-\phi_sX_s)dB_s  \Vert_2}\leq R} J_0(V_0,\phi).
\]
\end{lemma}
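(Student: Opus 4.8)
The plan is to show that the infimum in~\eqref{problemadef} does not change when the minimization is restricted to the sub-level set $\mathcal{S}_C:=\{(V_0,\phi)\in\mathbb{R}_+\times\Phi:\,J_0(V_0,\phi)\le C\}$ for a suitable finite $C$, and then to prove that on $\mathcal{S}_C$ one has $\big\|\int_0^T(\theta_s-\phi_sX_s)\,dB_s\big\|_2\le R$ for a constant $R$ depending only on $\esp{H}$ and $\|K_T\|_2$. Writing $\mathcal{B}_R:=\big\{(V_0,\phi)\in\mathbb{R}_+\times\Phi:\,\|\int_0^T(\theta_s-\phi_sX_s)dB_s\|_2\le R\big\}$ for the constrained domain of the statement, the inclusions $\mathcal{S}_C\subseteq\mathcal{B}_R\subseteq\mathbb{R}_+\times\Phi$ then force $\inf_{\mathbb{R}_+\times\Phi}J_0\le\inf_{\mathcal{B}_R}J_0\le\inf_{\mathcal{S}_C}J_0=\inf_{\mathbb{R}_+\times\Phi}J_0$, i.e.\ the desired equality.

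First I would obtain $C$: the trivial choice $\phi^0\equiv0$, $V_0^0=0$ is admissible and $J_0(0,0)=\esp{H^2}<\infty$, because $H\in L_G^{2+\epsilon}(\F_T)\subseteq L_G^2(\F_T)$. Thus $\inf_{\mathbb{R}_+\times\Phi}J_0\le\esp{H^2}$, and taking e.g.\ $C:=\esp{H^2}+1$ makes $\mathcal{S}_C$ nonempty and guarantees that every minimizing sequence eventually lies in $\mathcal{S}_C$, whence $\inf_{\mathcal{S}_C}J_0=\inf_{\mathbb{R}_+\times\Phi}J_0$.

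For the uniform bound, fix $(V_0,\phi)\in\mathcal{S}_C$ and set $N:=\int_0^T(\theta_s-\phi_sX_s)\,dB_s\in L_G^2(\F_T)$ and $a:=\esp{H}-V_0$. The expansion~\eqref{mv} already established in the proof of Lemma~\ref{boundv0} reads $J_0(V_0,\phi)=a^2+E_G\big[(N-K_T)^2-2aK_T\big]$. Applying sub-additivity of $E_G$ in the form $E_G[A-B]\ge E_G[A]-E_G[B]$, together with $E_G[|2aK_T|]\le 2|a|\,\|K_T\|_2$ (the Cauchy--Schwarz inequality for $E_G$, which follows from the representation $E_G=\sup_{P\in\Pro}E^P$, or alternatively from $G$-Jensen via the $G$-convexity of $x\mapsto x^2$, Lemma~\ref{quadrato}), I get $J_0(V_0,\phi)\ge a^2+\|N-K_T\|_2^2-2|a|\,\|K_T\|_2$. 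Dropping the nonnegative term $\|N-K_T\|_2^2$ and using $J_0\le C$ yields $|a|\le\|K_T\|_2+\sqrt{C+\|K_T\|_2^2}=:A_0$ on $\mathcal{S}_C$; plugging this back (and discarding $-a^2\le0$) gives $\|N-K_T\|_2^2\le C+2A_0\|K_T\|_2$, and finally the triangle inequality for $\|\cdot\|_2$ gives $\|N\|_2\le\|N-K_T\|_2+\|K_T\|_2\le\sqrt{C+2A_0\|K_T\|_2}+\|K_T\|_2=:R$, which is the required uniform bound.

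I expect the main obstacle to be the nonlinearity of $E_G$: the cross term $E_G[(N-K_T)^2-2aK_T]$ cannot be separated into independent pieces, so one has to apply sub-additivity in the right direction and control the mixed term $E_G[|aK_T|]$ by Cauchy--Schwarz under $E_G$, while also keeping track of the sign of $a=\esp{H}-V_0$. The remaining ingredients --- admissibility of $\phi^0\equiv0$, $N\in L_G^2(\F_T)$, and $K_T\in L_G^2(\F_T)$ (which is assumed) --- are routine and merely ensure that all the norms above are finite.
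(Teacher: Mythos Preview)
Your argument is correct and follows the same overall strategy as the paper: bound the infimum above by the trivial value $\esp{H^2}$, then use sub-additivity of $E_G$ together with Cauchy--Schwarz to obtain a lower bound for $J_0(V_0,\phi)$ that tends to infinity with $\Vert\int_0^T(\theta_s-\phi_sX_s)dB_s\Vert_2$.

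The organization differs slightly. The paper groups the random part $A:=\esp{H}-V_0-K_T$ together and isolates $D:=\int_0^T(\theta_s-\phi_sX_s)dB_s$, obtaining $J_0\geq\esp{D^2}-\esp{-A^2}-2\esp{A^2}^{1/2}\esp{D^2}^{1/2}$; it then appeals to Lemma~\ref{boundv0} to control $\esp{A^2}$ uniformly in $V_0$. You instead keep the constant $a=\esp{H}-V_0$ separate, work with $N-K_T$, and derive the bound on $|a|$ directly from the same inequality, which makes your argument self-contained and yields an explicit value for $R$. Both routes are equally valid; yours is a touch more explicit, the paper's a touch terser.
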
 

\begin{proof}
We start by noticing that the optimal mean variance portfolio $(V_0^\ast, \phi^\ast)$ clearly satisfies 
\begin{equation}\label{M1}
J(V_0^\ast, \phi^\ast)\leq \esp{H^2}
\end{equation}
and put 
\begin{align*}
A&:= \esp{H}-V_0-K_T,\\
D&:= \int_0^T (\theta_s-\phi_s X_s) dB_s.
\end{align*}
We can derive the following chain of inequalities
\begin{align*}
J(V_0,\phi)&=\esp{\left( A+D\right)^2}=\esp{A^2+D^2+2AD}\\
&\geq \esp{D^2}-\esp{-D^2}-\esp{-2AD}\\
&\geq \esp{D^2}-\esp{-A^2}-2\esp{A^2}^\frac{1}{2}\esp{D^2}^\frac{1}{2}.
\end{align*}
This shows that for great values of $\esp{D^2}$, i.e.\ when the $L^2_G$ distance of $\int_0^T \phi_s X_s dB_s$ from $\int_0^T \theta_s dB_s$ is too big, for any $V_0\in(-\esp{-H},\esp{H})$ the terminal risk $J(V_0,\phi)$ cannot be smaller than the upper bound in \eqref{M1}. This completes the proof.
\end{proof}

\begin{teo}
\label{theconv}
Let be given a claim $H\in L_G^{2+\epsilon}(\mathcal{F}_T)$ and a sequence of random variables $(H^n)_{n\in\mathbb{N}}$ such that $\Vert H - H^n\Vert_{2+\epsilon}\to 0$ as $n\to\infty$. Then as $n\to\infty$ we have
\begin{equation*}
J_n^\ast\to J^\ast,
\end{equation*}
where, for every $n\in\mathbb{N}$,
\[
J_n^\ast:=\inf_{(V_0,\phi)\in\mathbb{R}_+\times \Phi } \esp{\left(H^n-V_T(V_0,\phi)\right)^2}
\]
and 
\[
J^\ast:=\inf_{(V_0,\phi)\in\mathbb{R}_+\times \Phi } \esp{\left(H-V_T(V_0,\phi)\right)^2}.
\]
\end{teo}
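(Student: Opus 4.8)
The plan is to show that $J^\ast$ depends continuously on the claim $H$ when $H$ is measured in the $\|\cdot\|_{2+\epsilon}$-norm. The natural route is to estimate $|J_n^\ast - J^\ast|$ by comparing, for an \emph{arbitrary} admissible pair $(V_0,\phi)$, the two functionals $J_0(V_0,\phi)$ built from $H^n$ and from $H$, and then to pass to the infimum. Write $V_T := V_T(V_0,\phi)$. Using the elementary identity $a^2 - b^2 = (a-b)(a+b)$ with $a = H^n - V_T$ and $b = H - V_T$, together with sub-additivity and positive homogeneity of $E_G$ and the Cauchy--Schwarz inequality for the sublinear expectation (which holds, since $E_G = \sup_{P\in\mathcal P}E^P$ and Cauchy--Schwarz holds under each $P$), one gets
\[
\left| E_G[(H^n - V_T)^2] - E_G[(H - V_T)^2]\right| \leq E_G\big[|H^n-H|\,|H^n + H - 2V_T|\big] \leq \|H^n - H\|_2 \, \big\| H^n + H - 2V_T\big\|_2.
\]
So the whole argument reduces to controlling $\|H^n + H - 2V_T\|_2$ \emph{uniformly} over the pairs $(V_0,\phi)$ that matter.

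The uniform control is exactly what Lemmas~\ref{boundv0} and~\ref{boundphi} provide. First, one may restrict the infimum defining $J_n^\ast$ (resp.\ $J^\ast$) to strategies that beat the trivial bound $E_G[(H^n)^2]$ (resp.\ $E_G[H^2]$); since $\|H^n - H\|_{2+\epsilon}\to 0$ implies $\|H^n\|_{2+\epsilon}$ is bounded, these trivial bounds are bounded uniformly in $n$. By Lemma~\ref{boundv0}, any such $V_0$ lies in $[-E_G[-H^n], E_G[H^n]]$, an interval whose endpoints are bounded uniformly in $n$ (again because $E_G$ is $1$-Lipschitz for $\|\cdot\|_1 \leq \|\cdot\|_{2+\epsilon}$). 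By Lemma~\ref{boundphi} applied to $H^n$, for the relevant strategies $\|\int_0^T(\theta^n_s - \phi_s X_s)\,dB_s\|_2 \leq R_n$, and inspecting the proof of Lemma~\ref{boundphi} one checks that $R_n$ can be chosen uniformly in $n$, since it only depends on $\|H^n\|_2$, $E_G[-H^n]$ and $E_G[H^n]$, all of which are uniformly bounded. Then
\[
\Big\| V_T \Big\|_2 = \Big\| V_0 + \int_0^T \phi_s X_s\, dB_s \Big\|_2 \leq |V_0| + \Big\| \int_0^T \theta^n_s\, dB_s \Big\|_2 + R_n,
\]
and $\|\int_0^T \theta^n_s\,dB_s\|_2 \leq \|H^n\|_2 + E_G[H^n] + \|K^n_T\|_2$ is uniformly bounded as well (using $K^n_T = E_G[H^n] + \int_0^T\theta^n_s dB_s - H^n$ and isometry/orthogonality properties of the $G$-stochastic integral against the decreasing part). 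Hence there is a constant $C$, independent of $n$ and of the admissible pair under consideration, with $\|H^n + H - 2V_T\|_2 \leq C$.

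Combining the two pieces: for every admissible $(V_0,\phi)$ in the relevant (restricted) class,
\[
\big| J_0^{H^n}(V_0,\phi) - J_0^{H}(V_0,\phi)\big| \leq C\,\|H^n - H\|_2 \leq C\,\|H^n - H\|_{2+\epsilon},
\]
where the last step uses that $\|\cdot\|_2 \leq \|\cdot\|_{2+\epsilon}$ on $L_G^{2+\epsilon}(\mathcal F_T)$ by the monotonicity of $L_G^p$-norms under $E_G$. Taking the infimum over $(V_0,\phi)$ on both sides (a uniform two-sided bound on the integrands transfers to the infima) yields $|J_n^\ast - J^\ast| \leq C\,\|H^n - H\|_{2+\epsilon}$, which tends to $0$ as $n\to\infty$.

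I expect the main obstacle to be the uniformity of the constants, rather than any single inequality: one must go back into the proofs of Lemmas~\ref{boundv0} and~\ref{boundphi} and verify that the bounds there ($V_0^\ast$-interval, the radius $R$) depend on $H$ only through quantities — $\|H\|_2$, $E_G[\pm H]$, $\|K_T\|_2$ — that are continuous, hence eventually bounded, along the approximating sequence. A secondary technical point is justifying the Cauchy--Schwarz and norm-monotonicity steps for the sublinear expectation $E_G$; these follow from the representation $E_G[\cdot] = \sup_{P\in\mathcal P}E^P[\cdot]$ and the corresponding classical inequalities under each $P \in \mathcal P$, but should be stated explicitly. Everything else is a routine $\varepsilon$-argument on infima.
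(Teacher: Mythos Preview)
Your proposal is correct and follows essentially the same route as the paper: the same factorization $a^2-b^2=(a-b)(a+b)$ with Cauchy--Schwarz under $E_G$, the same appeal to Lemmas~\ref{boundv0} and~\ref{boundphi} to confine attention to a set of strategies with uniformly bounded $\|V_T\|_2$, and the same passage to infima. The only differences are cosmetic: where you invoke ``isometry/orthogonality'' to bound $\|\int_0^T\theta^n_s\,dB_s\|_2$ and $\|K^n_T\|_2$ the paper cites Song's decomposition estimates (Theorem~4.5 in~\cite{song}) for the needed stability, and where you compress the final step into ``a uniform two-sided bound transfers to the infima'' the paper writes out the corresponding $2\delta$-argument explicitly.
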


\begin{proof}
As first step of the proof we study the convergence of  the terminal risk 
\begin{equation}	\label{majo6}
\esp{\left(H^n-V_T(V_0,\phi)\right)^2}\to\esp{\left(H-V_T(V_0,\phi)\right)^2},
\end{equation}
for some strategy $(V_0,\phi)$. We assume without loss of generality that $H$ has a representation as in \eqref{lab}. Similarly, for every $n\in\mathbb{N}$, we claim that
\[
H^n=\esp{H^n}+\int_0^T \theta_s^ndB_s-K^n_T,
\]
for a $(\theta^n_t)\ut\in M_G^2(0,T)$ and $K_T^n\in L_G^2(\mathcal{F}_T)$. We begin by proving that we can restrict ourselves to study the convergence in \eqref{majo6} for a \emph{bounded class of trading strategies}. It follows from Theorem 4.5 in \cite{song} that the $L^2_G$ convergence of $(H^n)_{n\in\mathbb{N}}$ to $H$ implies also 
\[
\Vert \int_0^T\left(\theta_s^n-\theta_s\right)dB_s \Vert_2\to 0
\]
and $\Vert K_T^n - K_T\Vert_2\to 0$ as $n\to\infty$. These facts, together with Lemma \ref{boundv0} and Lemma \ref{boundphi}, allow us to fix a $R\in\mathbb{R}_+$ such that 
\begin{align*}
J_n^\ast&=\inf_{\overset{(V_0,\phi)\in\mathbb{R}_+\times \Phi}
{\Vert V_0 +\int_0^T(\theta_s-\phi_sX_s)dB_s  \Vert_2}\leq R} \esp{\left(H^n-V_T(V_0,\phi)\right)^2}\\ 
J^\ast&=\inf_{\overset{(V_0,\phi)\in\mathbb{R}_+\times \Phi}
{\Vert V_0+\int_0^T(\theta_s-\phi_sX_s)dB_s  \Vert_2}\leq R} \esp{\left(H-V_T(V_0,\phi)\right)^2}.
\end{align*}
This in turns implies the convergence 
\[
\esp{\left(H^n-\cdot\right)^2}\to\esp{\left(H-\cdot\right)^2}
\]
on the set of strategies $(V_0,\phi)\in\mathbb{R}_+\times \Phi$ such that $\Vert V_0+\int_0^T(\theta_s-\phi_sX_s)dB_s  \Vert_2\leq R$. In fact, denoting $x:= V_0 +\int_0^T\phi_s X_s dB_s$ any of such strategies, for any $\delta>0$ we can find $\bar{n}\in\mathbb{N}$ such that for all $n>\bar{n}$
\begin{align}
\notag&\left|\esp{\left(H^n-x\right)^2}-\esp{\left(H-x\right)^2}\right|\leq \left|\esp{\left(H^n-x\right)^2-\left(H-x\right)^2}\right|\\
\notag\leq&\esp{|\left(H^n-x\right)^2-\left(H-x\right)^2|}=\esp{|\left(H^n-H\right)\left(H^n+H-2x\right)|}\\
\notag\leq&\esp{\left(H^n-H\right)^2}^\frac{1}{2}\esp{\left(H^n+H-2x\right)^2}^\frac{1}{2}\\
\label{majo}\leq&\esp{\left(H^n-H\right)^2}^\frac{1}{2}\left(\esp{\left(H^n+H\right)^2}^\frac{1}{2}+\esp{(2x)^2}^\frac{1}{2}\right)<\delta.
\end{align}
This is clear since the second factor in \eqref{majo} is bounded. The previous chain of inequalities holds true also upon considering the supremum of $x$ over the set $\Vert x \Vert_2\leq R$, which in turns implies uniform convergence. We can now prove the main statement. For any $\delta>0$, from the definition of $J^\ast$, there exists $(\bar{V}_0,\bar{\phi})\in\mathbb{R}_+\times \Phi$ such that $\Vert \bar{V}_0+\int_0^T(\theta_s-\bar{\phi}_sX_s)dB_s  \Vert_2\leq R$ and 
\begin{equation}\label{majo1}
J^\ast+\delta \geq \esp{\left(H-\bar{V}_0-\int_0^T\bar{\phi}_sX_sdB_s\right)^2}.
\end{equation}
Moreover, the uniform convergence from \eqref{majo}, allows us to consider $n$ big enough so that 
\begin{equation}\label{majo2}
\left|\esp{\left(H-\bar{V}_0-\int_0^T\bar{\phi}_sX_sdB_s\right)^2}- \esp{\left(H^n-\bar{V}_0-\int_0^T\bar{\phi}_sX_sdB_s\right)^2}\right|<\delta.
\end{equation}
From \eqref{majo1} and \eqref{majo2} we can conclude that 
\begin{equation}\label{majo3}
J^\ast+2\delta\geq J_n^\ast.
\end{equation}
Analogously it is possible to find $(\tilde{V}_0,\tilde{\phi})$ such that 
\[
J_n^\ast+\delta\geq \esp{\left(H^n-\tilde{V}_0-\int_0^T\tilde{\phi}_sX_sdB_s\right)^2}
\]
and 
\[
\left|\esp{\left(H-\tilde{V}_0-\int_0^T\tilde{\phi}_sX_sdB_s\right)^2}- \esp{\left(H^n-\tilde{V}_0-\int_0^T\tilde{\phi}_sX_sdB_s\right)^2}\right|<\delta,
\]
from which we can argue
\begin{equation}\label{majo4}
J_n^\ast\geq J^\ast-2\delta.
\end{equation}
The inequalities \eqref{majo3} and \eqref{majo4} conclude the proof as together they imply
\[
J^\ast-2\delta\leq J_n^\ast\leq J^\ast+2\delta
\]
and $\delta$ was chosen arbitrarily.
\end{proof}

\begin{oss}
Theorem \ref{theconv} shows that we can begin our study of the mean-variance optimization by considering claims in the space $L_{ip}(\F_T)$. Any random variable in $L_G^{2+\epsilon}(\F_T)$ is in fact by definition the limit in the $L_G^{2+\epsilon}$-norm of elements in $L_{ip}(\F_T)$. Moreover, as stated in Theorem~\ref{decopeng}, this class of random variables has the great advantage that the term $-K_T$ in their representation has a further decomposition as 
\begin{equation}
\label{decompositionk}
-K_T=\int_0^T\eta_s d\brac_s-2\int_0^TG(\eta_s)ds,
\end{equation}
for some process $(\eta_t)\ut\in M^1_G(0,T)$. 
\end{oss}

\noindent
From now on we consider $H\in L_G^{2+\epsilon}(\F_T)$ with decomposition 
\begin{equation}
\label{star}
\begin{split}
H&=\esp{H}+\int_0^T\theta_s dB_s - K_T(\eta)\\
&=\esp{H}+\int_0^T\theta_s dB_s+\int_0^T\eta_s d\langle B\rangle_s-2\int_0^TG(\eta_s) ds.
\end{split}
\end{equation}
Given the complexity of the problem, we proceed stepwise as follows. We first enforce some conditions on the process $\eta$, namely being deterministic or maximally distributed, then we assume $\eta$ to be a piecewise constant process having some particular characteristics that we will clarify at each time. In these cases we are able to solve the mean-variance hedging problem explicitly. Finally we address the general case by providing estimates of the minimal terminal risk.

\section{Explicit Solutions}
We first present the computation of the optimal mean-variance portfolio for random variables $H\in L_G^{2+\epsilon}(\F_T)$ with decomposition~\eqref{star}, where $\eta$ is assumed to be deterministic or depending only on the realization of $(\langle B\rangle_t)\ut$. On the contrary the integrand $\theta$ in~\eqref{star} is completely general and must only belong to $M_G^2(0,T)$. In this way, as $\eta$ does not exhibit volatility uncertainty through a direct dependence on the $G$-Brownian motion, uncertainty can be hedged by means of the initial wealth $V_0$ without using the strategy $\phi$. In these cases we are able to provide explicitly the optimal solutions in Theorem~\ref{deterministiceta} and Theorem~\ref{particularcase}.
\subsection{Deterministic $\eta$}
We first consider the case where $\eta$ in the representation~\eqref{star} is deterministic, and provide the optimal investment strategy and initial wealth.
\begin{teo} 
\label{deterministiceta}
Consider a claim $H\in L_G^{2+\epsilon}(\F_T)$ of the following form
\begin{equation}
\label{H}
H=\esp{H}+\int_0^T\theta_sdB_s+\int_0^T\eta_sd\langle B\rangle_s-\int_0^T2G(\eta_s)ds,
\end{equation}
where $\theta\in M_G^{2}(0,T)$ and $\eta\in M^1_G(0,T)$ is a deterministic process. The optimal mean-variance portfolio is given by
\[
\phi^\ast_tX_t = \theta_t
\] 
for every $t\in[0,T]$ and
\[
V_0^\ast=\frac{\esp{H}-\esp{-H}}{2}.
\]
\end{teo}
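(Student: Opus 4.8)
The plan is to produce a lower bound for $J_0$ that does not involve the diffusion part of the strategy and is attained exactly by the proposed pair. Using $dX_s=X_s\,dB_s$ write $V_T(V_0,\phi)=V_0+\int_0^T\phi_sX_s\,dB_s$, and for $H$ of the form~\eqref{H} set
\[
c:=\esp{H}-V_0,\qquad D_T:=\int_0^T(\theta_s-\phi_sX_s)\,dB_s,\qquad K_T:=2\int_0^TG(\eta_s)\,ds-\int_0^T\eta_s\,d\brac_s,
\]
so that $H-V_T(V_0,\phi)=c+D_T-K_T$. I would first record two facts. (i) Since $\theta\in M_G^2(0,T)$ and $\phi\in\Phi$, both $\int_0^T\theta_s\,dB_s$ and $\int_0^T\phi_sX_s\,dB_s$ are terminal values of symmetric $G$-martingales, hence so is $D_T$; in particular $E^P[D_T]=0$ for every $P\in\Pro$, because $\esp{D_T}=\esp{-D_T}=0$. (ii) Applying Proposition~\ref{prop:useful} to the $G$-decomposition of $-H$ (as in the proof of Lemma~\ref{minus}) gives $\esp{H}+\esp{-H}=\esp{K_T}=:\kappa$, while $-\esp{-K_T}=0$ since $-K$ is a $G$-martingale null at $0$; moreover $K_T$ is bounded, being an affine functional of the bounded random variable $\int_0^T\eta_s\,d\brac_s$ (here $\eta$ is deterministic).

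For the lower bound, fix an admissible $(V_0,\phi)$ and $P\in\Pro$. Jensen's inequality under $P$ together with (i) gives
\[
E^P\!\big[(c+D_T-K_T)^2\big]\ \ge\ \big(E^P[c+D_T-K_T]\big)^2\ =\ \big(c-E^P[K_T]\big)^2 .
\]
Taking the supremum over $P\in\Pro$, and using $\sup_{P}E^P[K_T]=\esp{K_T}=\kappa$ and $\inf_{P}E^P[K_T]=-\esp{-K_T}=0$, I obtain
\[
J_0(V_0,\phi)\ \ge\ \max\{\,c^2,\,(c-\kappa)^2\,\}\ \ge\ \frac{\kappa^2}{4},
\]
for every admissible $(V_0,\phi)$; equality in the last bound forces $c=\kappa/2$, i.e.\ $V_0=\tfrac12(\esp{H}-\esp{-H})$.

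Then I check the proposed pair attains $\kappa^2/4$. With $\phi^\ast_tX_t=\theta_t$ we have $D_T=0$, and $\phi^\ast\in\Phi$ since $\int_0^T\phi^\ast_sX_s\,dB_s=\int_0^T\theta_s\,dB_s\in L_G^2(\F_T)$; with $V_0^\ast=\tfrac12(\esp{H}-\esp{-H})$ we have $c^\ast=\esp{H}-V_0^\ast=\kappa/2$. Hence $J_0(V_0^\ast,\phi^\ast)=\esp{(K_T-\kappa/2)^2}$, and it remains to evaluate this sublinear expectation. Here the hypothesis that $\eta$ is deterministic enters decisively: $\int_0^T\eta_s\,d\brac_s$ is maximally distributed — it is an $L_G^1$-limit of independent sums of maximally distributed $\brac$-increments, cf.~\eqref{maximallydist} — and therefore so is the affine functional $K_T$; since $K_T\ge 0$ q.s.\ and its law is carried by an interval, that interval equals $[-\esp{-K_T},\esp{K_T}]=[0,\kappa]$. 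Consequently
\[
J_0(V_0^\ast,\phi^\ast)=\esp{(K_T-\kappa/2)^2}=\sup_{0\le x\le\kappa}(x-\kappa/2)^2=\frac{\kappa^2}{4},
\]
which matches the lower bound. Thus $(V_0^\ast,\phi^\ast)$ is optimal, with $\phi^\ast_tX_t=\theta_t$ and $V_0^\ast=\tfrac12(\esp{H}-\esp{-H})$; the lower bound argument also shows $V_0^\ast$ is the unique optimal initial wealth.

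I expect the one genuinely delicate point to be the maximal-distribution claim for $K_T$: one must show that a $d\brac$-integral with deterministic integrand $\eta$ is maximally distributed on the interval $\big[\int_0^T(\sigmad\eta_s^+-\sigmau\eta_s^-)ds,\ \int_0^T(\sigmau\eta_s^+-\sigmad\eta_s^-)ds\big]$ — immediate for simple $\eta$ from the independence of $\brac$-increments and \eqref{maximallydist}, then extended to general deterministic $\eta\in M_G^1(0,T)$ by approximation, the boundedness of such $K_T$ removing any integrability issue in the passage to the limit. The remaining ingredients — the identity $\esp{H}+\esp{-H}=\esp{K_T}$ via Proposition~\ref{prop:useful}, the vanishing of $E^P[D_T]$ for all $P\in\Pro$, and the two one-dimensional minimizations — are routine.
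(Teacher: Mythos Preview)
Your argument is correct and follows the same two-step architecture as the paper: a Jensen-type lower bound independent of $\phi$, followed by verification that the proposed pair attains it. The differences are only in packaging. For the lower bound, the paper invokes the $G$-Jensen inequality for $x\mapsto x^2$ (Proposition~\ref{jensen} plus Lemma~\ref{quadrato}), whereas you apply the classical Jensen inequality under each $P\in\Pro$ and use $E^P[D_T]=0$; your route is slightly more elementary since it bypasses the $G$-convexity lemma. For attainment, the paper computes the q.s.\ range of $K_T$ by exhibiting the two extremal volatility scenarios $\tilde\sigma,\tilde\sigma'$ and then bounds $|c^\ast-K_T|\le\kappa/2$ directly, while you phrase the same content as ``$K_T$ is maximally distributed on $[0,\kappa]$''. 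Your maximal-distribution claim is in fact stronger than what is needed here---it suffices that $K_T\in[0,\kappa]$ q.s.\ and that $K_T=0$ under some $P$ (the paper's $P^{\tilde\sigma}$), which already gives $\esp{(K_T-\kappa/2)^2}=\kappa^2/4$---so you could lighten the ``delicate point'' you flag. Either way the substance is the same.
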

\begin{proof}
We start by computing the span of the process
\[
\esp{H}+\int_0^t\eta_sd\langle B\rangle_s-\int_0^T2G(\eta_s)ds.
\]
This lies quasi surely in the interval $[\esp{H}-(\overline{\sigma}^2-\underline{\sigma}^2)\int_0^T|\eta_s|ds, \esp{H}]$. We begin with the upper bound, noticing that under the volatility scenario given by
\[
\tilde{\sigma}_t=
\begin{cases} 
\overline{\sigma}^2 & \text{if $\eta_t\geq 0$}, \\
\underline{\sigma}^2 & \text{if $\eta_t< 0$},
\end{cases}
\]
for each $t\in[0,T]$, the negative random variable $\int_0^T\eta_sd\langle B\rangle_s-\int_0^T2G(\eta_s)ds$ is $P^{\tilde{\sigma}}$-a.s.\ equal to zero. As a consequence we have that $E^{P^{\tilde{\sigma}}}[H]=\esp{H}$. For the lower bound we consider 
\[
\tilde{\sigma}'_t=
\begin{cases} 
\overline{\sigma}^2 & \text{if $\eta_t\leq 0$}, \\
\underline{\sigma}^2 & \text{if $\eta_t> 0$},
\end{cases}
\]
for each $t\in[0,T]$. This is the scenario where $\int_0^T\eta_sd\langle B\rangle_s-\int_0^T2G(\eta_s)ds$ reaches its minimum. It follows that $E^{P^{\tilde{\sigma}'}}[H]=-\esp{-H}$. In fact, from~\eqref{H},
\begin{align}
-H&\notag=-\esp{H}-\int_0^T\theta_sdB_s-\int_0^T\eta_sd\langle B\rangle_s+\int_0^T2G(\eta_s)ds\\
&\notag=-\esp{H}-\int_0^T\theta_sdB_s-\int_0^T\eta_sd\langle B\rangle_s+\int_0^T2G(\eta_s)ds\\
&\notag\qquad\qquad\qquad\qquad+\int_0^T2G(-\eta_s)ds-\int_0^T2G(-\eta_s)ds\\
&\label{-H1}=-\esp{H}+\int_0^T(-\theta_s)dB_s+\int_0^T(-\eta_s)d\langle B\rangle_s-\int_0^T2G(-\eta_s)ds\\
&\notag \qquad\qquad\qquad\qquad+(\overline{\sigma}^2-\underline{\sigma}^2)\int_0^T|\eta_s|ds,
\end{align}
since
\[
\int_0^T2(G(\eta_s)+G(-\eta_s))ds=(\overline{\sigma}^2-\underline{\sigma}^2)\int_0^T|\eta_s|ds.
\]
We note that the expression~\eqref{-H1}, as $\eta$ is deterministic, provides the $G$-martingale decomposition of $-H$. Hence we can conclude that  
\begin{equation}
\label{tag3}
-\esp{H}+(\overline{\sigma}^2-\underline{\sigma}^2)\int_0^T|\eta_s|ds=\esp{-H}.
\end{equation}
Then, using Proposition~\ref{jensen} together with Lemma~\ref{quadrato} we get
\begin{align}
&\notag\inf_{(V_0,\phi)} E_G\Big[(\esp{H}-V_0+\int_0^T(\theta_s-\phi_sX_s)dB_s+\int_0^T\eta_sd\langle B\rangle_s+\\
&\notag-\int_0^T2G(\eta_s)ds)^2\Big]
\end{align}
\begin{align}
&\notag\geq\inf_{(V_0,\phi)} \Biggl(E_G\Big[\esp{H}-V_0+\int_0^T(\theta_s-\phi_sX_s)dB_s+\int_0^T\eta_sd\langle B\rangle_s+\\
&\notag-\int_0^T2G(\eta_s)ds\Big]^2\vee E_G\Big[-\esp{H}+V_0-\int_0^T(\theta_s-\phi_sX_s)dB_s+\\
&\notag-\int_0^T\eta_sd\langle B\rangle_s+\int_0^T2G(\eta_s)ds\Big]^2\Biggl)\\
&\label{tag1}=\inf_{V_0} \Biggl(E_G\Big[\esp{H}-V_0+\int_0^T\eta_sd\langle B\rangle_s-\int_0^T2G(\eta_s)ds\Big]^2\;\;\vee \\
&\notag E_G\Big[-\esp{H}+V_0-\int_0^T\eta_sd\langle B\rangle_s+\int_0^T2G(\eta_s)ds\Big]^2\Biggl)\\
&\label{tag2}=\inf_{V_0} \Biggl(E_G\Big[\esp{H}-V_0+\int_0^T\eta_sd\langle B\rangle_s-\int_0^T2G(\eta_s)ds\Big]^2\;\;\vee \\
&\notag E_G\Big[\esp{-H}+V_0+\int_0^T(-\eta_s)d\langle B\rangle_s-\int_0^T2G(-\eta_s)ds\Big]^2\Biggl),
\end{align}
where we have used Proposition~\ref{prop:useful} in~\eqref{tag1} and the relation~\eqref{tag3} in~\eqref{tag2}. This is equal to
\begin{equation}
\label{v}
\inf_{V_0} \Biggl(E_G\Big[\esp{H}-V_0\Big]^2\;\;\vee\;\;E_G\Big[\esp{-H}+V_0\Big]^2\Biggl),
\end{equation}
as 
\begin{align*}
&E_G\Big[a+\int_0^T\xi_sd\langle B\rangle_s-\int_0^T2G(\xi_s)ds\Big]=\\
=& a+E_G\Big[\int_0^T\xi_sd\langle B\rangle_s-\int_0^T2G(\xi_s)ds\Big]=a,
\end{align*}
for $a\in\mathbb{R}$ and $\xi\in M^1_G(0,T)$. The minimum of~\eqref{v} is attained for $V_0^\ast=\frac{\esp{H}-\esp{-H}}{2}$ and is equal to $\left(\frac{\esp{H}+\esp{-H}}{2}\right)^2$. If we show that
\begin{align*}
&\esp{\left(\esp{H}-V_0^\ast+\int_0^T\eta_sd\langle B\rangle_s-\int_0^T2G(\eta_s)ds\right)^2}\\
&\qquad\qquad\qquad\qquad\qquad\qquad=\left(\frac{\esp{H}+\esp{-H}}{2}\right)^2
\end{align*}
the proof is completed. Since
\[
\esp{H}+\int_0^T\eta_sd\langle B\rangle_s-\int_0^T2G(\eta_s)ds
\]
lies between $\esp{H}-(\overline{\sigma}^2-\underline{\sigma}^2)\int_0^T|\eta_s|ds=-\esp{-H}$ and $\esp{H}$, it is clear that the maximum of 
\[
|\esp{H}-V_0^\ast+\int_0^T\eta_sd\langle B\rangle_s-\int_0^T2G(\eta_s)ds| 
\]
under the constraint $V_0^\ast\in[-\esp{-H},\esp{H}]$ is given by $\frac{\esp{H}+\esp{-H}}{2}$. This completes the proof.
\end{proof}

\begin{oss}
\label{determ}
Note that the optimal investment strategy $\phi^\ast=\frac{\theta}{X}$ is well defined as $X$, being a geometric $G$-Brownian motion, is q.s.\ strictly greater than $0$. Moreover notice that, as 
\[
\int_0^T\eta_sd\langle B\rangle_s-\int_0^T2G(\eta_s)ds=-K_T,
\]
it holds
\[
\esp{H}-V_0^\ast=\frac{\esp{K_T}}{2},
\]
since
\[
\esp{K_T}=\esp{\esp{H}+\int_0^T\theta_sdB_s-H}=\esp{H}+\esp{-H}.
\]
\end{oss}

\begin{oss}
Notice that in a context in which a unique prior exists, i.e.\ $\overline{\sigma}=\underline{\sigma}$, $E[H]=\esp{H}=-\esp{-H}$, the optimal initial wealth and strategy derived in Theorem~\ref{deterministiceta} are consistent with the results on mean-variance hedging in the classical framework.
\end{oss}

\noindent
The set of contingent claims which admit the decomposition~\eqref{H} for $\eta$ deterministic is non trivial. For any given integrable deterministic process $(\eta_t)\ut$, any constant $c\in\mathbb{R}$ and any process $(\theta_t)\ut\in M_G^2(0,T)$, we can construct the claim
\[
H:=c+\int_0^T\theta_s dB_s +\int_0^T\eta_sd\langle B\rangle_s-\int_0^t2G(\eta_s)ds,
\]
for which the result of Theorem~\ref{deterministiceta} holds. The intersection of such a set of random variables with $L_{ip}(\F_T)$ includes the second degree polynomials in $(B_{t_0},B_{t_1}-B_{t_0},\dots,B_{t_n}-B_{t_{n-1}})$, where $\{t_i\}_{i=0}^n$ is a partition of $[0,T]$. To have an intuition on this fact consider for simplicity random variables depending only on one increment of the $G$-Brownian motion. The coefficients of the decomposition of $H=\varphi(B_T-B_0)$ are given by 
\[
\eta_t(\omega)=\partial_x^2u(t,\omega)
\] 
and 
\[
\theta_t(\omega)=\partial_xu(t,\omega),
\]
where $u$ is the solution to 
\[
\begin{cases}
\partial_t u + G(\partial^2_x u)=0,\\
u(T,x)=\varphi(x),
\end{cases}
\]
for $(t,x)\in[0,T]\times \mathbb{R}$ (see~\cite{Peng:dynamicrisk}). If $\eta$ is deterministic, we can write $\partial_x^2u(t,\omega)$ as a function of $t$, i.e.\ $a(t):=\partial_x^2u(t,\omega)$. Therefore, by integration w.r.t. $x$, we see that $u(t,x)$ must be of the form 
\[
u(t,x)=\frac{a(t)}{2}x^2+b(t)x+c(t),
\]
so that 
\[
H=\frac{a(T)}{2}B_T^2+b(T)B_T+c(T).
\]

\begin{oss}\label{detrmrem}
Another class of claims that can be optimally hedged by means of Theorem \ref{deterministiceta} is obtained thanks to Theorem 4.1 in \cite{Vorbrink}. If we consider the situation in which $H=\Phi(X_T)$, for some real valued Lipschitz function $\Phi$, then it holds (see \cite{Vorbrink} for the details)
\begin{align*}
\Phi(X_T)&= \esp{\Phi(X_T)}+\int_0^T \partial_xu(t,X_t)X_tdB_t \\
& \qquad\qquad+\frac{1}{2}\int_0^T \partial_{xx} u(t,X_t)X^2_td\langle B\rangle_t - \int_0^T G(\partial_{xx} u(t,X_t))X^2_tdt,
\end{align*}
where $u$ solves 
\[
\begin{cases}
\partial_t u + G(x^2\partial^2_x u)=0,\\
u(T,x)=\Phi(x).
\end{cases}
\]

\noindent
It is then easy to see that $\partial_{xx} u(t,X_t)X^2_t$ is deterministic for every $t\in[0,T]$ if and only if 
\[
H=\Phi(X_T)=u(T,X_T)= a(T)\log X_T+b(T) X_T+ c(T),
\]
for some real functions $a,b$ and $c$.\\
Through a slight modification to the previous argument we can prove that if on the market there exists another asset $X^\prime$, which is not possible to trade and solves the SDE
\[
dX^\prime_t=\alpha(X^\prime_t)dB_t,\qquad X^\prime_0>0,
\]
for some Lipschitz function $\alpha$, then it is possible to use again Theorem \ref{deterministiceta} to hedge every claim $\Phi(X^\prime_T)$, where $\Phi$ is a Lipschitz function such that 
\[
\begin{cases}
\partial_t u + G(\alpha^2(x)\partial^2_x u)=0,\\
u(T,x)=\Phi(x),
\end{cases}
\] 
provided that $\partial_{xx}u(t,x)=\frac{1}{\alpha(x)}$ for every $(t,x)\in[0,T]\times \mathbb{R}$.
\end{oss}

\subsection{Maximally Distributed $\eta$}
We now consider the case in which $\eta$ only shows mean uncertainty, being a function of the quadratic variation of the $G$-Brownian motion. Also in this case we are able to retrieve a complete description of the optimal mean-variance portfolio.
\begin{teo}
\label{particularcase}
Let $H\in L_G^{2+\epsilon}(\F_T)$ be of the form 
\[
H=\esp{H}+\int_0^T\theta_s d B_s+\int_0^T \psi(\brac_s)d\brac_s-2\int_0^TG(\psi(\brac_s))ds,
\]
where $(\theta_t)\ut\in M^2_G(0,T)$ and $\psi: \R\to\R$ is such that there exist $k\in\R$ and $\alpha\in\R_+$ for which
\[
|\psi(x)-\psi(y)|\leq \alpha|x-y|^k,
\]
for all $x,y\in\R$. The optimal mean-variance portfolio is given by 
\[
\phi_t^\ast X_t=\theta_t
\]
for every $t\in[0,T]$ and
\[
V^\ast_0=\frac{\esp{H}-\esp{-H}}{2}.
\]
\end{teo}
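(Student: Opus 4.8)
The plan is to mimic the structure of the proof of Theorem~\ref{deterministiceta}, reducing the maximally distributed case to the deterministic one by conditioning on the realization of the quadratic variation. The key observation is that although $\psi(\brac_s)$ is a genuinely random integrand, it depends on $\omega$ only through the path $(\brac_t)\ut$, and $\brac$ is maximally distributed in the sense of~\eqref{maximallydist}. Under each prior $P^\sigma$ the process $\brac_t = \int_0^t \sigma_u^2\,du$ is a fixed (deterministic-looking) absolutely continuous path, so that along that prior the claim $H$ takes exactly the form treated in Theorem~\ref{deterministiceta} with $\eta_s$ replaced by the deterministic process $s\mapsto\psi(\int_0^s\sigma_u^2\,du)$. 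Consequently the span of
\[
\esp{H}+\int_0^t\psi(\brac_s)d\brac_s-\int_0^T 2G(\psi(\brac_s))ds
\]
still lies quasi surely in $[-\esp{-H},\esp{H}]$, with the upper extreme $\esp{H}$ attained on the scenario that makes the nonincreasing part vanish and the lower extreme $-\esp{-H}$ attained on the opposite scenario; the identity $2G(x)+2G(-x)=|x|(\sigmau-\sigmad)$ together with the stability of the $G$-martingale decomposition of $-H$ (as in~\eqref{-H1}) should give $-\esp{H}+(\sigmau-\sigmad)\esp{\int_0^T|\psi(\brac_s)|ds}=\esp{-H}$, the analogue of~\eqref{tag3}.

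Once this is in place, the rest of the argument is essentially identical to the deterministic case. First I would use Lemma~\ref{quadrato} and Proposition~\ref{jensen} ($G$-convexity of $x\mapsto x^2$) to bound the terminal risk from below by the maximum of two squared $G$-expectations; then Proposition~\ref{prop:useful} removes the symmetric stochastic integral part $\int_0^T(\theta_s-\phi_s X_s)dB_s$ from inside both $G$-expectations (since it is a symmetric $G$-martingale with zero $G$-expectation), showing that $\phi$ cannot help and the infimum over $\phi$ is attained at $\phi_s X_s = \theta_s$. What remains is the one-dimensional minimization
\[
\inf_{V_0}\Bigl(\esp{\esp{H}-V_0}^2 \vee \esp{\esp{-H}+V_0}^2\Bigr) = \inf_{V_0}\Bigl((\esp{H}-V_0)^2 \vee (\esp{-H}+V_0)^2\Bigr),
\]
using that the $G$-expectation of a constant plus an $(\eta\,d\brac - 2G(\eta)\,ds)$-term equals the constant. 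This is minimized at $V_0^\ast = \tfrac{\esp{H}-\esp{-H}}{2}$ with value $\bigl(\tfrac{\esp{H}+\esp{-H}}{2}\bigr)^2$, and the matching upper bound follows because the random variable $\esp{H}-V_0^\ast + \int_0^T\psi(\brac_s)d\brac_s - \int_0^T 2G(\psi(\brac_s))ds$ ranges over $[-\esp{-H}-V_0^\ast, \esp{H}-V_0^\ast]$, whose endpoints have equal absolute value $\tfrac{\esp{H}+\esp{-H}}{2}$.

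The step I expect to be the main obstacle is the rigorous justification that, for $H$ of the stated form, the process $\int_0^T(-\psi(\brac_s))\,d\brac_s - \int_0^T 2G(-\psi(\brac_s))\,ds$ really is the finite-variation part of the $G$-martingale decomposition of $-H$, i.e.\ that the decomposition~\eqref{-H1} goes through when $\eta_s = \psi(\brac_s)$ rather than a deterministic function. Here the polynomial-growth Lipschitz condition $|\psi(x)-\psi(y)|\le\alpha|x-y|^k$ is used to ensure $\psi(\brac)\in M_G^1(0,T)$ and that the relevant integrals are well defined and have the required integrability, so that the characterization of $G$-martingales with zero quadratic variation applies and the conditioning-on-$\brac$ argument is legitimate; I would make this precise either by a direct computation of $\condesp{-H}$ analogous to the one in Lemma~\ref{minus}, exploiting that increments of $\brac$ are independent of $\F_s$ and maximally distributed, or by approximating $\psi$ by smooth functions and passing to the limit. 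Everything else is a routine adaptation of the deterministic-$\eta$ proof.
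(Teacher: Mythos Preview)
Your lower-bound argument via $G$-Jensen and Proposition~\ref{prop:useful} is exactly the paper's, and it gives the right value $\bigl(\tfrac{\esp{K_T}}{2}\bigr)^2=\bigl(\tfrac{\esp{H}+\esp{-H}}{2}\bigr)^2$. The difference lies in the upper bound, and there your proposal has a real gap. You want to conclude that $\esp{H}-V_0^\ast-K_T$ lies q.s.\ in an interval of half-width $\tfrac{\esp{K_T}}{2}$, which amounts to the q.s.\ inequality $K_T\le \esp{K_T}$. You try to deduce this from the pointwise bound $K_T\le (\sigmau-\sigmad)\int_0^T|\psi(\brac_s)|\,ds$ together with the ``analogue of~\eqref{tag3}'', namely $(\sigmau-\sigmad)\,\esp{\int_0^T|\psi(\brac_s)|\,ds}=\esp{K_T}$. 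But the pointwise upper bound has a \emph{random} right-hand side, so it does not by itself give a deterministic q.s.\ bound; and the expectation identity you write is in general \emph{false}: already for $\psi(x)=x$ one has $\esp{K_T}<(\sigmau-\sigmad)\,\esp{\int_0^T\brac_s\,ds}$, because maximising $\int_0^T\gamma(s)(\sigmau-\gamma'(s))\,ds$ over admissible paths $\gamma$ cannot simultaneously make $\gamma$ large and $\gamma'$ small. So the step you flag as the ``main obstacle'' (the $G$-martingale decomposition of $-H$) is not what is missing --- the identity $\esp{K_T}=\esp{H}+\esp{-H}$ is immediate from Proposition~\ref{prop:useful} without ever decomposing $-H$ --- whereas the step you treat as routine is where the argument breaks.

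Your pathwise idea can be salvaged, but it needs a different justification: $K_T$ is a Borel functional of the path $(\brac_t)\ut$ alone, and for q.e.\ $\omega$ the path $\gamma=\brac(\omega)$ is absolutely continuous with derivative in $[\sigmad,\sigmau]$; taking the deterministic volatility $\sigma^\gamma:=\sqrt{\gamma'}$ one has $K_T=K_T(\omega)$ $P^{\sigma^\gamma}$-a.s., hence $K_T(\omega)=E^{P^{\sigma^\gamma}}[K_T]\le \esp{K_T}$. This yields $K_T\le \esp{K_T}$ q.s.\ and your range argument then goes through. The paper takes a different route for the upper bound: it discretises $\psi(\brac_\cdot)$ by $\psi^n(\brac_t)=\sum_i \psi(\brac_{t_i})\mathbb{I}_{[t_i,t_{i+1})}(t)$, uses the H\"older condition on $\psi$ to get $M_G^2$-convergence, and then computes $\esp{(\tfrac{\esp{K_T}}{2}-K_T^n)^2}$ by iterated conditioning, exploiting that each increment $\deltabr_{t_{i+1}}$ is maximally distributed to reduce everything to a finite-dimensional supremum over $(v_i)\in[\sigmad,\sigmau]^n$. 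The paper's route is more computational but self-contained within the stated toolkit; your route is shorter once the q.s.\ bound is properly argued, and it explains more transparently why the H\"older exponent plays only an integrability role.
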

\begin{proof}
As in Theorem~\ref{deterministiceta}, we start by applying the $G$-Jensen's inequality to obtain
\begin{align}
\notag&\;\esp{\left(c+\int_0^T\varphi_s d B_s+\int_0^T \psi(\brac_s)d\brac_s-2\int_0^TG(\psi(\brac_s))ds\right)^2}\\
\notag&\geq \esp{c+\int_0^T\varphi_s d B_s+\int_0^T \psi(\brac_s)d\brac_s-2\int_0^TG(\psi(\brac_s))ds}^2\vee\\
\notag&\quad \esp{-c-\int_0^T\varphi_s d B_s-\int_0^T \psi(\brac_s)d\brac_s+2\int_0^TG(\psi(\brac_s))ds}^2\\
\label{aggiunta}&=c^2\vee \left(\esp{K_T}-c\right)^2,
\end{align}
where we defined
\begin{equation}\label{notation}
\begin{split}
c:&= \esp{H}-V_0,\\
\varphi_t:&=\theta_t-\phi_tX_t,
\end{split}
\end{equation}
for all $t\in[0,T]$. The minimum of~\eqref{aggiunta} is attained when $c^\ast=\frac{\esp{K_T}}{2}$, and it is equal to $\left(\frac{\esp{K_T}}{2}\right)^2$. We conclude by showing that this value is attained by choosing $V^\ast_0=\frac{\esp{H}-\esp{-H}}{2}$ and $\phi_t^\ast X_t=\theta_t$. We then compute
\begin{equation}
\label{bracc}
\begin{split}
&E_G\Bigg[\Big(\frac{\esp{K_T}}{2}+\int_0^T \psi(\brac_s)d\brac_s-2\int_0^TG(\psi(\brac_s))ds\Big)^2\Bigg].
\end{split}
\end{equation}
In order to do so we use a discretization, noting that 

\begin{equation}
\label{conver}
\psi^n(\brac_t):=\sum_{i=0}^{n-1}\psi(\brac_{t_i})\mathbb{I}_{[t_i,t_{i+1})}(t)\stackrel{M^2_G(0,T)}{\longrightarrow}\psi(\brac_t)
\end{equation}

\noindent
where $t_i=\frac{T}{n}i$. In fact
\[
\begin{split}
&\int_0^T\esp{|\psi(\brac_t)-\psi^n(\brac_t)|^2}dt=\sum_{i=0}^{n-1}\int_{t_i}^{t_{i+1}}\esp{|\psi(\brac_t)-\psi^n(\brac_t)|^2}dt\\
&\leq \sum_{i=0}^{n-1}\int_{t_i}^{t_{i+1}}\esp{|\brac_t-\brac_{t_i}|^{2k}}dt=n\int_0^{t_1}\esp{\brac_t^{2k}}dt=
n\int_0^{t_1}t^{2k}dt\\
&=\frac{n}{2k}\left(\frac{T}{n}\right)^{2k+1}\stackrel{n\to\infty}{\longrightarrow}0,
\end{split}
\]
and similarly for the convergence of $G(\psi^n(\brac_t))$ to $G(\psi(\brac_t))$. The expression in~\eqref{bracc} is then the limit when $n$ tends to infinity of

\begin{align}
\notag &\esp{\Big(\frac{\esp{K_T}}{2}+\sum_{i=0}^{n-1} \psi(\brac_{t_i})\deltabr\ipl-2\sum_{i=0}^{n-1} G(\psi(\brac\ii))\Delta t_{i+1}\Big)^2}\\
\notag =&E_G\Bigg[\Big(\frac{\esp{K_T}}{2}+\sum_{i=0}^{n-1} \psi\left(\sum_{j=0}^{i}\deltabr_{t_j}\right)\deltabr\ipl+\\
\notag &\qquad\qquad\qquad\qquad\qquad-2\sum_{i=0}^{n-1} G\left(\psi\left(\sum_{j=0}^{i}\deltabr_{t_j}\right)\right)\Delta t_{i+1}\Big)^2\Bigg]\\
\notag =&E_G\Biggl[E_G\Bigg[\Big(\frac{\esp{K_T}}{2}+\sum_{i=0}^{n-1} \psi\left(\sum_{j=0}^{i}\deltabr_{t_j}\right)\deltabr\ipl+\\
\notag &\qquad\qquad\qquad\qquad\qquad-2\sum_{i=0}^{n-1} G\left(\psi\left(\sum_{j=0}^{i}\deltabr_{t_j}\right)\right)\Delta t_{i+1}\Big)^2\Bigg|\F_{t_{n-1}}\Bigg]\Biggl]\\
\notag =&E_G\Biggl[\sup_{\sigmad\leq v_n\leq\sigmau}\Big(\frac{\esp{K_T}}{2}+\sum_{i=0}^{n-2} \psi\left(\sum_{j=0}^{i}\deltabr_{t_j}\right)\deltabr\ipl+\\
\notag &\qquad+\psi\left(\sum_{j=0}^{n-1}\deltabr_{t_j}\right)v_n\Delta t_n-2\sum_{i=0}^{n-1} G\left(\psi\left(\sum_{j=0}^{i}\deltabr_{t_j}\right)\right)\Delta t_{i+1}\Big)^2\Biggl]
\end{align}
\begin{align}
\notag =&E_G\Biggl[E_G\Bigg[\sup_{\sigmad\leq v_n\leq\sigmau}\Big(\frac{\esp{K_T}}{2}+\sum_{i=0}^{n-2} \psi\left(\sum_{j=0}^{i}\deltabr_{t_j}\right)\deltabr\ipl+\\
\notag &+\psi\left(\sum_{j=0}^{n-1}\deltabr_{t_j}\right)v_n\Delta t_n-2\sum_{i=0}^{n-1} G\left(\psi\left(\sum_{j=0}^{i}\deltabr_{t_j}\right)\right)\Delta t_{i+1}\Big)^2\Bigg| \F_{t_{n-2}}\Bigg]\Biggl]\\
\label{maxx} =&E_G\Biggl[\sup_{\overset{\sigmad\leq v_n\leq\sigmau}{\sigmad\leq v_{n-1}\leq\sigmau}}\Big(\frac{\esp{K_T}}{2}+\sum_{i=0}^{n-3} \psi\left(\sum_{j=0}^{i}\deltabr_{t_j}\right)\deltabr\ipl+\\
\notag &+\psi\left(\sum_{j=0}^{n-2}\deltabr_{t_j}\right)v_{n-1}\Delta t_{n-1}+\psi\left(\sum_{j=0}^{n-2}\deltabr_{t_j}+v_{n-1}\Delta t_{n-1}\right)v_n\Delta t_n+\\
\notag &-2G\left(\psi\left(\sum_{j=0}^{n-2}\deltabr_{t_j}+v_{n-1}\Delta t_{n-1}\right)\right)\Delta t_{n}+\\
\notag &-2\sum_{i=0}^{n-2} G\left(\psi\left(\sum_{j=0}^{i}\deltabr_{t_j}\right)\right)\Delta t_{i+1}\Big)^2\Biggl],
\end{align}
where we have used that $\deltabr$ is maximally distributed. Proceeding by iteration,~\eqref{maxx} is equal to 

\begin{equation}
\label{sup}
\begin{split}
&\sup_{\stackrel{\sigmad\leq v_i\leq \sigmau}{i=1,\dots,n}} \Bigg(\frac{\esp{K_T}}{2}+\sum_{i=0}^{n-1} \psi\left(\sum_{j=0}^{i}v_j\Delta t_j\right)v_{i+1}\Delta t_{i+1}+\qquad\qquad\qquad\qquad\text{ }\\
&\qquad\qquad\qquad\qquad\qquad-2\sum_{i=0}^{n-1} G\left(\psi\left(\sum_{j=0}^{i}v_j\Delta t_j\right)\right)\Delta t_{i+1}\Bigg)^2.
\end{split}
\end{equation}

\noindent
The supremum~\eqref{sup}, being a quadratic function of $(v_i)_{i=1,\dots,n}$, is attained either when the term depending on $(v_i)_{i=1,\dots,n}$ is equal to its minimum, which is zero, or its maximum, which is equal to
\[
\esp{2\sum_{i=0}^{n-1} G(\psi(\brac\ii))\Delta t_{i+1}-\sum_{i=0}^{n-1} \psi(\brac_{t_i})\deltabr\ipl}.
\]
In both cases, as $n$ tends to infinity the value of~\eqref{sup} converges to $\left(\frac{\esp{K_T}}{2}\right)^2$ because of~\eqref{conver}. 
\end{proof}

\noindent
As the optimal mean variance portfolio $(V_0^\ast, \phi^\ast)$ for a claim $H$ provides, via $(-V_0^\ast, -\phi^\ast)$, the optimal solution for the hedging of $-H$, the investment strategy $(\phi^\ast_t)\ut$ would not always be equal to the process $(\theta_t)\ut$ coming from the $G$-martingale decomposition of $H$ as in Theorem~\ref{teo:decomposition}. The result of Theorem~\ref{particularcase} does not contradict this intuition.

\begin{oss}
Using Lemma~\ref{minus} it is not difficult to prove that for contingent claims of the type
\[
H=\esp{H}+\int_0^T\theta_sdB_s+\sum_{i=0}^{n-1}\left(\psi(\brac\ii)\Delta\brac\ipl-2G(\psi(\brac\ii))\Delta t_{i+1}\right),
\]
where $(\theta_t)\ut\in M_G^2(0,T)$ and $\psi$ is a real continuous function, the decomposition of $-H$ has the expression
\[
-H=\esp{-H}+\int_0^T(-\theta_s)dBs -\bar{K}_T,
\]
for a suitable random variable $\bar{K}_T\in L_G^1(\F_T)$.
\end{oss}

\noindent
It is possible to use the same argument of Remark \ref{detrmrem} to characterize the class of contingent claims whose representation \eqref{H} exhibits an $\eta$ given by a function with polynomial growth of $\langle B\rangle$. This set includes the family of Lipschitz function of $\langle B\rangle$. Theorem \ref{particularcase} can be used to hedge \emph{volatility swaps}, i.e.\ $H=\sqrt{\langle B\rangle_T}-K$ with $K\in\mathbb{R}_+$, and other volatility derivatives (we refer to \cite{lee} for more details on volatility derivatives). In fact, given a Lipschitz function $\Phi$, the claim $\Phi(\langle B\rangle_T)$ can be written as 
\[
\begin{split}
\Phi(\langle B\rangle_T)&=\esp{\Phi(\langle B\rangle_T)}+\int_0^T \partial_x u(s,\langle B\rangle_s)\langle B\rangle_s d\langle B\rangle_s\\
&\qquad\qquad\qquad-2\int_0^T G(\partial_x u(s,\langle B\rangle_s))\langle B\rangle_s ds,
\end{split}
\]
where $u(t,x)$ solves
\[
\begin{cases}
\partial_t u + 2G(x\partial_x u)=0,\\
u(T,x)=\Phi(x),
\end{cases}
\] 
as a consequence of the nonlinear Feynman-Kac formula for $G$-Brownian motion (see \cite{nonlinear}) and the $G$-It\^o formula (see \cite{Peng:Gexp}).

\subsection{Piecewise Constant $\eta$}
We now study the optimal mean-variance portfolio for a broader class of claims, incorporating mean and volatility uncertainty in the process $\eta$. We first consider 
$$
\eta_s=\sum_{i=0}^{n-1}\eta\ii\mathbb{I}_{(t_i,t_{i+1}]}(s),
$$ 
for $n\in\mathbb{N}$, where $\{t_i\}_{i=0}^{n}$ is a partition of $[0,T]$, i.e.\ $0=t_0\leq t_1\leq\dots\leq t_n=T$, and $\eta_{t_i}\in L_{ip}(\F_{t_i})$ for all $i\in\{0,\dots,n\}$. We will outline a recursive solution procedure, which we are able to solve for $n=2$. In the case of $n>2$ the proof of Theorem~\ref{final2} provides a recursive procedure, which can be used to find numerically the optimal solution (see \cite{jac}). Finally we provide bounds for the optimal terminal risk~\eqref{problemadef} in Section 5.\\
As a preliminary result we restrict ourselves to the study of claims which can be represented in the following way
\begin{equation}
\label{eq:step}
H=\esp{H}+\theta_{t_1}\Delta B_{t_{2}}+\eta_{t_{1}} \deltabr_{t_{2}}-2G(\eta_{t_{1}})\Delta t_{2},
\end{equation}
where $0\leq t_1<t_{2}\leq T$ , $\theta_{t_1}\in L_G^2(\F_{t_1})$, $\Delta B_{t_2}:=B_{t_2}-B_{t_1}$ and similarly for $\deltabr_{t_2}$ and $\Delta t_{2}$. We choose accordingly the class of investment strategies $\phi$ of the form
\[
\phi_t=\phi_{t_1}\mathbb{I}_{(t_1,t_{2}]},
\]
where $\phi_{t_1}\in L_G^2(\F_{t_1})$. If we denote  
\[
\begin{split}
c:=\;& \esp{H}-V_0,\\
\varphi_t:=\;&\theta_t-\phi_tX_t,
\end{split}
\]
the risk functional~\eqref{eq:mean-variance} becomes
\begin{equation}
\label{eq:riskrexpr}
\begin{split}
&\esp{\left(\esp{H}-V_0+(\theta_{t_1}-\phi_{t_1} X_{t_1})\Delta B_{t_{2}}+\eta_{t_{1}} \deltabr_{t_{2}}-2G(\eta_{t_{1}})\Delta t_{2}\right)^2}\\
=&\esp{\left(c+\varphi_{t_1}\Delta B_{t_2}+\eta_{t_{1}} \deltabr_{t_{2}}-2G(\eta_{t_{1}})\Delta t_{2}\right)^2}\\
=&E_G\Big[\left(c+\eta_{t_1}\deltabr_{t_2}-2G(\eta_{t_1})\Delta t_{2}\right)^2+\varphi_{t_1}^2\Delta B_{t_2}^2+\\
&\qquad\qquad\qquad\qquad\quad+2\varphi_{t_1}\Delta B_{t_2}\eta_{t_1}\deltabr_{t_2}\Big],
\end{split}
\end{equation}
where we used Proposition~\ref{prop:useful} in the last step.

\begin{teo}
\label{1step}
Consider a claim $H\in L_G^{2+\epsilon}(\F_T)$ with decomposition as in~\eqref{eq:step}. The optimal mean-variance portfolio is given by $(V_0^\ast,\phi^\ast)$, where
\[
\phi^\ast X=\theta
\] 
and $V_0^\ast$ solves 
\begin{equation}
\label{52}
\inf_{V_0} \esp{(\esp{H}-V_0)^2\vee (\esp{H}-V_0-(\sigmau-\sigmad)\Delta t_{2}|\eta_{t_1}|)^2}.
\end{equation}
\end{teo}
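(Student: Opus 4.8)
The plan is to bound $J_0(V_0,\phi)$ from below by a quantity that does not depend on $\phi$, via the $G$-Jensen inequality for $x\mapsto x^2$ (Lemma~\ref{quadrato} together with Proposition~\ref{jensen}), and then to check that the strategy with $\phi^\ast X=\theta$ attains this bound. With the notation $c=\esp{H}-V_0$ and $\varphi_{t_1}=\theta_{t_1}-\phi_{t_1}X_{t_1}$ fixed before the theorem, \eqref{eq:riskrexpr} gives $J_0(V_0,\phi)=\esp{Z^2}=\esp{\condesp[\F_{t_1}]{Z^2}}$ by the tower property, where
\[
Z:=c+\varphi_{t_1}\Delta B_{t_2}+\eta_{t_1}\deltabr_{t_2}-2G(\eta_{t_1})\Delta t_2\in L^2_G(\F_T).
\]

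First I would compute $\condesp[\F_{t_1}]{Z}$ and $\condesp[\F_{t_1}]{-Z}$. Since $\varphi_{t_1}\in L^2_G(\F_{t_1})$ and $\Delta B_{t_2}$ is a symmetric increment independent of $\F_{t_1}$, one has $\condesp[\F_{t_1}]{\pm\varphi_{t_1}\Delta B_{t_2}}=0$, so the term $\varphi_{t_1}\Delta B_{t_2}$ drops out of both conditional expectations by the conditional form of Proposition~\ref{prop:useful}, used exactly as in the proof of Lemma~\ref{minus}. As $\deltabr_{t_2}$ is independent of $\F_{t_1}$ and maximally distributed on $[\sigmad\Delta t_2,\sigmau\Delta t_2]$, a case distinction on the sign of $\eta_{t_1}$ yields $\condesp[\F_{t_1}]{\eta_{t_1}\deltabr_{t_2}}=2G(\eta_{t_1})\Delta t_2$ and $\condesp[\F_{t_1}]{-\eta_{t_1}\deltabr_{t_2}}=2G(-\eta_{t_1})\Delta t_2$, whence, using $2G(\eta_{t_1})+2G(-\eta_{t_1})=(\sigmau-\sigmad)|\eta_{t_1}|$ (the identity exploited in Lemma~\ref{minus}),
\[
\condesp[\F_{t_1}]{Z}=c,\qquad \condesp[\F_{t_1}]{-Z}=-c+(\sigmau-\sigmad)|\eta_{t_1}|\Delta t_2.
\]
Applying $G$-Jensen conditionally to $Z$ and to $-Z$ and using $Z^2=(-Z)^2$ gives $\condesp[\F_{t_1}]{Z^2}\geq c^2\vee(c-(\sigmau-\sigmad)|\eta_{t_1}|\Delta t_2)^2$; taking $E_G$,
\[
J_0(V_0,\phi)\ \geq\ \esp{c^2\vee\big(c-(\sigmau-\sigmad)|\eta_{t_1}|\Delta t_2\big)^2}
\]
uniformly in $\phi$, so $\inf_{(V_0,\phi)}J_0(V_0,\phi)$ is bounded below by the value of \eqref{52}.

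For the reverse inequality I would take $\phi^\ast X=\theta$, i.e.\ $\varphi_{t_1}=0$, so that $Z=c+\eta_{t_1}\deltabr_{t_2}-2G(\eta_{t_1})\Delta t_2$. Using again the independence and maximal distribution of $\deltabr_{t_2}$,
\[
\condesp[\F_{t_1}]{Z^2}=\sup_{\sigmad\Delta t_2\le w\le\sigmau\Delta t_2}\big(c+\eta_{t_1}w-2G(\eta_{t_1})\Delta t_2\big)^2 .
\]
The integrand is the square of an affine function of $w$, hence its maximum over the interval is attained at an endpoint; evaluating both endpoints (for either sign of $\eta_{t_1}$, one gives $c$ and the other $c-(\sigmau-\sigmad)|\eta_{t_1}|\Delta t_2$) shows the supremum equals $c^2\vee(c-(\sigmau-\sigmad)|\eta_{t_1}|\Delta t_2)^2$. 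Taking $E_G$ gives $J_0(V_0,\phi^\ast)=\esp{c^2\vee(c-(\sigmau-\sigmad)|\eta_{t_1}|\Delta t_2)^2}$, matching the lower bound, so $\phi^\ast$ is optimal for every fixed $V_0$ and
\[
\inf_{(V_0,\phi)}J_0(V_0,\phi)=\inf_{V_0}\esp{(\esp{H}-V_0)^2\vee(\esp{H}-V_0-(\sigmau-\sigmad)\Delta t_2|\eta_{t_1}|)^2},
\]
i.e.\ $V_0^\ast$ solves \eqref{52}. The pair $(V_0^\ast,\phi^\ast)$ is admissible since $\phi^\ast_{t_1}=\theta_{t_1}/X_{t_1}\in L^2_G(\F_{t_1})$ ($X>0$ q.s.) and $\esp{(\int_0^T\phi^\ast_sX_sdB_s)^2}=\esp{\theta_{t_1}^2(\Delta B_{t_2})^2}<\infty$, and the infimum in \eqref{52} is attained because the integrand is convex and coercive in $V_0$ and the sublinear $E_G$ preserves both properties.

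The main obstacle is the careful bookkeeping in computing $\condesp[\F_{t_1}]{\pm Z}$: one must recognise that the non-symmetric increment $\eta_{t_1}\deltabr_{t_2}-2G(\eta_{t_1})\Delta t_2$ has vanishing conditional $G$-expectation whereas its negative does not — precisely the buyer/seller asymmetry of the $G$-martingale decomposition quantified in Lemma~\ref{minus} — while the symmetric Brownian increment $\varphi_{t_1}\Delta B_{t_2}$ is killed by every conditional $G$-expectation appearing in the estimate, which is the structural reason the optimal hedge equals $\theta/X$, exactly as in Theorems~\ref{deterministiceta} and~\ref{particularcase}. The remaining steps — the maximal-distribution identity and the endpoint evaluation of the quadratic — are routine once the signs are handled with care.
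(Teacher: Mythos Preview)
Your argument is correct and follows essentially the same route as the paper: a conditional $G$-Jensen bound on $\condesp[\F_{t_1}]{Z^2}$ using both $\condesp[\F_{t_1}]{Z}$ and $\condesp[\F_{t_1}]{-Z}$, combined with the maximal-distribution computation at $\varphi_{t_1}=0$ to show the bound is attained. The only difference is cosmetic---the paper first evaluates the risk at $\varphi_{t_1}=0$ and then derives the Jensen lower bound, whereas you reverse the order---and your additional remarks on admissibility and attainment of the infimum over $V_0$ are a welcome tidying-up.
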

\begin{proof}
\noindent
We start by computing

\begin{equation}
\label{eq:firstst}
\begin{split}
&\esp{\left(c+\eta_{t_1}\deltabr_{t_2}-2G(\eta_{t_1})\Delta t_{2}\right)^2}\\
=&\esp{\condesp[\F_{t_1}]{\left(c+\eta_{t_1}\deltabr_{t_2}-2G(\eta_{t_1})\Delta t_{2}\right)^2}}\\
=&\esp{f(\eta_{t_1})},
\end{split}
\end{equation}
where
\[
f(x)=\esp{\left(c+x\deltabr_{t_2}-2G(x)\Delta t_{2}\right)^2}.
\]
Using the fact that $\brac$ is maximally distributed,
\[
\begin{split}
f(x)&=\sup_{\sigmad\leq v\leq \sigmau} \left(c+xv\Delta t_{2}-2G(x)\Delta t_{2}\right)^2\\
&=\left(c+\sigmau x\Delta t_{2}-2G(x)\Delta t_{2}\right)^2\vee \left(c+\sigmad x\Delta t_{2}-2G(x)\Delta t_{2}\right)^2\\
&=c^2 \vee \left(c-(\sigmau-\sigmad)\Delta t_{2}|x|\right)^2,
\end{split}
\]
so that~\eqref{eq:firstst} becomes equal to 
\[
\begin{split}
\esp{c^2 \vee \left(c-(\sigmau-\sigmad)\Delta t_{2}|\eta_{t_1}|\right)^2}.
\end{split}
\]

\noindent
This means that, in the time interval $[t_1,t_{2}]$, the worst case scenario sets the volatility constantly equal to $\sigmau\Delta t_{2}$ when 
\[
c^2\geq \left(c-(\sigmau-\sigmad)\Delta t_{2}|\eta_{t_1}|\right)^2,
\]
which is equivalent to 
\[
c\geq \frac{(\sigmau-\sigmad)\Delta t_{2}|\eta_{t_1}|}{2},
\]
or to $\sigmad\Delta t_{2}$ if
\[
c\leq \frac{(\sigmau-\sigmad)\Delta t_{2}|\eta_{t_1}|}{2}.
\]
Hence it follows that, by Proposition~\ref{jensen}, for every $c\in(0,\esp{H}+\esp{-H})$
\begin{equation}
\label{eq:sp}
\begin{split}
&\inf_{\varphi}\esp{\left(c+\varphi_{t_1}\Delta B_{t_{2}}+\eta_{t_{1}} \deltabr_{t_{2}}-2G(\eta_{t_{1}})\Delta t_{2}\right)^2}\\
=&\inf_{\varphi}\esp{\condesp[\F_{t_1}]{\left(c+\varphi_{t_1}\Delta B_{t_{2}}+\eta_{t_{1}} \deltabr_{t_{2}}-2G(\eta_{t_{1}})\Delta t_{2}\right)^2}}\\
\geq &\inf_{\varphi} E_G\Big[\condesp[\F_{t_1}]{c+\varphi_{t_1}\Delta B_{t_{2}}+\eta_{t_{1}} \deltabr_{t_{2}}-2G(\eta_{t_{1}})\Delta t_{2}}^2\vee\\
&\quad\condesp[\F_{t_1}]{-c-\varphi_{t_1}\Delta B_{t_{2}}-\eta_{t_{1}} \deltabr_{t_{2}}+2G(\eta_{t_{1}})\Delta t_{2}}^2\Big]\\
&= \esp{c^2 \vee \left(c-(\sigmau-\sigmad)\Delta t_{2}|\eta_{t_1}|\right)^2}.
\end{split}
\end{equation} 
This allows us to conclude, as the lower bound is attained by choosing $\varphi_{t_1}=0$ and $V_0^\ast$ is the solution of~\eqref{52}.
\end{proof}

\noindent
Theorem~\ref{1step} shows that the determination of the optimal initial wealth can be more involved. We now show with a counterexample that the link between $\esp{K_T}$ and $V_0^\ast$ stated in Remark~\ref{determ} does not hold for general $\eta$. 
\begin{prop}
Let $H$ be of the form
\[
H=\esp{H}+\theta_{t_1}\Delta B_{t_{2}}+\eta_{t_{1}} \deltabr_{t_{2}}-2G(\eta_{t_{1}})\Delta t_{2},
\]
where $\theta_{t_1}\in L_G^2(\F_{t_1})$ and $\eta_{t_1}=e^{B_{t_1}}$. The optimal initial wealth of the mean-variance portfolio is different from 
\[
V_0^\ast=\frac{\esp{H}-\esp{-H}}{2}.
\]
\end{prop}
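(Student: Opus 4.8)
The plan is to invoke Theorem~\ref{1step} and then analyse the resulting scalar minimisation. Since $H$ is of the form~\eqref{eq:step} with $\eta_{t_1}=e^{B_{t_1}}\in\bigcap_{p\ge1}L_G^p(\F_{t_1})$ (so that indeed $H\in L_G^{2+\epsilon}(\F_T)$), Theorem~\ref{1step} gives $\phi^\ast X=\theta$ and identifies $V_0^\ast$ as a minimiser of~\eqref{52}. Writing $c:=\esp{H}-V_0$, $a:=(\sigmau-\sigmad)\Delta t_2>0$ and using $|\eta_{t_1}|=e^{B_{t_1}}$, problem~\eqref{52} becomes $\inf_{c\in\R}g(c)$ with $g(c):=\esp{c^2\vee(c-ae^{B_{t_1}})^2}$. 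As a $G$-expectation of a pointwise maximum of two convex functions of $c$, $g$ is convex and coercive, so it has a minimiser $c^\ast$ and $V_0^\ast=\esp{H}-c^\ast$. By Lemma~\ref{boundv0} one has $c^\ast\in[0,\esp{K_T}]$ with $\esp{K_T}=\esp{H}+\esp{-H}$, and the candidate identity $V_0^\ast=\tfrac12(\esp{H}-\esp{-H})$ is equivalent to $c^\ast=\tfrac12\esp{K_T}=:c_0$. Hence it suffices to show $c^\ast\neq c_0$.

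\textbf{Step 1: computing $c_0$.} First I would condition on $\F_{t_1}$ and use that $\deltabr_{t_2}$ is maximally distributed on $[\sigmad\Delta t_2,\sigmau\Delta t_2]$, independent of $\F_{t_1}$, together with $2G(x)=\sigmau x$ for $x\ge 0$ (so $2G(e^{B_{t_1}})=\sigmau e^{B_{t_1}}$ since $e^{B_{t_1}}>0$ q.s.), to get $\condesp[\F_{t_1}]{K_T}=-e^{B_{t_1}}\sigmad\Delta t_2+\sigmau e^{B_{t_1}}\Delta t_2=a\,e^{B_{t_1}}$. Since $x\mapsto e^x$ is convex, the supremum over $\Pro$ defining $\esp{e^{B_{t_1}}}$ is attained at $P^{\overline\sigma}$, so $\esp{K_T}=a\,\esp{e^{B_{t_1}}}=a\,E^{P^{\overline\sigma}}[e^{B_{t_1}}]=a\,e^{\frac12\sigmau t_1}$, and therefore $c_0=\tfrac12 E[Y]$ with $Y:=a\,e^{\overline\sigma\sqrt{t_1}\,Z}$ and $Z\sim N(0,1)$ under $P$.

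\textbf{Step 2: the sign of $g'$ at $c_0$.} For $c>0$ the integrand $y\mapsto c^2\vee(c-y)^2$ equals $c^2$ on $[0,2c]$ and $(c-y)^2$ on $(2c,\infty)$, hence is convex and non-decreasing on $[0,\infty)$; composed with the convex increasing map $B_{t_1}\mapsto ae^{B_{t_1}}$ it is a convex function of $B_{t_1}$, so near $c_0$ one has $g(c)=E^{P^{\overline\sigma}}\big[c^2\vee(c-Y)^2\big]=E\big[c^2\vee(c-Y)^2\big]$. Differentiating under this (linear) expectation — legitimate since the $c$-derivative is dominated by $2|c|+2Y$ and $E[Y]<\infty$ — gives $g'(c)=2c-2E[Y\,\mathbb{I}_{\{Y>2c\}}]$. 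At $c=c_0$ one has $\{Y>2c_0\}=\{Y>E[Y]\}=\{Z>\tfrac12\overline\sigma\sqrt{t_1}\}$, and the Gaussian identity $E[e^{\sigma Z}f(Z)]=e^{\sigma^2/2}E[f(Z+\sigma)]$ with $\sigma:=\overline\sigma\sqrt{t_1}$ yields $E[Y\,\mathbb{I}_{\{Y>E[Y]\}}]=E[Y]\,P\!\left(Z<\tfrac12\overline\sigma\sqrt{t_1}\right)$. Hence $g'(c_0)=E[Y]\big(1-2\,P(Z<\tfrac12\overline\sigma\sqrt{t_1})\big)<0$, because $P(Z<\tfrac12\overline\sigma\sqrt{t_1})>\tfrac12$ whenever $\overline\sigma\sqrt{t_1}>0$ (which is exactly the regime in which $\eta_{t_1}$ is genuinely random). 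By convexity of $g$ this forces every minimiser to satisfy $c^\ast>c_0$, whence $V_0^\ast=\esp{H}-c^\ast<\tfrac12(\esp{H}-\esp{-H})$, which proves the assertion.

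\textbf{Main obstacle.} The key point, and the one most in need of care, is the reduction of the $G$-expectations above to expectations under the extremal prior $P^{\overline\sigma}$, which relies on identifying the relevant integrands as convex functions of $B_{t_1}$ (a $G$-convexity argument in the spirit of Proposition~\ref{jensen} and Lemma~\ref{quadrato}); once this is in place the computation is completely explicit. A secondary technical nuisance is that $e^{B_{t_1}}\notin C_{l,Lip}(\R)$, so the $G$-calculus statements cannot be applied verbatim; this I would handle by approximating $e^{x}$ by $C_{l,Lip}$ functions with bounded derivatives, using $e^{B_{t_1}}\in\bigcap_{p\ge1}L_G^p(\F_{t_1})$, and passing to the limit in the identities $2G(x)=\sigmau x$ on $[0,\infty)$ and $\esp{\varphi(B_{t_1})}=E^{P^{\overline\sigma}}[\varphi(B_{t_1})]$ for convex $\varphi$ of at most exponential growth.
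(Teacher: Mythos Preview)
Your proposal is correct and follows essentially the same route as the paper: both reduce, via Theorem~\ref{1step}, to the scalar problem~\eqref{52}, use convexity of $B_{t_1}\mapsto c^2\vee(c-ae^{B_{t_1}})^2$ to replace $E_G$ by $E^{P^{\overline\sigma}}$, and then show that the derivative of the resulting function at $c_0=\tfrac12\esp{K_T}$ is strictly negative. Your use of the Gaussian shift identity $E[e^{\sigma Z}f(Z)]=e^{\sigma^2/2}E[f(Z+\sigma)]$ streamlines the integral computation that the paper carries out by hand, and your extra observation that $g$ is convex even yields the sharper conclusion $c^\ast>c_0$, but the underlying argument is the same.
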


\begin{proof}
Let us first compute $\frac{\esp{H}+\esp{-H}}{2}$. By conditioning and using some results on the expectation of convex functions of the increments of the $G$-Brownian motion (see Proposition 11 in~\cite{Peng:Gexp}), we obtain
\[
\begin{split}
\esp{H}+\esp{-H}&=\esp{2G(e^{B_{t_1}})\Delta t_{2}-e^{B_{t_1}} \deltabr_{t_{2}}}\\
&=\esp{(\sigmau-\sigmad)\Delta t_{2}e^{B_{t_1}} }\\
&=E_P[(\sigmau-\sigmad)\Delta t_{2}e^{W_{t_1}\overline{\sigma}}]\\
&=(\sigmau-\sigmad)\Delta t_{2}e^{\sigmau t_1/2},
\end{split}
\]
where $(W_t)\ut$ is a standard Brownian motion under $P$. We now focus on the minimization over $c$ of 
\[
\begin{split}
H(c):=&\esp{c^2 \vee \left(c-(\sigmau-\sigmad)\Delta t_{2}e^{B_{t_1}}\right)^2}\\
=&E_P\left[c^2 \vee \left(c-(\sigmau-\sigmad)\Delta t_{2}e^{W_{t_1}\overline{\sigma}}\right)^2\right]\\
=&E_P\left[\left(\left(e^{W_{t_1}\overline{\sigma}}\Delta t_{2}(\sigmau-\sigmad)-c\right)^2-c^2\right)^+\right]+c^2\\
=&c^2+E_P\left[e^{W_{t_1}\overline{\sigma}}\Delta t_{2}(\sigmau-\sigmad)\left(e^{W_{t_1}\overline{\sigma}}\Delta t_{2}(\sigmau-\sigmad)-2c\right)^+\right]\\
=&c^2+E_P\left[e^{N\sqrt{t_1}\overline{\sigma}}\Delta t_{2}(\sigmau-\sigmad)\left(e^{N\sqrt{t_1}\overline{\sigma}}\Delta t_{2}(\sigmau-\sigmad)-2c\right)^+\right],
\end{split}
\]
where $N\sim \mathcal{N}(0,1)$ and we have used that 
\[
c^2 \vee \left(e^{B_{t_1}}\Delta t_{2}(\sigmau-\sigmad)-c\right)^2
\]
is a convex function of $B_{t_1}$. Let $y:=(\sigmau-\sigmad)\Delta t_{2}$ and 
\[
\begin{split}
A(x):&=\{x\in\R:e^{\overline{\sigma}\sqrt{t_{1}}x}>\frac{2c}{y}\}\\
&=\left\{x\in\R: x>\frac{\log\left(\frac{2c}{y}\right)}{\overline{\sigma}\sqrt{t_{1}}}\right\}\\
&=\left\{x\in\R: x>g(c)\right\},
\end{split}
\]
where $g(c):=\frac{\log\left(\frac{2c}{y}\right)}{\overline{\sigma}\sqrt{t_{1}}}$. With these notations $H(c)$ can be written as
\begin{align*}
H(c)=&c^2+E_P\left[e^{2\overline{\sigma}N\sqrt{t_{1}}}y^2\mathbb{I}_{A(N)}\right]-2cyE_P\left[e^{\overline{\sigma}
\sqrt{t_{1}}N}\mathbb{I}_{A(N)}\right]\\
=&c^2+y^2\int_{x>g(c)}e^{2\overline{\sigma}\sqrt{t_{1}}x}\frac{1}{\sqrt{2\pi}}e^{-\frac{x^2}{2}}dx-2cy\int_{x>g(c)}e^{\overline{\sigma}\sqrt{t_{1}}x}\frac{1}{\sqrt{2\pi}}e^{-\frac{x^2}{2}}dx.
\end{align*}

\noindent
We differentiate with respect to $c$ to find the stationary points:
\begin{align}
\notag H^\prime(c)&=2c-y^2e^{2\overline{\sigma}\sqrt{t_{1}}g(c)}\frac{1}{\sqrt{2\pi}}e^{-\frac{g(c)^2}{2}}g'(c)+2cye^{\overline{\sigma}\sqrt{t_{1}}g(c)}\frac{1}{\sqrt{2\pi}}e^{-\frac{g(c)^2}{2}}g'(c)+\\
\label{53} &-2y\int_{x>g(c)}\frac{1}{\sqrt{2\pi}}e^{\overline{\sigma}\sqrt{t_{1}}x-\frac{x^2}{2}}dx.
\end{align}

\noindent
We now substitute $c^\ast=\frac{\esp{H}+\esp{-H}}{2}=\frac{ye^{\frac{\sigmau t_1}{2}}}{2}$ into~\eqref{53} to see if it is a possible point of minimum. We obtain
\[
g(c^\ast)=\frac{\log\left(\frac{ye^{\frac{\sigmau t_1}{2}}}{y}\right)}{\overline{\sigma}\sqrt{t_{1}}}=\frac{1}{2}\overline{\sigma}\sqrt{t_{1}},
\]
and therefore
\begin{align*}
H^\prime \left(\frac{ye^{\frac{\sigmau t_1}{2}}}{2}\right)&=\;ye^{\frac{1}{2}\sigmau t_1}-y^2e^{2\overline{\sigma} \sqrt{t_1}\frac{1}{2}\overline{\sigma}\sqrt{t_1}}\frac{1}{\sqrt{2\pi}}e^{-\frac{g(c^\ast)^2}{2}}g^\prime(c^\ast)+\\
&\quad+2\frac{ye^{\frac{1}{2}\sigmau t_1}}{2}ye^{\overline{\sigma}\sqrt{t_1}\frac{1}{2}\overline{\sigma} \sqrt{t_1}}\frac{1}{\sqrt{2\pi}}e^{-\frac{g(c^\ast)^2}{2}}g^\prime(c^\ast)+\\
&\quad-2y\int_{x>\frac{1}{2}\overline{\sigma}\sqrt{t_{1}}}\frac{1}{\sqrt{2\pi}}e^{-\frac{1}{2}(x^2-2\overline{\sigma}\sqrt{t_{1}}x)}dx\\
&=y\left(e^{\frac{1}{2}\sigmau t_1}-2\int_{x>\frac{1}{2}\overline{\sigma}\sqrt{t_{1}}}\frac{1}{\sqrt{2\pi}}e^{-\frac{1}{2}(x^2-2\overline{\sigma}\sqrt{t_{1}}x)}dx\right)\\
&=y\left(e^{\frac{1}{2}\sigmau t_1}-2e^{\frac{1}{2}\sigmau t_1}\int_{z>-\frac{1}{2}\overline{\sigma}\sqrt{t_1}}\frac{1}{\sqrt{2\pi}}e^{-\frac{z^2}{2}}dz\right)\\
&=y\Bigg(e^{\frac{1}{2}\sigmau t_1}-2e^{\frac{1}{2}\sigmau t_1}\int_0^\infty\frac{1}{\sqrt{2\pi}}e^{-\frac{z^2}{2}}dz+\\
&\qquad\qquad-2e^{\frac{1}{2}\sigmau t_1}\int_{-\frac{1}{2}\overline{\sigma}\sqrt{t_1}}^0\frac{1}{\sqrt{2\pi}}e^{-\frac{z^2}{2}}dz\Bigg)\\
&=-2ye^{\frac{1}{2}\sigmau t_1}\int_{-\frac{1}{2}\overline{\sigma}\sqrt{t_1}}^0\frac{1}{\sqrt{2\pi}}e^{-\frac{z^2}{2}}dz,
\end{align*}
which is different from zero.
\end{proof}

\noindent
We now derive the optimal initial wealth for other particular cases, as we do in the following proposition. This result will constitute the first step of our recursive scheme. We remark that $\eta$ will now exhibit volatility uncertainty, which was excluded from the results in Sections 4.1 and 4.2, while the process $\theta\in M_G^2(0,T)$ is completely general.
\begin{prop}
\label{final1}
Consider a claim $H$ of the form 
\[
H=\esp{H}+\int_0^{t_2}\theta_{s} dB_{s}+\eta_{t_1}\deltabr_{t_2}-2G(\eta_{t_1})\Delta t_{2},
\]
where $0= t_{0}<t_1<t_{2}=T$, $(\theta_{s})_{s\in[0,t_2]}\in M_G^2(0,t_2)$, $\eta_{t_1}\in L_G^2(\F_{t_1})$ and 
\begin{equation}
\label{absvalue1}
|\eta_{t_1}|=\esp{|\eta_{t_1}|}+\int_0^{t_1}\mu_{s} dB_{s},
\end{equation}
for a certain process $(\mu_{s})_{s\in[0,t_1]}\in M_G^2(0,t_1) $. The optimal mean-variance portfolio is given by 
\[
X_t\phi_t^\ast=\left(\theta_{t}-\frac{\mu_{t}(\sigmau-\sigmad)\Delta t_2}{2}\right)\mathbb{I}_{(t_0,t_1]}(t)+\theta_{t}\mathbb{I}_{(t_1,t_2]}(t)
\]
for $t\in[0,T]$ and
\[
V_0^\ast=\frac{\esp{H}-\esp{-H}}{2}.
\]
\end{prop}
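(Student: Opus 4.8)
The plan is to sandwich the terminal risk $J_0(V_0,\phi)$ between a lower bound holding for every $(V_0,\phi)\in\R_+\times\Phi$ and the value attained by the candidate $(V_0^\ast,\phi^\ast)$, and to check that the two coincide. Throughout put $a:=\tfrac{(\sigmau-\sigmad)\Delta t_2}{2}$, so that the asserted strategy reads $X\phi^\ast=(\theta-a\mu)\mathbb{I}_{(t_0,t_1]}+\theta\,\mathbb{I}_{(t_1,t_2]}$. The ingredients are, exactly as in the proofs of Theorems~\ref{deterministiceta},~\ref{particularcase} and~\ref{1step}, the $G$-Jensen inequality (Proposition~\ref{jensen}) applied to the $G$-convex function $x\mapsto x^2$ (Lemma~\ref{quadrato}), Proposition~\ref{prop:useful}, and the fact that conditionally on $\F_{t_1}$ the increment $\deltabr_{t_2}$ is maximally distributed on $[\sigmad\Delta t_2,\sigmau\Delta t_2]$ while $\eta_{t_1}$ is frozen. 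First I record that, conditioning $-K_T=\eta_{t_1}\deltabr_{t_2}-2G(\eta_{t_1})\Delta t_2$ on $\F_{t_1}$ and using maximal distribution, $\condesp[\F_{t_1}]{\eta_{t_1}\deltabr_{t_2}}=2G(\eta_{t_1})\Delta t_2$ and $\condesp[\F_{t_1}]{-\eta_{t_1}\deltabr_{t_2}}=2G(-\eta_{t_1})\Delta t_2$; since $2G(x)+2G(-x)=|x|(\sigmau-\sigmad)$ this gives $\esp{K_T}=(\sigmau-\sigmad)\Delta t_2\,\esp{|\eta_{t_1}|}$, and because $\esp{K_T}=\esp{H}+\esp{-H}$ (as in Remark~\ref{determ}) the candidate initial wealth satisfies $\esp{H}-V_0^\ast=\tfrac12\esp{K_T}=a\,\esp{|\eta_{t_1}|}$.

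For the lower bound, fix $(V_0,\phi)\in\R_+\times\Phi$ and split the hedging error at $t_1$: set $c_1:=\esp{H}-V_0+\int_0^{t_1}(\theta_s-\phi_sX_s)\,dB_s\in L_G^2(\F_{t_1})$ and $D_2:=\int_{t_1}^{t_2}(\theta_s-\phi_sX_s)\,dB_s$, so that $H-V_T(V_0,\phi)=c_1+D_2+\eta_{t_1}\deltabr_{t_2}-2G(\eta_{t_1})\Delta t_2=:Y$. Applying Lemma~\ref{quadrato} together with the conditional form of Proposition~\ref{jensen} to both $Y$ and $-Y$ yields $\condesp[\F_{t_1}]{Y^2}\ge(\condesp[\F_{t_1}]{Y})^2\vee(\condesp[\F_{t_1}]{-Y})^2$. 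Now $D_2$, being a stochastic integral in $B$ over $(t_1,t_2]$, is a symmetric $G$-martingale increment, so $\condesp[\F_{t_1}]{D_2}=-\condesp[\F_{t_1}]{-D_2}=0$, and Proposition~\ref{prop:useful} lets me discard it; the identities recorded above then give $\condesp[\F_{t_1}]{Y}=c_1$ and $\condesp[\F_{t_1}]{-Y}=2a|\eta_{t_1}|-c_1$. Hence $\condesp[\F_{t_1}]{Y^2}\ge c_1^2\vee(2a|\eta_{t_1}|-c_1)^2\ge a^2\eta_{t_1}^2$ quasi surely, the last step being the elementary pointwise bound $x^2\vee(2b-x)^2\ge b^2$. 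Taking the unconditional $G$-expectation and using the tower property gives $J_0(V_0,\phi)\ge a^2\,\esp{\eta_{t_1}^2}$ for every admissible $(V_0,\phi)$.

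To verify that $(V_0^\ast,\phi^\ast)$ attains $a^2\esp{\eta_{t_1}^2}$, note that for this pair $D_2=0$ while $\int_0^{t_1}(\theta_s-\phi^\ast_sX_s)\,dB_s=a\int_0^{t_1}\mu_s\,dB_s=a\big(|\eta_{t_1}|-\esp{|\eta_{t_1}|}\big)$ by~\eqref{absvalue1}, so that $c_1=(\esp{H}-V_0^\ast)+a\big(|\eta_{t_1}|-\esp{|\eta_{t_1}|}\big)=a|\eta_{t_1}|$. Thus $H-V_T(V_0^\ast,\phi^\ast)=a|\eta_{t_1}|+\eta_{t_1}\deltabr_{t_2}-2G(\eta_{t_1})\Delta t_2$, which, conditionally on $\F_{t_1}$, is an affine function of $\deltabr_{t_2}$ taking the values $a|\eta_{t_1}|$ and $-a|\eta_{t_1}|$ at $\deltabr_{t_2}=\sigmau\Delta t_2$ and $\deltabr_{t_2}=\sigmad\Delta t_2$ (in one order or the other, according to the sign of $\eta_{t_1}$); maximal distribution then gives $\condesp[\F_{t_1}]{(H-V_T(V_0^\ast,\phi^\ast))^2}=a^2\eta_{t_1}^2$, hence $J_0(V_0^\ast,\phi^\ast)=a^2\esp{\eta_{t_1}^2}$, matching the lower bound. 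Admissibility of $\phi^\ast$ is immediate since $\theta\in M_G^2(0,t_2)$ and $\mu\in M_G^2(0,t_1)$ force $\phi^\ast X\in M_G^2(0,T)$, and $X>0$ quasi surely.

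The main obstacle is the lower bound. The quantity $c_1$ is a genuine $\F_{t_1}$-measurable random variable: it ranges only over $\esp{H}-V_0$ plus a $G$-stochastic integral over $[0,t_1]$, not over all of $L_G^2(\F_{t_1})$, so it cannot be optimized freely as a scalar, and only the pointwise estimate $x^2\vee(2b-x)^2\ge b^2$ is at hand. This is precisely why the optimal risk is $a^2\esp{\eta_{t_1}^2}\ge\big(\tfrac12\esp{K_T}\big)^2$, rather than $\big(\tfrac12\esp{K_T}\big)^2$ as in the deterministic and maximally distributed cases. One must also justify removing $D_2$ via Proposition~\ref{prop:useful} for an arbitrary $\phi\in\Phi$, which rests on the conditional symmetry of the $G$-stochastic integral (reducing, if necessary, to $\phi X\in M_G^2(0,T)$ along the lines of Lemma~\ref{boundphi}); the remaining manipulations are routine computations with the maximal distribution of $\deltabr_{t_2}$.
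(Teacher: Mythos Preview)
Your proof is correct and follows essentially the same route as the paper's. Both arguments condition on $\F_{t_1}$, apply the $G$-Jensen inequality for $x\mapsto x^2$ to $Y$ and $-Y$, use Proposition~\ref{prop:useful} to discard the symmetric integral on $(t_1,t_2]$, and arrive at the conditional lower bound $c_1^2\vee(2a|\eta_{t_1}|-c_1)^2$. The only cosmetic difference is that the paper performs the affine change of variables $(c,\varphi)\mapsto(\epsilon,\psi)$ and rewrites this maximum as $\bigl(a|\eta_{t_1}|+|\epsilon+\int_0^{t_1}\psi_s\,dB_s|\bigr)^2$, from which the minimizer $\epsilon=0,\psi\equiv0$ is read off directly; you instead apply the equivalent pointwise inequality $x^2\vee(2b-x)^2\ge b^2$ and then verify explicitly that the candidate $(V_0^\ast,\phi^\ast)$ attains $a^2\esp{\eta_{t_1}^2}$, a step the paper leaves implicit.
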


\begin{proof}
We use the same technique as in Theorem~\ref{1step} to derive a lower bound for the terminal risk. We use the notations introduced in \eqref{notation} and consider 

\begin{align}
&\notag\esp{\left(\esp{H}-V_0+\int_0^{t_2}(\theta_{s}-\phi_sX_s) dB_{s}+\eta_{t_1}\deltabr_{t_2}-2G(\eta_{t_1})\Delta t_{2}\right)^2}\\
&\notag=E_G\Bigg[E_G\Bigg[\Big(c+\int_0^{t_2}\varphi_s dB_{s}+\eta_{t_1}\deltabr_{t_2}-2G(\eta_{t_1})\Delta t_{2}\Big)^2\Big|\F_{t_1}\Bigg]\Bigg]\\
&\notag\geq  E_G\Bigg[E_G\Bigg[c+\int_0^{t_2}\varphi_s dB_{s}+\eta_{t_1}\deltabr_{t_2}-2G(\eta_{t_1})\Delta t_{2}\Big|\F_{t_1}\Bigg]^2\vee\\
&\notag\qquad E_G\Bigg[-c-\int_0^{t_2}\varphi_s dB_{s}-\eta_{t_1}\deltabr_{t_2}+2G(\eta_{t_1})\Delta t_{2}\Big|\F_{t_1}\Bigg]^2\Bigg]\\
&\label{x}=E_G\Bigg[\left(c+\int_0^{t_1}\varphi_s dB_{s}\right)^2\vee\Bigg(-c-\int_0^{t_1}\varphi_s dB_{s}+(\sigmau-\sigmad)\Delta t_2|\eta_{t_1}|\Bigg)^2\Bigg],
\end{align}
where we have used that 

\begin{align*}
&\condesp[\F_{t_1}]{c+\int_0^{t_2}\varphi_s dB_{s}+\eta_{t_1}\deltabr_{t_2}-2G(\eta_{t_1})\Delta t_{2}}\\
&\;=\condesp[\F_{t_1}]{c+\int_0^{t_1}\varphi_s dB_{s}+\int_{t_1}^{t_2}\varphi_s dB_{s}+\eta_{t_1}\deltabr_{t_2}-2G(\eta_{t_1})\Delta t_{2}}\\
&\;=c+\int_0^{t_1}\varphi_s dB_{s}
\end{align*}
thanks to Proposition~\ref{prop:useful}, and similarly
\begin{align*}
&\condesp[\F_{t_1}]{-c-\int_0^{t_2}\varphi_s dB_{s}-\eta_{t_1}\deltabr_{t_2}+2G(\eta_{t_1})\Delta t_{2}}\\
&\;=-c-\int_0^{t_1}\varphi_s dB_{s}+(\sigmau-\sigmad)\Delta t_2|\eta_{t_1}|
\end{align*}
as in~\eqref{eq:sp}. This allows us to conclude that the optimal strategy in the interval $(t_1,t_2]$ is given by $\phi^\ast_tX_t=\theta_t$. We now use~\eqref{absvalue1} to rewrite~\eqref{x} as
\begin{equation}
\label{y}
\begin{split}
&E_G\Bigg[\left(c+\int_0^{t_1}\varphi_s dB_{s}\right)^2\vee\Bigg(c-(\sigmau-\sigmad)\Delta t_2\esp{|\eta_{t_1}|}+\\
&\qquad\qquad\qquad\qquad\qquad \int_0^{t_1}\left(\varphi_s-(\sigmau-\sigmad)\Delta t_2\mu_s\right)dB_s\Bigg)^2\Bigg].
\end{split}
\end{equation}
Let us introduce the auxiliary notation 
\begin{equation}
\label{epsilon}
\epsilon:=c-\frac{(\sigmau-\sigmad)\Delta t_2\esp{|\eta_{t_1}|}}{2}
\end{equation} 
and 
\begin{equation}
\label{psi}
\psi_s:=\varphi_s-\frac{(\sigmau-\sigmad)\Delta t_2\mu_s}{2},
\end{equation} 
to further rewrite~\eqref{y} as 
\begin{align*}
\notag&E_G\Bigg[\left(\frac{(\sigmau-\sigmad)\Delta t_2\esp{|\eta_{t_1}|}}{2}+\epsilon+\int_0^{t_1}\left(\frac{(\sigmau-\sigmad)\Delta t_2\mu_s}{2}+\psi_s\right) dB_{s}\right)^2\vee\\
\notag&\Bigg(-\frac{(\sigmau-\sigmad)\Delta t_2\esp{|\eta_{t_1}|}}{2}+\epsilon+\int_0^{t_1}\left(-\frac{(\sigmau-\sigmad)\Delta t_2\mu_s}{2}+\psi_s\right)dB_s\Bigg)^2\Bigg]
\end{align*}
\begin{align*}
\notag=&E_G\Bigg[\Bigg(\frac{(\sigmau-\sigmad)\Delta t_2|\eta_{t_1}|}{2}+\epsilon+\int_0^{t_1}\psi_sdB_{s}\Bigg)^2\vee\\
\notag &\qquad\qquad\qquad \Bigg(-\frac{(\sigmau-\sigmad)\Delta t_2|\eta_{t_1}|}{2}+\epsilon+\int_0^{t_1}\psi_sdB_{s}\Bigg)^2\Bigg]\\
\notag=&E_G\Bigg[\Bigg\{\left(\frac{(\sigmau-\sigmad)\Delta t_2|\eta_{t_1}|}{2}\right)^2+\left(\epsilon+\int_0^{t_1}\psi_sdB_{s}\right)^2+\\
&\qquad\qquad\qquad\qquad\qquad +2\frac{(\sigmau-\sigmad)\Delta t_2|\eta_{t_1}|}{2}\left(\epsilon+\int_0^{t_1}\psi_sdB_{s}\right)\Bigg\}\vee\\
\notag&\Bigg\{\left(\frac{(\sigmau-\sigmad)\Delta t_2|\eta_{t_1}|}{2}\right)^2+\left(\epsilon+\int_0^{t_1}\psi_sdB_{s}\right)^2+\\
&\qquad\qquad\qquad\qquad\qquad -2\frac{(\sigmau-\sigmad)\Delta t_2|\eta_{t_1}|}{2}\left(\epsilon+\int_0^{t_1}\psi_sdB_{s}\right)\Bigg\}\Bigg]\\
\notag=&E_G\Bigg[\Bigg(\frac{(\sigmau-\sigmad)\Delta t_2|\eta_{t_1}|}{2}+\left| \epsilon+\int_0^{t_1}\psi_s dB_s\right|\Bigg)^2\Bigg],
\end{align*}
where in the first equality we used the representation of $|\eta_{t_1}|$ in~\eqref{absvalue1}. The minimum is obtained by setting $\epsilon=0$ and $\psi_t=0$ on $(0,t_1]$.
\end{proof}

\begin{defn}
The parameter $\epsilon$ in~\eqref{epsilon} is called \emph{admissible} if the corresponding value of $V_0$ is such that $V_0\in(-\esp{-H},\esp{H})$. 
\end{defn}

\noindent
In order to solve the second step of our recursive scheme we first introduce the following auxiliary lemmas.

\begin{lemma}
\label{lemmapartizione}
For any $t\in[0,T]$ and any $X\in L_G^p(\F_t)$, with $p\geq1$ there exists a sequence of random variables of the form 
\[
X_n=\sum_{i=0}^{n-1}\mathbb{I}_{A_i}x_i,
\]
where $\{A_i\}_{i=0,\dots,n-1}$ is a partition of $\Omega$, $A_i\in\F_t$ and $x_i\in\R$, such that 
\[
\Vert X-X_n\Vert_{p}\longrightarrow 0, \quad n\to\infty.
\]
\end{lemma}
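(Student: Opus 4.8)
The plan is to reduce to the dense subclass $L_{ip}(\Omega_t)$ and then to approximate an element of that subclass by an explicit grid discretisation, the grid being refined fast enough to beat the polynomially growing Lipschitz modulus of the test function and truncated far out enough that the tail contributes negligibly. Throughout, $\Vert\cdot\Vert_p$ is understood as the upper--expectation norm $(\sup_{P\in\Pro}E^P[|\cdot|^p])^{1/p}=\esp{|\cdot|^p}^{1/p}$.

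First I would reduce to $X=\varphi(B_{s_1},\dots,B_{s_m})$ with $s_1,\dots,s_m\in[0,t]$ and $\varphi\in C_{l,Lip}(\R^m)$. Indeed $L_{ip}(\Omega_t)$ is dense in $L_G^p(\F_t)$, so for a general $X\in L_G^p(\F_t)$ and any $\delta>0$ one may pick such a $Y$ with $\Vert X-Y\Vert_p<\delta$, build the simple approximation $X_n$ of $Y$ as below, and conclude by $\Vert X-X_n\Vert_p\le\Vert X-Y\Vert_p+\Vert Y-X_n\Vert_p$.

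For the construction, write $\beta:=(B_{s_1},\dots,B_{s_m})$. From $\varphi\in C_{l,Lip}(\R^m)$ one extracts constants $C>0$, $k\in\NM$ such that $|\varphi(x)|\le C(1+|x|^{k+1})$ and, on each cube $[-N,N]^m$, a Lipschitz constant $L_N\le C(1+2(\sqrt m\,N)^k)$. For $n\in\NM$ set $\delta_n:=n^{-(k+2)}$, let $\{Q^n_j\}_j$ be the finite family of half--open cubes of side $\delta_n$ tiling $[-n,n]^m$, choose $x^n_j\in Q^n_j$, and put
\[
X_n:=\sum_j\varphi(x^n_j)\,\mathbb{I}_{\{\beta\in Q^n_j\}}+0\cdot\mathbb{I}_{\{\beta\notin[-n,n]^m\}}.
\]
This $X_n$ has the required shape: the cells $\{\beta\in Q^n_j\}$ together with $\{\beta\notin[-n,n]^m\}$ form a finite partition of $\Omega$ into $\F_t$--measurable sets, because every $s_i\le t$.

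Finally I would estimate $\Vert X-X_n\Vert_p$. On $\{\beta\in[-n,n]^m\}$, since $\beta$ and $x^n_j$ lie in the same cube of diameter $\sqrt m\,\delta_n$, one gets $|X-X_n|\le L_n\sqrt m\,\delta_n=:\varepsilon_n\to0$; on $\{\beta\notin[-n,n]^m\}$ one has $|X-X_n|=|\varphi(\beta)|\le C(1+|\beta|^{k+1})$ and $\mathbb{I}_{\{|\beta|>n\}}\le n^{-p}|\beta|^p$. Hence
\[
\Vert X-X_n\Vert_p^p=\esp{|X-X_n|^p}\le\varepsilon_n^p+n^{-p}\,\esp{C^p\big(1+|\beta|^{k+1}\big)^p|\beta|^p},
\]
and the last $G$--expectation is a finite constant, because the $G$--normal law has finite moments of every order, so $\esp{|\beta|^r}<\infty$ for all $r\ge1$; both terms then tend to $0$. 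The proof is essentially routine, and the only delicate point is the balancing of the two rates --- the mesh $\delta_n$ must shrink fast enough relative to the growth of the support $[-n,n]^m$ that $L_n\delta_n\to0$ despite the polynomial blow--up of $L_n$ --- while the simultaneous decay of the tail term is a direct consequence of the finiteness of all $G$--moments of the Brownian increments.
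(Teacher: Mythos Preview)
Your argument is correct, but it is a genuinely different route from the paper's. The paper does not pass through $L_{ip}(\Omega_t)$ at all: it discretises $X$ directly along its \emph{range}, setting
\[
X_n=\sum_{i=0}^{n-1}\frac{N}{n}i\,\mathbb{I}_{\{\frac{N}{n}i\le |X|<\frac{N}{n}(i+1)\}},
\]
and bounds $\esp{|X-X_n|^p}$ by $(N/n)^p$ plus the tail $\esp{|X|^p\mathbb{I}_{\{|X|>N\}}}$, which vanishes as $N\to\infty$ by the structural characterisation of $L_G^p$ in~\cite{dhp} (their Theorem~25). You instead discretise the \emph{domain} of a cylinder function $\varphi(B_{s_1},\dots,B_{s_m})$ and rely only on the definition of $L_G^p$ as the closure of $L_{ip}$ together with the finiteness of all $G$--moments of $B$. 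The paper's proof is shorter and treats a general $X\in L_G^p$ in one step, but it imports a nontrivial external result; your proof is more self-contained and elementary, at the cost of the extra density reduction and the mesh/truncation balancing. One cosmetic point: your density step as written only gives $\limsup_n\Vert X-X_n\Vert_p\le\delta$ for each fixed $\delta$; to obtain an honest sequence you should diagonalise, choosing for each $n$ a $Y_n\in L_{ip}(\Omega_t)$ with $\Vert X-Y_n\Vert_p<1/n$ and then a simple $X_n$ with $\Vert Y_n-X_n\Vert_p<1/n$.
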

\begin{proof}
Fix $N,n\in\NM$ and let 
\[
X_n:=\sum_{i=0}^{n-1} \frac{N}{n}i\; \mathbb{I}_{\left\{\frac{N}{n}i\leq |X| <\frac{N}{n}(i+1)\right\}}.
\]
It follows that 
\begin{equation}
\label{discretization}
\begin{split}
\esp{(X-X_n)^p}&=\esp{X^p\mathbb{I}_{\left\{|X|>N\right\}}+\sum_{i=0}^{n-1}(X-\frac{N}{n}i)^p\mathbb{I}_{\left\{\frac{N}{n}i\leq |X| <\frac{N}{n}(i+1)\right\}}}\\
&\leq \esp{X^p\mathbb{I}_{\left\{|X|>N\right\}}}+\esp{\sum_{i=0}^{n-1}(X-\frac{N}{n}i)^p\mathbb{I}_{\left\{\frac{N}{n}i\leq |X| <\frac{N}{n}(i+1)\right\}}}\\
&\leq \esp{X^p\mathbb{I}_{\left\{|X|>N\right\}}}+\left(\frac{N}{n}\right)^p\esp{\mathbb{I}_{ \left\{|X| \leq N\right\}}}.
\end{split}
\end{equation}
Since by Theorem 25 in~\cite{dhp} we have that $\esp{X^p\mathbb{I}_{|X|>N}}$ converges to zero as $N$ tends to infinity, we can conclude by first letting $n\to\infty$ and then $N\to\infty$ in~\eqref{discretization}.
\end{proof}

\begin{lemma}
\label{auxiliary1}
For any $t\leq T$ and $n\in\mathbb{N}$ let $\{A_1,\dots,A_n\}$ be a partition of $\Omega$ such that $A_i\in\F_t$ for every $i\in\{1,\dots,n\}$. It holds that 
\[
\inf_{\psi\in M_G^2(0,t)} E^{P}\left[\sum_{i=1}^n \mathbb{I}_{A_i}\left(x_i+|\epsilon+\int_0^t\psi_s dB_s|\right)^2\right]=E^{P}\left[\sum_{i=1}^n \mathbb{I}_{A_i}\left(x_i+|\epsilon|\right)^2\right],
\]
for every $\epsilon\in\R$, $P\in\Pro$ and $\{x_1,\dots,x_n\}\in \R_+^n$.
\end{lemma}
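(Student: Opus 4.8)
\noindent The inequality ``$\le$'' is immediate: taking $\psi\equiv 0$ makes $\int_0^t\psi_s\,dB_s=0$, and then the left-hand side equals the right-hand side. So the plan is to prove that for every $\psi\in M_G^2(0,t)$ one has $E^{P}\big[\sum_{i=1}^n\mathbb{I}_{A_i}(x_i+|\epsilon+\int_0^t\psi_s\,dB_s|)^2\big]\ge\sum_{i=1}^n P(A_i)(x_i+|\epsilon|)^2$. First I would rewrite the integrand: since $\{A_i\}$ is a partition of $\Omega$, setting $\xi:=\sum_{i=1}^n\mathbb{I}_{A_i}x_i\ge 0$ (an $\F_t$-measurable random variable taking finitely many values) and $Z:=\int_0^t\psi_s\,dB_s$, one has the pathwise identity $\sum_i\mathbb{I}_{A_i}(x_i+|\epsilon+Z|)^2=(\xi+|\epsilon+Z|)^2$, and similarly $\sum_i P(A_i)(x_i+|\epsilon|)^2=E^P[(\xi+|\epsilon|)^2]$, so the claim reduces to $E^P[(\xi+|\epsilon+Z|)^2]\ge E^P[(\xi+|\epsilon|)^2]$. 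Two structural facts are then available: for each fixed $\omega$ the map $z\mapsto(\xi(\omega)+|\epsilon+z|)^2$ is convex (because $\xi(\omega)\ge 0$), and, since $\psi\in M_G^2(0,t)$ forces $\Vert Z\Vert_2<\infty$ and $B$ is a martingale under every $P\in\Pro$, $Z$ is a centred square-integrable random variable, $E^P[Z]=0$.

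\medskip
\noindent Next I would reduce $Z$ to a tractable form. By definition of $M_G^2(0,t)$ there are simple processes $\psi^{(m)}$ with $\psi^{(m)}\to\psi$; the corresponding $Z^{(m)}:=\int_0^t\psi^{(m)}_s\,dB_s$ converge to $Z$ in $\Vert\cdot\Vert_2$, hence in $L^2(P)$, and since $\xi$ is bounded and $z\mapsto(\xi+|\epsilon+z|)^2$ is continuous with at most quadratic growth, $E^P[(\xi+|\epsilon+Z^{(m)}|)^2]\to E^P[(\xi+|\epsilon+Z|)^2]$. It therefore suffices to treat a simple $\psi=\sum_{j=0}^{N-1}\xi_j\mathbb{I}_{[s_j,s_{j+1})}$ with $\xi_j\in L_G^2(\F_{s_j})$ and, after discretising each $\xi_j$ by Lemma~\ref{lemmapartizione}, the case $Z=\sum_{j=0}^{N-1}c_j\,(B_{s_{j+1}}-B_{s_j})$ with each $c_j$ an $\F_{s_j}$-measurable step random variable. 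The natural way to handle such a $Z$ is to peel the increments off one at a time, starting from the last: conditioning on $\F_{s_{N-1}}$, the increment $B_{s_N}-B_{s_{N-1}}$ is centred under $P$, $c_{N-1}$ is frozen, and one would like to apply the conditional Jensen inequality to the convex function of the previous paragraph to get $E^P\big[(\xi+|\epsilon+Z|)^2\mid\F_{s_{N-1}}\big]\ge(\xi+|\epsilon+Z-c_{N-1}(B_{s_N}-B_{s_{N-1}})|)^2$, then take $E^P$ and iterate down to $Z=0$.

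\medskip
\noindent The main obstacle is exactly the step I just described: the sets $A_i$, and hence $\xi$, are measurable with respect to the \emph{terminal} $\sigma$-algebra $\F_t=\F_{s_N}$, not with respect to the intermediate $\F_{s_j}$, so $\xi$ does not freeze upon conditioning on $\F_{s_{N-1}}$, and a plain Jensen step in the last increment is not legitimate: the step height $\xi$ may be correlated with $B_{s_N}-B_{s_{N-1}}$. To make the argument go through one must keep track of the joint law of $\xi$ and the Brownian increments---for instance condition additionally on the value of the terminal increment, refine the partition $\{A_i\}$ accordingly (again via Lemma~\ref{lemmapartizione}), and only then invoke centredness of the increments together with the convexity of $z\mapsto(a+|\epsilon'+z|)^2$ for $a\ge 0$. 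Controlling this coupling between the heights $\{x_i\}$, the partition $\{A_i\}$ and the increments of $B$ is, I expect, the technical heart of the proof.
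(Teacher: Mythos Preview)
Your proposal is not a proof: you correctly isolate the central difficulty --- the sets $A_i$ (and hence $\xi=\sum_i x_i\mathbb{I}_{A_i}$) are only $\F_t$-measurable, so they do not freeze when you condition on the intermediate $\sigma$-algebras $\F_{s_j}$ --- but you stop there, writing that this coupling ``is, I expect, the technical heart of the proof''. In particular your peeling-off-increments scheme via conditional Jensen breaks exactly at the place you point to, and you provide no substitute. Worse, the obstacle is not merely technical. Take $P=P^\sigma$ for a constant $\sigma$, $t>0$, $\epsilon=1$, $\psi\equiv 1$ (so $Z=B_t$), and the partition $A_1=\{\,|1+B_t|<\tfrac12\,\}$, $A_2=A_1^c$ with $x_1=M$, $x_2=0$. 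A direct computation gives
\[
E^P\!\big[(\xi+|\epsilon+Z|)^2\big]-E^P\!\big[(\xi+|\epsilon|)^2\big]
=2M\big(E^P[\mathbb{I}_{A_1}|1+B_t|]-P(A_1)\big)+E^P[B_t^2],
\]
and on $A_1$ one has $|1+B_t|<\tfrac12$, so the first term is at most $-M\,P(A_1)$; for $M$ large this is strictly negative. Thus the very inequality you set out to prove, $E^P[(\xi+|\epsilon+Z|)^2]\ge E^P[(\xi+|\epsilon|)^2]$, can fail when the partition is allowed to depend on $B_t$ through $\F_t$, so no amount of refining or conditioning will rescue your approach in the generality of the stated lemma.

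For comparison, the paper's argument is entirely different in spirit: it proceeds by induction on the number $n$ of distinct values $x_1<\cdots<x_n$. The base case $n=1$ is the elementary estimate $E^P[(x_1+|\epsilon+Z|)^2]\ge (x_1+|\epsilon|)^2$, which follows from $E^P[|\epsilon+Z|]\ge|\epsilon|$ and $E^P[(\epsilon+Z)^2]\ge\epsilon^2$. For the step $n\to n+1$ the paper assumes, towards a contradiction, that some $\bar\psi$ beats $\psi\equiv 0$, then replaces a value $x_j$ by a neighbouring $x_k$ (merging $A_j$ into $A_k$), choosing $k<j$ or $k>j$ according to the sign of $E^P[\mathbb{I}_{A_j}(|\epsilon+Z|-|\epsilon|)]$, so as to preserve the strict inequality with only $n$ distinct values --- contradicting the induction hypothesis. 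This is a combinatorial reduction rather than the martingale/convexity argument you attempted. Note, however, that the paper's replacement step explicitly restricts to an index $j\notin\{1,n+1\}$ and needs a neighbour on the ``correct'' side; in configurations like the counterexample above (two values, with the conditional sign unfavourable on the set carrying the larger weight) no such choice is available, which is consistent with the failure exhibited in the previous paragraph.
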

\begin{proof}
We assume without loss of generality that $\{x_1,\dots,x_n\}$ are all different and increasingly ordered. The result is achieved by induction. If $n=1$ the claim trivially holds. To prove the induction step suppose there exists a $\bar{\psi}\in M_G^2(0,t)$ such that 
\begin{equation}
\label{inductionaux}
E^{P}\left[\sum_{i=1}^{n+1} \mathbb{I}_{A_i}\left(x_i+|\epsilon+\int_0^t\bar{\psi_s} dB_s|\right)^2\right]<E^{P}\left[\sum_{i=1}^{n+1} \mathbb{I}_{A_i}\left(x_i+|\epsilon|\right)^2\right].
\end{equation}
We show that this, together with the induction hypothesis, leads to a contradiction. To this purpose we replace $x_j$, where $j\notin\{1,n+1\}$, with a $x_k$ with $k\in\{1,\dots,n+1\}\setminus {j}$, in order to get a sum of only $n$ different elements and proceed as follows. Note that~\eqref{inductionaux} is equivalent to 
\begin{align}
\notag E^{P}\left[\sum_{i=1}^{n+1} \mathbb{I}_{A_i}\left(\tilde{x}_i+|\epsilon+\int_0^t\bar{\psi_s} dB_s|\right)^2\right]&<E^{P}\left[\sum_{i=1}^{n+1} \mathbb{I}_{A_i}\left(\tilde{x}_i+|\epsilon|\right)^2\right]\\
\notag&\;+E^{P}\left[\mathbb{I}_{A_j}\left(x+|\epsilon+\int_0^t\bar{\psi_s} dB_s|\right)^2\right]\\
\notag&\;-E^{P}\left[\mathbb{I}_{A_j}\left(x_j+|\epsilon+\int_0^t\bar{\psi_s} dB_s|\right)^2\right]\\
\notag &\;+E^{P}\left[\mathbb{I}_{A_j}\left(x_j+|\epsilon|\right)^2\right]\\
\label{inductionaux2}&\;-E^{P}\left[\mathbb{I}_{A_j}\left(x+|\epsilon|\right)^2\right],
\end{align} 
where $x\in\R_+$, and $\{\tilde{x}_1,\dots,\tilde{x}_{n+1}\}$ stands for the new sequence in which $x_j$ has been replaced by $x$. To conclude we consider
\begin{align}
\notag & E^{P}\left[\mathbb{I}_{A_j}\left(x+|\epsilon+\int_0^t\bar{\psi_s} dB_s|\right)^2\right]-E^{P}\left[\mathbb{I}_{A_j}\left(x_j+|\notag \epsilon+\int_0^t\bar{\psi_s} dB_s|\right)^2\right]+\\
\notag &+E^{P}\left[\mathbb{I}_{A_j}\left(x_j+|\epsilon|\right)^2\right]-E^{P}\left[\mathbb{I}_{A_j}\left(x+|\epsilon|\right)^2\right]\\
\notag=& E^{P}\left[\mathbb{I}_{A_j}(x-x_j)\left(x+x_j+2|\epsilon+\int_0^t\bar{\psi_s} dB_s|\right)\right]+\\
\notag&-E^{P}\left[\mathbb{I}_{A_j}(x-x_j)\left(x+x_j+2|\epsilon|\right)\right]\\
\label{aux3}=&E^{P}\left[2\mathbb{I}_{A_j}(x-x_j)\left(|\epsilon+\int_0^t\bar{\psi_s} dB_s|-|\epsilon|\right)\right].
\end{align}
If now
\[
E^{P}\left[\mathbb{I}_{A_j}\left(|\epsilon+\int_0^t\bar{\psi_s} dB_s|-|\epsilon|\right)\right]\geq 0
\]
we choose $x=x_k$ for any $k\in{1,\dots,j-1}$ and obtain for the partition 
\begin{equation}
\label{514}
\begin{split}
\{\tilde{A_1},\dots,\tilde{A_n}\}:=\{A_1,\dots,A_{k-1},A_k\cup A_j,A_{k+1},\dots,A_{j-1},A_{j+1},\dots,A_{n+1}\}
\end{split}
\end{equation} 
and 
\begin{equation}
\label{517star}
\{y_1,\dots,y_n\}:=\{x_1,\dots,x_{k-1},x_k,x_{k+1},\dots,x_{j-1},x_{j+1},\dots,x_{n+1}\}
\end{equation}
that
\begin{align}
\notag &E^{P}\left[\sum_{i=1}^{n} \mathbb{I}_{\tilde{A_i}}\left(y_i+|\epsilon+\int_0^t\bar{\psi_s} dB_s|\right)^2\right]=E^{P}\left[\sum_{i=1}^{n+1} \mathbb{I}_{A_i}\left(\tilde{x_i}+|\epsilon+\int_0^t\bar{\psi_s} dB_s|\right)^2\right]\\
\label{nuo}&<E^{P}\left[\sum_{i=1}^{n+1} \mathbb{I}_{A_i}\left(\tilde{x_i}+|\epsilon|\right)^2\right]=E^{P}\left[\sum_{i=1}^{n} \mathbb{I}_{\tilde{A_i}}\left(y_i+|\epsilon|\right)^2\right],
\end{align}
in contradiction with the induction hypothesis. If
\[
E^{P}\left[\mathbb{I}_{A_j}\left(|\epsilon+\int_0^t\bar{\psi_s} dB_s|-|\epsilon|\right)\right]< 0,
\]
we obtain~\eqref{nuo} with $x=x_k$ for any $k\in{j+1,\dots,n+1}$.
\end{proof}

\begin{lemma}
\label{auxiliary2}
Under the hypothesis of Lemma~\ref{auxiliary1} and for any $\eta_{t_0}\in\R$ it holds that
\begin{align*}
&\esp{\sum_{i=1}^n \mathbb{I}_{A_i}\left(x_i+|\epsilon+\eta_{t_0}\deltabr_{t}-2G(\eta_{t_0})\Delta t|\right)^2} =  \\
&  = \sup_{\overset{\sigma\in \mathcal{A}_{0,t}^\Theta}{\sigma \text{ constant}}} E^{P^\sigma}\left[\sum_{i=1}^n \mathbb{I}_{A_i}\left(x_i+|\epsilon+\eta_{t_0}\deltabr_{t}-2G(\eta_{t_0})\Delta t|\right)^2\right] =\\
& = E^{P^{\sigma^\ast}}\left[\sum_{i=1}^n \mathbb{I}_{A_i}\left(x_i+|\epsilon+\eta_{t_0}\deltabr_{t}-2G(\eta_{t_0})\Delta t|\right)^2\right], 
\end{align*}
for some $\sigma^\ast\in[\underline{\sigma},\overline{\sigma}]$.
\end{lemma}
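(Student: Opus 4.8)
\emph{The plan is} to split off the increment of $\langle B\rangle$ over $(t_0,t]$ by a convexity argument and then recognise that what remains has a constant-volatility worst-case prior.

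\emph{First} I would note that, since $\{A_1,\dots,A_n\}$ is a partition of $\Omega$, the random variable in the statement equals $(X+|Y|)^2$ with $X:=\sum_{i=1}^{n}x_i\mathbb{I}_{A_i}\ge 0$ an $\F_{t_0}$-measurable simple random variable and $Y:=\epsilon+\eta_{t_0}\deltabr_{t}-2G(\eta_{t_0})\Delta t$. Since $0\le\deltabr_{t}\le\overline{\sigma}^2\Delta t$ quasi surely, $Y$ is bounded, so $(X+|Y|)^2\in L_G^2(\F_t)$ and $E_G[(X+|Y|)^2]=\sup_{P\in\mathcal{P}}E^P[(X+|Y|)^2]$. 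Every constant $\sigma$ gives $P^\sigma\in\mathcal{P}$, hence $\sup_{\sigma\text{ const}}E^{P^\sigma}[(X+|Y|)^2]\le E_G[(X+|Y|)^2]$ for free; it therefore suffices to produce one constant $\sigma^\ast\in[\underline{\sigma},\overline{\sigma}]$ with $E^{P^{\sigma^\ast}}[(X+|Y|)^2]=E_G[(X+|Y|)^2]$, and that single fact yields both equalities of the statement at once.

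\emph{Next} I would peel off the last increment. Conditioning on $\F_{t_0}$ and using that $\deltabr_{t}=\brac_t-\brac_{t_0}$ is independent of $\F_{t_0}$ and maximally distributed on $[\underline{\sigma}^2\Delta t,\overline{\sigma}^2\Delta t]$ (see \eqref{maximallydist}),
\[
E_G\!\left[(X+|Y|)^2\,\big|\,\F_{t_0}\right]=\sup_{\underline{\sigma}^2\Delta t\,\le\, w\,\le\,\overline{\sigma}^2\Delta t}\big(X+|\epsilon+\eta_{t_0}w-2G(\eta_{t_0})\Delta t|\big)^2 .
\]
Since $w\mapsto\epsilon+\eta_{t_0}w-2G(\eta_{t_0})\Delta t$ is affine and, for $X\ge 0$, $z\mapsto(X+|z|)^2$ is convex and nondecreasing in $|z|$, this supremum is attained at an endpoint $\widehat w\in\{\underline{\sigma}^2\Delta t,\overline{\sigma}^2\Delta t\}$; and because for $X\ge 0$ the ordering of $(X+|z_1|)^2$ and $(X+|z_2|)^2$ depends only on that of $|z_1|$ and $|z_2|$, the \emph{same} endpoint $\widehat w$ is optimal whatever the value of $X$. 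With $\widehat\sigma:=\sqrt{\widehat w/\Delta t}\in\{\underline{\sigma},\overline{\sigma}\}$ and $\widehat y:=\epsilon+\eta_{t_0}\widehat w-2G(\eta_{t_0})\Delta t$ this gives $E_G[(X+|Y|)^2\mid\F_{t_0}]=(X+|\widehat y|)^2$, hence $E_G[(X+|Y|)^2]=E_G[(X+|\widehat y|)^2]$. On the other hand, under the constant law $P^{\widehat\sigma}$ one has $\deltabr_{t}=\widehat w$ quasi surely, so $Y=\widehat y$ and $E^{P^{\widehat\sigma}}[(X+|Y|)^2]=E^{P^{\widehat\sigma}}[(X+|\widehat y|)^2]$.

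\emph{It then remains} to show that the $\F_{t_0}$-measurable random variable $(X+|\widehat y|)^2$ admits a constant-volatility worst-case prior, i.e.\ $E_G[(X+|\widehat y|)^2]=E^{P^{\sigma^\ast}}[(X+|\widehat y|)^2]$ for some constant $\sigma^\ast$, which together with $\widehat\sigma$ on $(t_0,t]$ furnishes the $\sigma^\ast\in[\underline{\sigma},\overline{\sigma}]$ we want. This is where the concrete provenance of $\{A_i\}$ in the recursive scheme must be used: the $A_i$ are the level sets of the accumulated, piecewise constant, worst-case volatility over the preceding subintervals, so that by Lemma~\ref{auxiliary1} and the inductive hypothesis of the scheme the supremum over those degrees of freedom is already realised by freezing each earlier increment of $\langle B\rangle$ at an extreme value — a constant choice. \emph{The hard part} is precisely this last reduction: the convexity computation handles the current increment cleanly, but justifying that the worst-case volatility may be taken to be the same constant on $[0,t_0]$ (which decides which $A_i$ one lies in) and on $(t_0,t]$ rests on the monotone/extremal structure of the partition inherited from the recursion rather than on a soft compactness argument, so the statement is to be read together with that construction.
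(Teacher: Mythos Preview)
Your argument rests on a misreading of the hypotheses. You claim that $X=\sum_{i=1}^n x_i\mathbb{I}_{A_i}$ is $\F_{t_0}$-measurable and that $\deltabr_t=\brac_t-\brac_{t_0}$ is an increment independent of $\F_{t_0}$, so that you may condition on $\F_{t_0}$, pull $X$ out, and optimise the remaining maximally distributed piece by convexity. But the hypothesis inherited from Lemma~\ref{auxiliary1} only says $A_i\in\F_t$ --- the \emph{same} $t$ that appears in $\deltabr_t$ --- and in the paper's notation $\deltabr_t=\brac_t$ (here $t_0=0$, and $\eta_{t_0}\in\R$ is merely a real parameter, not a time at which the partition is fixed). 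Thus $X$ and $\deltabr_t$ live on the \emph{same} time interval and are correlated; the conditioning step $E_G[(X+|Y|)^2\mid\F_{t_0}]=\sup_w(X+|\cdot|)^2$ is simply not available. Your final paragraph tries to recover this by invoking the ``concrete provenance'' of the $A_i$ from the recursive scheme, but even there the $A_i$ arise as level sets of $|\eta_{t_1}|$, which is $\F_{t_1}$-measurable and genuinely depends on the Brownian path over $[0,t_1]$ --- precisely the interval on which $\brac_{t_1}$ is accumulated. There is no earlier $\sigma$-algebra that carries the partition and leaves the quadratic-variation increment independent.

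The paper's proof is structurally different and does not attempt to separate $X$ from $-K_t$ by conditioning. It argues by induction on $n$. For $n=1$ (so $A_1=\Omega$) the claim is immediate from the maximal distribution of $\brac_t$. For the induction step one supposes, by contradiction, that some non-constant $P\in\Pro$ strictly dominates every constant $P^{\sigma}$. One then shows that a suitable replacement of one value $x_j$ by another $x_k$ (merging two cells of the partition) preserves the strict inequality; this reduces the sum to $n$ distinct terms and contradicts the induction hypothesis. The combinatorial replacement argument is the genuine content here, and it is needed precisely because the $A_i$ and $-K_t$ cannot be decoupled in the way your convexity/conditioning step presupposes.
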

\begin{proof}
We denote for simplicity 
\[
-K_t:=\eta_{t_0}\deltabr_{t}-2G(\eta_{t_0})\Delta t,
\]
and proceed again by induction, using the same conventions as in Lemma~\ref{auxiliary1}. In particular, also here we assume that $\{x_1,\dots,x_n\}$ are all different and increasingly ordered. The case $n=1$ is clear because of~\eqref{maximallydist}, as $\deltabr_{t}$ is maximally distributed. Assume now there exists a $P\in\Pro$, which is not in the set $\{P^\sigma,\;\sigma\in[\underline{\sigma},\overline{\sigma}],\;\sigma \text{ constant}\}$, such that 
\begin{equation}
\label{aux4}
 E^{P}\left[\sum_{i=1}^{n+1} \mathbb{I}_{A_i}\left(x_i+|\epsilon-K_t|\right)^2\right] > E^{P^{\sigma^\ast}}\left[\sum_{i=1}^{n+1} \mathbb{I}_{A_i}\left(x_i+|\epsilon-K_t|\right)^2\right]. 
\end{equation}
The expression~\eqref{aux4} implies that there exists a $j\in \{1,\dots,n+1\}$ such that
\begin{equation}
\label{aux6}
 E^{P}\left[\mathbb{I}_{A_j}\left(x_j+|\epsilon-K_t|\right)^2\right] > E^{P^{\sigma^\ast}}\left[\mathbb{I}_{A_j}\left(x_j+|\epsilon-K_t|\right)^2\right],
\end{equation}
which is equivalent to 
\begin{equation}
\label{aux5}
\begin{split}
&\left(P(A_j)-P^{\sigma^\ast}(A_j)\right)x_j^2+2x_j\left(E^{P}\left[\mathbb{I}_{A_j}|\epsilon-K_t|\right]-E^{P^{\sigma^\ast}}\left[\mathbb{I}_{A_j}|\epsilon-K_t|\right]\right)+\\
&\qquad\qquad +E^{P}\left[\mathbb{I}_{A_j}|\epsilon-K_t|^2\right]-E^{P^{\sigma^\ast}}\left[\mathbb{I}_{A_j}|\epsilon-K_t|^2\right]>0.
\end{split}
\end{equation}
Note that, in order for~\eqref{aux6} to hold, we must have $P(A_j)-P^{\sigma^\ast}(A_j)>0$. This implies that~\eqref{aux5} is a convex function in $x_j$, which tends to infinity as $x_j$ tends to infinity. As in Lemma~\ref{auxiliary1}, we get to a contradiction by reducing~\eqref{aux4} to a sum of only $n$ different terms, by replacing $x_j$ with another suitable value. We note that~\eqref{aux4} is equivalent to
\begin{equation}
\label{aux7}
\begin{split}
 E^{P}\left[\sum_{i=1}^{n+1} \mathbb{I}_{A_i}\left(\tilde{x}_i+|\epsilon-K_t|\right)^2\right] >& E^{P^{\sigma^\ast}}\left[\sum_{i=1}^{n+1} \mathbb{I}_{A_i}\left(\tilde{x}_i+|\epsilon-K_t|\right)^2\right]\\
 &+E^{P^{\sigma^\ast}}\left[\mathbb{I}_{A_j}\left(x_j+|\epsilon-K_t|\right)^2\right]\\
 &-E^{P^{\sigma^\ast}}\left[\mathbb{I}_{A_j}\left(x+|\epsilon-K_t|\right)^2\right]\\
 &+E^{P}\left[\mathbb{I}_{A_j}\left(x+|\epsilon-K_t|\right)^2\right]\\
 &-E^{P}\left[\mathbb{I}_{A_j}\left(x_j+|\epsilon-K_t|\right)^2\right],\\
\end{split}  
\end{equation}
where $x\in\R$ and $\{\tilde{x}_1,\dots,\tilde{x}_{n+1}\}$ stands for the new sequence in which $x_j$ has been replaced by $x$ as in Lemma~\ref{auxiliary1}. To conclude, we consider
\[
\begin{split}
&E^{P^{\sigma^\ast}}\left[\mathbb{I}_{A_j}\left(x_j+|\epsilon-K_t|\right)^2\right]-E^{P^{\sigma^\ast}}\left[\mathbb{I}_{A_j}\left(x+|\epsilon-K_t|\right)^2\right]>\\
&E^{P}\left[\mathbb{I}_{A_j}\left(x_j+|\epsilon-K_t|\right)^2\right]-E^{P}\left[\mathbb{I}_{A_j}\left(x+|\epsilon-K_t|\right)^2\right],\\
\end{split}
\]
which is equivalent to 
\begin{equation}
\label{aux8}
\begin{split}
&E^{P^{\sigma^\ast}}\left[\mathbb{I}_{A_j}(x_j-x)\left(x_j+x+2|\epsilon-K_t|\right)\right]>\\
&\;E^{P}\left[\mathbb{I}_{A_j}(x_j-x)\left(x_j+x+2|\epsilon-K_t|\right)\right].
\end{split}
\end{equation}
If $x>x_j$, ~\eqref{aux8} is satisfied if 
\[
E^{P^{\sigma^\ast}}\left[\mathbb{I}_{A_j}\left(\frac{x_j+x}{2}+|\epsilon-K_t|\right)\right]<E^{P}\left[\mathbb{I}_{A_j}\left(\frac{x_j+x}{2}+|\epsilon-K_t|\right)\right],
\]
which in turn is the same as
\begin{equation}
\label{aux9}
\left(P(A_j)-P^{\sigma^\ast}(A_j)\right)\frac{x_j+x}{2}>E^{P^{\sigma^\ast}}\left[\mathbb{I}_{A_j}|\epsilon-K_t|\right]-E^{P}\left[\mathbb{I}_{A_j}|\epsilon-K_t|\right].
\end{equation}
At this point, if there exists a $x=x_k$ satisfying~\eqref{aux9}, where $k\in\{j+1,\dots,n+1\}$, the proof is concluded, as we will get
\[
\begin{split}
E^{P}\left[\sum_{i=1}^{n} \mathbb{I}_{\tilde{A}_i}\left(y_i+|\epsilon-K_t|\right)^2\right] &= E^{P}\left[\sum_{i=1}^{n+1} \mathbb{I}_{A_i}\left(\tilde{x_i}+|\epsilon-K_t|\right)^2\right] \\
&> E^{P^{\sigma^\ast}}\left[\sum_{i=1}^{n+1} \mathbb{I}_{A_i}\left(\tilde{x_i}+|\epsilon-K_t|\right)^2\right]\\
&=E^{P^{\sigma^\ast}}\left[\sum_{i=1}^{n} \mathbb{I}_{\tilde{A}_i}\left(y_i+|\epsilon-K_t|\right)^2\right],
\end{split}
\]
where $\{\tilde{A}_i\}_{i=1,\dots,n}$ and $\{y_i\}_{i=1,\dots,n}$ are introduced in~\eqref{514} and~\eqref{517star}, respectively. If such $x_k$ does not exist, which happens if $j=n+1$ for example, we first substitute some $x_i$ with a $x_r$, where $i\neq r$ and $i,r\in\{1,\dots,n+1\}\setminus{j}$, as in~\eqref{aux7}, and then we substitute $x_j$ with an $x$ sufficiently large to satisfy 
\begin{equation}
\label{auxxxx}
\begin{split}
&E^{P^{\sigma^\ast}}\left[\mathbb{I}_{A_j}\left(x_j+|\epsilon-K_t|\right)^2\right]-E^{P^{\sigma^\ast}}\left[\mathbb{I}_{A_j}\left(x+|\epsilon-K_t|\right)^2\right]\\
&+E^{P}\left[\mathbb{I}_{A_j}\left(x+|\epsilon-K_t|\right)^2\right]-E^{P}\left[\mathbb{I}_{A_j}\left(x_j+|\epsilon-K_t|\right)^2\right]\\
&+E^{P^{\sigma^\ast}}\left[\mathbb{I}_{A_i}\left(x_i+|\epsilon-K_t|\right)^2\right]-E^{P^{\sigma^\ast}}\left[\mathbb{I}_{A_i}\left(x_r+|\epsilon-K_t|\right)^2\right]\\
&+E^{P}\left[\mathbb{I}_{A_i}\left(x_r+|\epsilon-K_t|\right)^2\right]-E^{P}\left[\mathbb{I}_{A_i}\left(x_i+|\epsilon-K_t|\right)^2\right]>0.\\
\end{split}
\end{equation}
This is possible because 
\begin{equation*}
\begin{split}
&E^{P^{\sigma^\ast}}\left[\mathbb{I}_{A_j}\left(x_j+|\epsilon-K_t|\right)^2\right]-E^{P^{\sigma^\ast}}\left[\mathbb{I}_{A_j}\left(x+|\epsilon-K_t|\right)^2\right]\\
&+E^{P}\left[\mathbb{I}_{A_j}\left(x+|\epsilon-K_t|\right)^2\right]-E^{P}\left[\mathbb{I}_{A_j}\left(x_j+|\epsilon-K_t|\right)^2\right]>0
\end{split}
\end{equation*}
is equivalent to~\eqref{aux9}, and its value can be made large enough to ensure~\eqref{auxxxx} because of~\eqref{aux5}.
\end{proof}

\noindent
We can now state the main result.
\begin{teo}
\label{final2}
Consider a claim $H$ of the form 
\[
H=\esp{H}+\int_0^{t_2}\theta_{s} dB_{s}+\eta_{t_0}\deltabr_{t_1}-2G(\eta_{t_0})\Delta t_{1}+\eta_{t_1}\deltabr_{t_2}-2G(\eta_{t_1})\Delta t_{2},
\]
where $0= t_{0}<t_1<t_{2}=T$, $(\theta_{s})_{s\in[0,t_2]}\in M_G^2(0,t_2)$, $\eta_{t_0}\in\R$, $\eta_{t_1}\in L_G^2(\F_{t_1})$ and 
\begin{equation}
\label{absvalue2}
|\eta_{t_1}|=\esp{|\eta_{t_1}|}+\int_0^{t_1}\mu_{s} dB_{s},
\end{equation}
for a certain process $(\mu_{s})_{s\in[0,t_1]}\in M_G^2(0,t_1) $. The optimal mean-variance portfolio is given by 
\[
\phi_t^\ast X_t=\left(\theta_{t}-\frac{\mu_{t}(\sigmau-\sigmad)\Delta t_2}{2}\right)\mathbb{I}_{(t_0,t_1]}(t)+\theta_{t}\mathbb{I}_{(t_1,t_2]}(t) 
\]
for $t\in[0,T]$ and 
\[
V_0^\ast=\esp{H}-\frac{1}{2}(\sigmau-\sigmad)\Delta t_2\esp{|\eta_{t_1}|}-\epsilon,
\]
where $\epsilon\in\R$ solves
\[
\inf_{\epsilon} \esp{\left(\frac{|\eta_{t_1}|}{2}(\sigmau-\sigmad)\Delta t_2+|\epsilon+\eta_{t_0}\deltabr_{t_1}-2G(\eta_{t_0})\Delta t_{1}|\right)^2}.
\]
\end{teo}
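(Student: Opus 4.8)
The plan is to push the recursive scheme of Proposition~\ref{final1} one step further. Throughout I set $c:=\esp{H}-V_0$ and $\varphi_t:=\theta_t-\phi_tX_t$ as in~\eqref{notation}, and I use the identity $2G(x)+2G(-x)=(\sigmau-\sigmad)|x|$. First I strip off the interval $(t_1,t_2]$: conditioning $H-V_T(V_0,\phi)$ on $\F_{t_1}$ and applying the $G$-Jensen inequality in the form $\condesp[\F_{t_1}]{Y^2}\geq\condesp[\F_{t_1}]{Y}^2\vee\condesp[\F_{t_1}]{-Y}^2$ (legitimate by Proposition~\ref{jensen} and Lemma~\ref{quadrato}), together with Proposition~\ref{prop:useful} to separate the symmetric increment $\int_{t_1}^{t_2}\varphi_sdB_s$, the vanishing of $\condesp[\F_{t_1}]{\int_{t_1}^{t_2}\varphi_sdB_s+\eta_{t_1}\deltabr_{t_2}-2G(\eta_{t_1})\Delta t_2}$, and the fact that $\deltabr_{t_2}$ is maximally distributed (exactly as in~\eqref{eq:sp}), one obtains $J_0\geq\esp{A^2\vee B^2}$ with $A:=c+\int_0^{t_1}\varphi_sdB_s+\eta_{t_0}\deltabr_{t_1}-2G(\eta_{t_0})\Delta t_1$ and $B:=-A+(\sigmau-\sigmad)\Delta t_2|\eta_{t_1}|$. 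This bound does not involve $\varphi$ on $(t_1,t_2]$, so the optimal strategy there is $\phi^\ast X=\theta$. Since $A+B=(\sigmau-\sigmad)\Delta t_2|\eta_{t_1}|\geq0$, one has $A^2\vee B^2=\big(\tfrac12(A+B)+\big|\tfrac12(A-B)\big|\big)^2$; plugging in the representation~\eqref{absvalue2} of $|\eta_{t_1}|$ and introducing $\epsilon:=c-\tfrac12(\sigmau-\sigmad)\Delta t_2\esp{|\eta_{t_1}|}$ and $\psi_s:=\varphi_s-\tfrac12(\sigmau-\sigmad)\Delta t_2\mu_s$ as in~\eqref{epsilon} and~\eqref{psi} turns $\tfrac12(A-B)$ into $\epsilon+\int_0^{t_1}\psi_sdB_s+\eta_{t_0}\deltabr_{t_1}-2G(\eta_{t_0})\Delta t_1$, whence
\[
J_0\geq\esp{\Big(\tfrac12(\sigmau-\sigmad)\Delta t_2|\eta_{t_1}|+\Big|\epsilon+\int_0^{t_1}\psi_sdB_s+\eta_{t_0}\deltabr_{t_1}-2G(\eta_{t_0})\Delta t_1\Big|\Big)^2}.
\]

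Next I minimise the right-hand side over $\psi\in M_G^2(0,t_1)$ for fixed $\epsilon$ and show the minimum is attained at $\psi\equiv0$. By Lemma~\ref{lemmapartizione} I approximate $\tfrac12(\sigmau-\sigmad)\Delta t_2|\eta_{t_1}|$ in $L_G^2(\F_{t_1})$ by nonnegative step variables $\sum_i\mathbb{I}_{A_i}x_i$ with $A_i\in\F_{t_1}$, $x_i\in\R_+$. For a fixed such partition, Lemma~\ref{auxiliary2} identifies $\esp{\sum_i\mathbb{I}_{A_i}(x_i+|\epsilon+\eta_{t_0}\deltabr_{t_1}-2G(\eta_{t_0})\Delta t_1|)^2}$ with an expectation under a prior $P^{\sigma^\ast}$ of \emph{constant} volatility $\sigma^\ast\in[\underline{\sigma},\overline{\sigma}]$; under $P^{\sigma^\ast}$ the variable $\deltabr_{t_1}=(\sigma^\ast)^2t_1$ is deterministic, so $\epsilon+\eta_{t_0}\deltabr_{t_1}-2G(\eta_{t_0})\Delta t_1$ reduces to a constant $\epsilon'$, and Lemma~\ref{auxiliary1} then gives $E^{P^{\sigma^\ast}}[\sum_i\mathbb{I}_{A_i}(x_i+|\epsilon'+\int_0^{t_1}\psi_sdB_s|)^2]\geq E^{P^{\sigma^\ast}}[\sum_i\mathbb{I}_{A_i}(x_i+|\epsilon'|)^2]$, the last being the value at $\psi\equiv0$. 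Using $\esp{\cdot}\geq E^{P^{\sigma^\ast}}[\cdot]$ and letting the step approximation refine — the functional is $L_G^2$-continuous in $|\eta_{t_1}|$, so the infimum over $\psi$ passes to the limit — one concludes
\[
\inf_{(V_0,\phi)}J_0\geq\inf_{\epsilon}\esp{\Big(\tfrac{|\eta_{t_1}|}{2}(\sigmau-\sigmad)\Delta t_2+\big|\epsilon+\eta_{t_0}\deltabr_{t_1}-2G(\eta_{t_0})\Delta t_1\big|\Big)^2},
\]
the admissible $\epsilon$ corresponding, by Lemma~\ref{boundv0}, to $V_0\in[-\esp{-H},\esp{H}]$; as $\epsilon\mapsto\esp{(\cdots)^2}$ is continuous, convex and coercive, a minimiser $\epsilon^\ast$ exists.

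It remains to show the pair $(V_0^\ast,\phi^\ast)$ of the statement attains this bound, which yields equality. For it, $\theta-\phi^\ast X$ equals $\tfrac12(\sigmau-\sigmad)\Delta t_2\mu$ on $(t_0,t_1]$ and $0$ on $(t_1,t_2]$, i.e.\ $\psi\equiv0$, while $V_0^\ast$ corresponds to $\epsilon=\epsilon^\ast$; moreover $\theta,\mu\in M_G^2$ ensure $\phi^\ast\in\Phi$. Conditioning $J_0(V_0^\ast,\phi^\ast)$ on $\F_{t_1}$ and maximising over the maximally distributed $\deltabr_{t_2}$ — the inner term being an $\F_{t_1}$-measurable affine function of it — exactly as in Theorem~\ref{1step}, the risk becomes $\esp{Z^2\vee(Z-(\sigmau-\sigmad)\Delta t_2|\eta_{t_1}|)^2}$ with $Z:=\tfrac12(\sigmau-\sigmad)\Delta t_2|\eta_{t_1}|+\epsilon^\ast+\eta_{t_0}\deltabr_{t_1}-2G(\eta_{t_0})\Delta t_1$; since $\tfrac12(\sigmau-\sigmad)\Delta t_2|\eta_{t_1}|\geq0$ this is $\esp{(\tfrac{|\eta_{t_1}|}{2}(\sigmau-\sigmad)\Delta t_2+|\epsilon^\ast+\eta_{t_0}\deltabr_{t_1}-2G(\eta_{t_0})\Delta t_1|)^2}$, exactly the lower bound of the previous paragraph. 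Hence $(V_0^\ast,\phi^\ast)$ is optimal, and the stated expression $V_0^\ast=\esp{H}-\tfrac12(\sigmau-\sigmad)\Delta t_2\esp{|\eta_{t_1}|}-\epsilon^\ast$ follows.

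The crux is the second step: showing that under the sublinear (worst-case) expectation the perturbation $\int_0^{t_1}\psi_sdB_s$ is never advantageous. In contrast with Proposition~\ref{final1}, the $\F_{t_1}$-measurable but genuinely volatility-dependent term $\eta_{t_0}\deltabr_{t_1}-2G(\eta_{t_0})\Delta t_1$ blocks a direct squaring-and-expanding argument; this forces one to reduce $|\eta_{t_1}|$ to step functions (Lemma~\ref{lemmapartizione}), then to pin the worst prior down to one with constant volatility so that $\deltabr_{t_1}$ becomes deterministic (Lemma~\ref{auxiliary2}), and only afterwards to apply the $L^2$-projection-type estimate (Lemma~\ref{auxiliary1}). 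Controlling the infimum over $\psi$ through this step-function approximation is the most delicate technical point.
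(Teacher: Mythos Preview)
Your proof is correct and follows essentially the same approach as the paper: reduce the problem on $(t_1,t_2]$ via conditional $G$-Jensen to obtain the expression~\eqref{tre}, then use the discretisation of Lemma~\ref{lemmapartizione} together with Lemmas~\ref{auxiliary1} and~\ref{auxiliary2} to show that $\psi\equiv0$ is optimal. Your chain $\esp{\cdot}\geq E^{P^{\sigma^\ast}}[\cdot\text{ with }\psi]\geq E^{P^{\sigma^\ast}}[\cdot\text{ without }\psi]=\esp{\cdot\text{ without }\psi}$ is a minor reordering of the paper's $\esp{\cdot}\geq\sup_\sigma E^{P^\sigma}[\cdot]\geq\sup_\sigma E^{P^\sigma}[\cdot\text{ without }\psi]=\esp{\cdot\text{ without }\psi}$, but logically equivalent; you are in fact slightly more explicit than the paper in verifying that the candidate $(V_0^\ast,\phi^\ast)$ actually attains the lower bound and that a minimiser $\epsilon^\ast$ exists.
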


\begin{proof}
By the same argument as in Proposition~\ref{final1} we conclude that 
\[
\phi_s^\ast X_s=\theta_s \quad \forall\; s\in(t_1,t_2]
\]
and focus on the following expression
\begin{equation}
\label{tre}
\inf_{\epsilon,\psi} \esp{\left(\frac{|\eta_{t_1}|}{2}(\sigmau-\sigmad)\Delta t_2+|\epsilon+\int_0^{t_1}\psi_sdB_s +\eta_{t_0}\deltabr_{t_1}-2G(\eta_{t_0})\Delta t_{1}|\right)^2},
\end{equation}
where $\epsilon$ and $\psi$ are as in~\eqref{epsilon} and~\eqref{psi}. Let $(Y_n)_{n\in\NM}$ be a sequence of random variables approximating $\frac{|\eta_{t_1}|}{2}(\sigmau-\sigmad)\Delta t_2$ in $L_G^2(\F_{t_1})$ as in Lemma~\ref{lemmapartizione}, with $Y_n=\sum_{i=0}^{n-1}\mathbb{I}_{A_{i,n}}y_{i,n}$, $n\in\mathbb{N}$, where $\{A_{i,n}\}_{i=0,\dots,n-1}$ is a partition of $\Omega$, $A_{i,n}\in\F_t$ and $y_{i,n}\in\R_+$. Consider now the auxiliary problem 
\[
\inf_{\epsilon,\psi} \esp{\left(Y_n+|\epsilon+\int_0^{t_1}\psi_sdB_s +\eta_{t_0}\deltabr_{t_1}-2G(\eta_{t_0})\Delta t_{1}|\right)^2}.
\] 
For every $n\in\NM$ and any admissible $\epsilon$ we can derive the following inequalities
\begin{align}
\notag& \esp{\left(Y_n+|\epsilon+\int_0^{t_1}\psi_sdB_s +\eta_{t_0}\deltabr_{t_1}-2G(\eta_{t_0})\Delta t_{1}|\right)^2}\geq \\
\notag\geq & \sup_{\sigma\in[\underline{\sigma}, \overline{\sigma}]} E^{P^\sigma}\left[\left(Y_n+|\epsilon+\int_0^{t_1}\psi_sdB_s +\eta_{t_0}\deltabr_{t_1}-2G(\eta_{t_0})\Delta t_{1}|\right)^2\right]\\
\label{sigmaconst}\geq & \sup_{\sigma\in[\underline{\sigma}, \overline{\sigma}]} E^{P^\sigma}\left[\left(Y_n+|\epsilon +\eta_{t_0}\deltabr_{t_1}-2G(\eta_{t_0})\Delta t_{1}|\right)^2\right]\\
\label{aux10}=& \esp{\left(Y_n+|\epsilon +\eta_{t_0}\deltabr_{t_1}-2G(\eta_{t_0})\Delta t_{1}|\right)^2}.
\end{align}
The inequality~\eqref{sigmaconst} is clear thanks to Lemma~\ref{auxiliary1}, because
\[
\epsilon^{P^\sigma}:=\epsilon+\eta_{t_0}\deltabr_{t_1}-2G(\eta_{t_0})\Delta t_{1}
\]
is constant $P^\sigma$-a.s.\ for every $\sigma\in[\underline{\sigma}, \overline{\sigma}]$ since 
\[
\deltabr_{t_1}=\sigma^2\Delta t_1 \quad P^\sigma\text{-a.s.}
\]
and $y_{i,n}\in\R_+$  $\forall n,i$. The equality~\eqref{aux10} comes directly from Lemma~\ref{auxiliary2}. Hence we can conclude that, for every $n\in\NM$ and any admissible $\epsilon$,
\begin{equation}
\label{due}
\begin{split}
 &\esp{\left(Y_n+|\epsilon+\int_0^{t_1}\psi_sdB_s +\eta_{t_0}\deltabr_{t_1}-2G(\eta_{t_0})\Delta t_{1}|\right)^2}\\
\geq& \;\esp{\left(Y_n+|\epsilon+\eta_{t_0}\deltabr_{t_1}-2G(\eta_{t_0})\Delta t_{1}|\right)^2}.
\end{split}
\end{equation}
By~\eqref{due} we derive by letting $n\to\infty$ that 
\[
\begin{split}
 &\esp{\left(\frac{|\eta_{t_1}|}{2}(\sigmau-\sigmad)\Delta t_2+|\epsilon+\int_0^{t_1}\psi_sdB_s +\eta_{t_0}\deltabr_{t_1}-2G(\eta_{t_0})\Delta t_{1}|\right)^2}\\
\geq& \;\esp{\left(\frac{|\eta_{t_1}|}{2}(\sigmau-\sigmad)\Delta t_2+|\epsilon+\eta_{t_0}\deltabr_{t_1}-2G(\eta_{t_0})\Delta t_{1}|\right)^2},
\end{split}
\]
for any admissible $\epsilon$ and any $\psi\in M_G^2(0,t_1)$, because of the $L_G^2$-convergence of $Y_n$ to $\frac{|\eta_{t_1}|}{2}(\sigmau-\sigmad)\Delta t_2$. This in turn implies
\[
\begin{split}
 &\inf_{\epsilon,\psi}\esp{\left(\frac{|\eta_{t_1}|}{2}(\sigmau-\sigmad)\Delta t_2+|\epsilon+\int_0^{t_1}\psi_sdB_s +\eta_{t_0}\deltabr_{t_1}-2G(\eta_{t_0})\Delta t_{1}|\right)^2}\\
\geq& \;\inf_{\epsilon}\esp{\left(\frac{|\eta_{t_1}|}{2}(\sigmau-\sigmad)\Delta t_2+|\epsilon+\eta_{t_0}\deltabr_{t_1}-2G(\eta_{t_0})\Delta t_{1}|\right)^2}.
\end{split}
\]
\end{proof}

\noindent
As a particular example, we get now the expression of the mean-variance optimal portfolio for a particular claim of the type introduced in Theorem~\ref{final2}, for which we are able to determine explicitly also the optimal initial wealth $V_0^\ast$.
\begin{esempio}
Consider a claim $H$ of the following form 
\[
H=\esp{H}+\int_0^{t_2}\theta_{s} dB_{s}+\eta_{t_0}\deltabr_{t_1}-2G(\eta_{t_0})\Delta t_{1}+\eta_{t_1}\deltabr_{t_2}-2G(\eta_{t_1})\Delta t_{2},
\]
where $0= t_{0}<t_1<t_{2}=T$, $(\theta_{s})_{s\in[0,t_2]}\in M_G^2(0,t_2)$, $\eta_{t_0}\in\R_+$, $\eta_{t_1}\in L_G^2(\F_{t_1})$ and 
\begin{equation}
|\eta_{t_1}|=\exp\left(B_{t_1}-\frac{1}{2}\brac_{t_1}\right)=1+\int_0^{t_1}e^{B_{s}-\frac{1}{2}\brac_{s}} dB_{s}.
\end{equation}
Assume moreover that 
\begin{equation}
\label{speriamo2}
\frac{1}{2}\Delta t_2 e^{\frac{1}{2}\sigmau \Delta t_1}\geq \eta_{t_0} \Delta t_1+\frac{1}{2}\Delta t_2.
\end{equation}
The optimal mean-variance portfolio is given by
\[
X_t\phi_t^\ast=\left(\theta_{t}-\frac{e^{B_{t}-\frac{1}{2}\brac_{t}}(\sigmau-\sigmad)\Delta t_2}{2}\right)\mathbb{I}_{(t_0,t_1]}(t)+\theta_{t}\mathbb{I}_{(t_1,t_2]}(t)
\]
for $t\in[0,T]$ and 
\begin{equation}
\label{528}
V_0^\ast=\esp{H}-\frac{(\sigmau-\sigmad)\Delta t_2 }{2}.
\end{equation}
\end{esempio}
\begin{proof}
By Theorem~\ref{final2} we only have to find the infimum of
\begin{equation}
\label{speriamo}
\begin{split}
 &E_G\Bigg[\Bigg(\frac{1}{2}(\sigmau-\sigmad)\Delta t_2 e^{B_{t_1}-\frac{1}{2}\brac_{t_1}} +\left|\epsilon+\eta_{t_0}\deltabr_{t_1}-2G(\eta_{t_0})\Delta t_1\right|\Bigg)^2\Bigg].
\end{split}
\end{equation}
As the expression~\eqref{speriamo} is always bigger than 
\[
\begin{split}
\esp{\left(\frac{1}{2}(\sigmau-\sigmad)\Delta t_2 e^{B_{t_1}-\frac{1}{2}\brac_{t_1}}\right)^2}&=\frac{1}{4}(\sigmau-\sigmad)^2\Delta t_2^2\esp{e^{2B_{t_1}-\brac_{t_1}}}\\
&=\frac{1}{4}(\sigmau-\sigmad)^2\Delta t_2^2 E^{P^{\overline{\sigma}}}\left[e^{2B_{t_1}-\brac_{t_1}}\right]\\
&=\frac{1}{4}(\sigmau-\sigmad)^2\Delta t_2^2 e^{\sigmau \Delta t_1},
\end{split}
\]
we prove~\eqref{528} by showing that with the particular choice $\epsilon=0$ the quantity~\eqref{speriamo} reaches this lower bound.  To this end one has to prove that
\[
\begin{split}
&\sup_{\sigma\in\mathcal{A}_{0,t_1}^\Theta} E^P\left[
\left(\frac{1}{2}(\sigmau-\sigmad)\Delta t_2 e^{\int_0^{t_1}\sigma_sdW_s-\frac{1}{2}\int_0^{t_1}\sigma_s^2ds}+\eta_{t_0}|\int_0^{t_1} 
(\sigma_s^2-\sigmau)ds|\right)^2\right]\\
=&\esp{\left(\frac{1}{2}(\sigmau-\sigmad)\Delta t_2 e^{B_{t_1}-\frac{1}{2}\brac_{t_1}}+|\eta_{t_0}\deltabr_{t_1}-2G(\eta_{t_0})\Delta t_1|\right)^2}\\
=&\esp{\left(\frac{1}{2}(\sigmau-\sigmad)\Delta t_2 e^{B_{t_1}-\frac{1}{2}\brac_{t_1}}\right)^2},
\end{split}
\]
where $\mathcal{A}_{0,t_1}^\Theta$ denotes the set of $\mathbb{F}$-adapted processes on $[0,t_1]$ taking values in $[\underline{\sigma},\overline{\sigma}]$. This holds if the inequality
\begin{equation}
\label{aux99}
\begin{split}
&E^P\left[
\left(\frac{1}{2}(\sigmau-\sigmad)\Delta t_2 e^{\int_0^{t_1}\sigma_sdW_s-\frac{1}{2}\int_0^{t_1}\sigma_s^2ds}+\eta_{t_0}\int_0^{t_1} 
(\sigmau-\sigma_s^2)ds\right)^2\right]\\
&\qquad\qquad\leq\frac{1}{4}(\sigmau-\sigmad)^2\Delta t_2^2 e^{\sigmau \Delta t_1}
\end{split}
\end{equation}
is verified for any $\sigma\in\mathcal{A}_{0,t_1}^\Theta$. As~\eqref{aux99} holds if and only if we have
\[
\begin{split}
&E^P\Bigg[\Bigg(\frac{1}{2}(\sigmau-\sigmad)\Delta t_2 \Big(e^{\int_0^{t_1}\sigma_sdW_s-\frac{1}{2}\int_0^{t_1}\sigma_s^2ds}+e^{\frac{1}{2}\sigmau \Delta t_1}\Big)+\eta_{t_0}\int_0^{t_1}
(\sigmau-\sigma_s^2)ds \Bigg)\\
&\quad\cdot\Bigg(\frac{1}{2}(\sigmau-\sigmad)\Delta t_2 \Big(e^{\int_0^{t_1}\sigma_sdW_s-\frac{1}{2}\int_0^{t_1}\sigma_s^2ds}-e^{\frac{1}{2}\sigmau \Delta t_1}\Big)+\eta_{t_0}\int_0^{t_1}
(\sigmau-\sigma_s^2)ds \Bigg)\Bigg]\leq 0,
\end{split}
\]
we complete the proof by showing that the previous expression is bounded from above by
\[
\begin{split}
& \lim_{N\to\infty}  C(N)\; E^P\Bigg[\Bigg(\frac{1}{2}(\sigmau-\sigmad)\Delta t_2 \Big(e^{\int_0^{t_1}\sigma_sdW_s-\frac{1}{2}\int_0^{t_1}\sigma_s^2ds}-e^{\frac{1}{2}\sigmau \Delta t_1}\Big)\\
&\qquad\qquad\qquad\qquad\qquad\qquad +\eta_{t_0}\int_0^{t_1}(\sigmau-\sigma_s^2)ds \Bigg)\mathbb{I}_{\{\int_0^{t_1}\sigma_sdW_s<N\}}\Bigg]\\
\leq &\lim_{N\to\infty}  C(N)\left(\frac{1}{2}(\sigmau-\sigmad)\Delta t_2 \left(1-e^{\frac{1}{2}\sigmau \Delta t_1}\right)+
\eta_{t_0} E^P\left[\int_0^{t_1}(\sigmau-\sigma_s^2)ds\right]\right)\\
\leq &\lim_{N\to\infty}  C(N)\left(\frac{1}{2}(\sigmau-\sigmad)\Delta t_2 \left(1-e^{\frac{1}{2}\sigmau \Delta t_1}\right)+
\eta_{t_0} (\sigmau-\sigmad)\right)<0,
\end{split}
\]
where the last inequality comes from condition~\eqref{speriamo2} and $C(N)$ is a positive constant for each $N\in\mathbb{N}$.
\end{proof}

\noindent
It is quite straightforward to extend the result of Theorem~\ref{final2} by generalizing the decomposition of $|\eta_{t_1}|$, and thus completing the second step of our scheme.
\begin{teo}
Consider a claim $H$ of the form 
\[
H=\esp{H}+\int_0^{t_2}\theta_{s} dB_{s}+\eta_{t_0}\deltabr_{t_1}-2G(\eta_{t_0})\Delta t_{1}+\eta_{t_1}\deltabr_{t_2}-2G(\eta_{t_1})\Delta t_{2},
\]
where $0= t_{0}<t_1<t_{2}=T$, $(\theta_{s})_{s\in[0,t_2]}\in M_G^2(0,t_2)$, $\eta_{t_0}\in\R$, $\eta_{t_1}\in L_G^2(\F_{t_1})$ and 
\begin{equation}
|\eta_{t_1}|=\esp{|\eta_{t_1}|}+\int_0^{t_1}\mu_{s} dB_{s}+\xi_{t_0}\deltabr_{t_1}-2G(\xi_{t_0})\Delta t_{1},
\end{equation}
for a certain process $(\mu_{s})_{s\in[0,t_1]}\in M_G^2(0,t_1) $ and $\xi_{t_0}\in\R$. The optimal mean-variance portfolio is given by 
\[
\phi_t^\ast X_t=\left(\theta_{t}-\frac{\mu_{t}(\sigmau-\sigmad)\Delta t_2}{2}\right)\mathbb{I}_{(t_0,t_1]}(t)+\theta_{t}\mathbb{I}_{(t_1,t_2]}(t) 
\]
and 
\[
V_0^\ast=\esp{H}-\frac{1}{2}(\sigmau-\sigmad)\Delta t_2\esp{|\eta_{t_1}|}-\epsilon,
\]
where $\epsilon\in\R$ solves
\[
\begin{split}
&\inf_{\epsilon} E_G\Bigg[\Bigg(\frac{|\eta_{t_1}|}{2}(\sigmau-\sigmad)\Delta t_2+\Big|\epsilon+\left(\eta_{t_0}-\frac12(\sigmau-\sigmad)\xi_{t_0}\Delta t_1\right)\deltabr_{t_1}+\\
&\qquad\qquad\qquad-2\left(G(\eta_{t_0})-\frac12(\sigmau-\sigmad)\Delta t_1G(\xi_{t_0})\right)\Delta t_{1}\Big|\Bigg)^2\Bigg].
\end{split}
\]
\end{teo}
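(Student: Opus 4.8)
The plan is to mimic the two-step argument of Proposition~\ref{final1} and Theorem~\ref{final2}, the only new feature being that the extra term $\xi_{t_0}\deltabr_{t_1}-2G(\xi_{t_0})\Delta t_{1}$ in the decomposition of $|\eta_{t_1}|$ has to be carried along as an affine perturbation in $\deltabr_{t_1}$. First I would treat the interval $(t_1,t_2]$ exactly as in Proposition~\ref{final1}. Using the notation of \eqref{notation}, conditioning the risk functional $\esp{\left(H-V_T(V_0,\phi)\right)^2}$ on $\F_{t_1}$, applying Proposition~\ref{jensen} together with Lemma~\ref{quadrato}, and evaluating the two conditional expectations by means of Proposition~\ref{prop:useful} and the maximal distribution of $\deltabr_{t_2}$, one obtains the lower bound $\esp{A^2\vee\big(A-(\sigmau-\sigmad)\Delta t_2|\eta_{t_1}|\big)^2}$, where $A:=c+\int_0^{t_1}\varphi_s\,dB_s+\eta_{t_0}\deltabr_{t_1}-2G(\eta_{t_0})\Delta t_{1}$. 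This bound is attained as soon as $\phi_sX_s=\theta_s$ on $(t_1,t_2]$, since then the contribution of $\int_{t_1}^{t_2}\varphi_s\,dB_s$ disappears and, $\deltabr_{t_2}$ being maximally distributed and independent of $\F_{t_1}$, $\condesp[\F_{t_1}]{\big(A+\eta_{t_1}\deltabr_{t_2}-2G(\eta_{t_1})\Delta t_2\big)^2}$ equals exactly $A^2\vee\big(A-(\sigmau-\sigmad)\Delta t_2|\eta_{t_1}|\big)^2$. This fixes $\phi^\ast$ on $(t_1,t_2]$.

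Second, I would symmetrise. From the identity $p^2\vee q^2=\big(\frac12|p+q|+\frac12|p-q|\big)^2$ with $p=A$, $q=A-(\sigmau-\sigmad)\Delta t_2|\eta_{t_1}|$, and since $(\sigmau-\sigmad)\Delta t_2|\eta_{t_1}|\geq0$, the lower bound rewrites as $\esp{\big(\frac12(\sigmau-\sigmad)\Delta t_2|\eta_{t_1}|+\big|A-\frac12(\sigmau-\sigmad)\Delta t_2|\eta_{t_1}|\big|\big)^2}$. Substituting the decomposition of $|\eta_{t_1}|$ given in the statement and setting $\epsilon$ and $\psi_s$ as in \eqref{epsilon} and \eqref{psi}, a direct computation shows that $A-\frac12(\sigmau-\sigmad)\Delta t_2|\eta_{t_1}|=\epsilon+\int_0^{t_1}\psi_s\,dB_s+\kappa\deltabr_{t_1}-\lambda$, where $\kappa$ is the coefficient multiplying $\deltabr_{t_1}$ and $\lambda$ the constant subtracted inside $|\cdot|$ in the infimum of the statement. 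It therefore remains to prove that the analogue of \eqref{tre}, namely
\[
\inf_{\epsilon,\psi}\esp{\Big(\frac12(\sigmau-\sigmad)\Delta t_2|\eta_{t_1}|+\Big|\epsilon+\int_0^{t_1}\psi_s\,dB_s+\kappa\deltabr_{t_1}-\lambda\Big|\Big)^2},
\]
equals the same infimum computed with $\psi\equiv0$; from this one reads off the stated form of $\phi^\ast$ on $(t_0,t_1]$, the value $V_0^\ast=\esp{H}-\frac12(\sigmau-\sigmad)\Delta t_2\esp{|\eta_{t_1}|}-\epsilon$, and the reduced one-dimensional minimisation in $\epsilon$.

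Third, for this reduction I would follow the proof of Theorem~\ref{final2} essentially verbatim. Approximate the nonnegative random variable $\frac12(\sigmau-\sigmad)\Delta t_2|\eta_{t_1}|\in L_G^2(\F_{t_1})$ by simple functions $Y_n=\sum_i\mathbb{I}_{A_{i,n}}y_{i,n}$ with $y_{i,n}\geq0$ via Lemma~\ref{lemmapartizione}; for fixed $n$ and admissible $\epsilon$, bound the expectation from below by the supremum of $E^{P^\sigma}[\cdot]$ over constant volatilities $\sigma\in[\underline{\sigma},\overline{\sigma}]$; observe that under such a constant $\sigma$ one has $\deltabr_{t_1}=\sigma^2\Delta t_1$ $P^\sigma$-a.s., so that $\epsilon+\kappa\deltabr_{t_1}-\lambda$ is a deterministic constant; apply Lemma~\ref{auxiliary1} to discard $\int_0^{t_1}\psi_s\,dB_s$, and then Lemma~\ref{auxiliary2} to pass back to $E_G$; finally let $n\to\infty$, using the $L_G^2$-convergence of $Y_n$. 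Combining the resulting equality with the tightness established in Step~1 shows that the infimum of the risk is attained at the announced pair $(V_0^\ast,\phi^\ast)$.

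The step I expect to be the main obstacle is this last one, more precisely the verification that Lemmas~\ref{auxiliary1} and~\ref{auxiliary2} remain applicable once the plain constant $-2G(\eta_{t_0})\Delta t_1$ of Theorem~\ref{final2} is replaced by the affine expression $\kappa\deltabr_{t_1}-\lambda$. Lemma~\ref{auxiliary1} goes through because its proof only uses that the quantity playing the role of $\epsilon$ be $P$-a.s.\ constant --- which $\epsilon+\kappa\deltabr_{t_1}-\lambda$ is under each constant-volatility scenario --- and that the weights $y_{i,n}$ be nonnegative. Lemma~\ref{auxiliary2} is proved by induction on the number of atoms, using only the maximal distribution of $\deltabr_{t_1}$ and a convexity argument that is insensitive to the precise form of the $\deltabr_{t_1}$-dependence, so it remains valid for the affine function $\epsilon+\kappa\deltabr_{t_1}-\lambda$. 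The rest --- the symmetrisation identity, the bookkeeping needed to collect $\kappa$ and $\lambda$ in Step~2, and checking admissibility of $\epsilon$ via Lemma~\ref{boundv0} --- is routine, in agreement with the statement's own remark that this is a straightforward extension of Theorem~\ref{final2}.
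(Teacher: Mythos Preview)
Your proposal is correct and is precisely the approach intended by the paper, whose proof reads in its entirety ``The proof follows the same steps as in Theorem~\ref{final2} and is omitted.'' You have faithfully reconstructed those steps and, in particular, correctly isolated the only nontrivial point: after the change of variables \eqref{epsilon}--\eqref{psi} the term inside the absolute value becomes an affine function $\epsilon+\kappa\,\deltabr_{t_1}-\lambda$ of $\deltabr_{t_1}$ rather than the specific form $\epsilon+\eta_{t_0}\deltabr_{t_1}-2G(\eta_{t_0})\Delta t_1$, and Lemmas~\ref{auxiliary1} and~\ref{auxiliary2} must be checked to still apply. Your justification for this is sound: Lemma~\ref{auxiliary1} requires only that the ``$\epsilon$'' be $P^\sigma$-a.s.\ constant under each constant-volatility prior, which $\epsilon+\kappa\sigma^2\Delta t_1-\lambda$ is; and the proof of Lemma~\ref{auxiliary2} treats $-K_t$ as a black-box function of the maximally distributed $\deltabr_{t}$, so it goes through unchanged for any affine (indeed any continuous) function of $\deltabr_{t_1}$.
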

\begin{proof}
The proof follows the same steps as in Theorem~\ref{final2} and is omitted.
\end{proof}

\section{Bounds for the Terminal Risk}
The extension to the general piecewise constant case is much more involved. It is clear however, given the explicit achievements of Section 4, that in order to obtain a general result it is crucial to study the mean-variance problem in the situation where 
\[
|\eta_t|=|\eta_0|+\int_0^t{\mu_s}dB_s,
\]
for every $t\in[0,T]$, with $(\mu_t)\ut\in M_G^2[0,T]$. As a partial answer to this issue we provide here a lower and upper bound for the optimal terminal risk.

\begin{lemma}
Consider a claim $H$ of the form 
\[
H=\esp{H}+\int_0^{T}\theta_{s} dB_{s}+\int_0^T\eta_s d\langle B\rangle_s-2\int_0^TG(\eta_s)ds,
\]
where $(\theta_{s})_{s\in[0,T]}\in M_G^2(0,T)$, $(\eta_{s})_{s\in[0,T]}\in M_G^1(0,T)$ and 
\[
|\eta_t|=|\eta_0|+\int_0^t{\mu_s}dB_s,
\]
for a certain process $(\mu_{s})_{s\in[0,T]}\in M_G^2(0,T)$, for every $t\in[0,T]$. The optimal terminal risk~\eqref{problemadef} lies in the closed interval $[\underline{J}(V_0,\phi),\overline{J}(V_0,\phi)]$, where 
\[
\begin{split}
 \underline{J}(V_0,\phi)&=\left(\frac{\esp{-\int_0^T\eta_sd\langle B\rangle_s+2\int_0^TG(\eta_s)ds}}{2}\right)^2=\left(\frac{\esp{K_T}}{2}\right)^2,\\
 \overline{J}(V_0,\phi)&=\esp{\left(\frac{(\sigmau-\sigmad)}{2}\int_0^T|\eta_s|ds\right)^2}.
\end{split} 
\]
\end{lemma}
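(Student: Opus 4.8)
The statement packages two separate estimates: the lower bound $\inf_{(V_0,\phi)}J_0(V_0,\phi)\ge \underline{J}$ and the upper bound $\inf_{(V_0,\phi)}J_0(V_0,\phi)\le \overline{J}$, where (recall $K_T=-\int_0^T\eta_sd\langle B\rangle_s+2\int_0^TG(\eta_s)ds$) we have $\underline{J}=\bigl(\tfrac12\esp{K_T}\bigr)^2$ and $\overline{J}=\esp{\bigl(\tfrac{\sigmau-\sigmad}{2}\int_0^T|\eta_s|ds\bigr)^2}$. The plan is to get the lower bound from the $G$-Jensen argument already used in Theorems \ref{deterministiceta}, \ref{particularcase} and \ref{1step}, and to get the upper bound by exhibiting one explicit admissible strategy that reproduces the \emph{midpoint} of the range in which $K_T$ fluctuates; this is precisely where the standing hypothesis $|\eta_t|=|\eta_0|+\int_0^t\mu_s dB_s$ is used.

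\emph{Lower bound.} Fix $(V_0,\phi)$, set $c:=\esp{H}-V_0\in\R$ and $\varphi_s:=\theta_s-\phi_sX_s\in M_G^2(0,T)$, and write $Z:=c+\int_0^T\varphi_s dB_s-K_T$, so that $J_0(V_0,\phi)=\esp{Z^2}$. If this is infinite there is nothing to prove; otherwise $Z\in L_G^1(\F_T)$, and Lemma \ref{quadrato} together with Proposition \ref{jensen} gives $\esp{Z^2}\ge \esp{Z}^2\vee\esp{-Z}^2$. Since $\int_0^T\varphi_s dB_s$ is a symmetric $G$-martingale with vanishing $G$-expectation and $-K$ is a $G$-martingale with $K_0=0$, Proposition \ref{prop:useful} and constant translatability yield $\esp{Z}=c$ and $\esp{-Z}=\esp{K_T}-c$. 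Hence $J_0(V_0,\phi)\ge c^2\vee(\esp{K_T}-c)^2$, and minimising the right-hand side over $c$ (the minimum $\tfrac14\esp{K_T}^2$ being attained at $c=\tfrac12\esp{K_T}$, which indeed lies in the admissible range $[0,\esp{K_T}]$ from Lemma \ref{boundv0}) proves $\inf J_0\ge\underline{J}$.

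\emph{Upper bound.} I would first record two elementary facts. Because $\langle B\rangle$ is absolutely continuous with Radon--Nikodym density $\dot{\langle B\rangle}_s\in[\sigmad,\sigmau]$ quasi surely, the pointwise identity $2G(\eta_s)-\eta_s\dot{\langle B\rangle}_s=(\sigmau-\dot{\langle B\rangle}_s)\eta_s^{+}+(\dot{\langle B\rangle}_s-\sigmad)\eta_s^{-}$ shows that $0\le K_T\le(\sigmau-\sigmad)\int_0^T|\eta_s|ds$ q.s.; in particular $K_T\in L_G^2(\F_T)$. Moreover a Fubini-type identity for $G$-stochastic integrals — true for simple integrands by a direct computation and extended by $M_G^2$-continuity — gives $\int_0^T|\eta_s|ds=|\eta_0|T+\int_0^T(T-s)\mu_s dB_s$, with $(T-s)\mu_s\in M_G^2(0,T)$. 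Now take $\phi^\ast_sX_s:=\theta_s-\tfrac{\sigmau-\sigmad}{2}(T-s)\mu_s$ and $V_0^\ast:=\esp{H}-\tfrac{\sigmau-\sigmad}{2}|\eta_0|T$; then $\phi^\ast\in\Phi$, and with $c,\varphi$ as above one computes $Z=c+\int_0^T\varphi_s dB_s-K_T=\tfrac{\sigmau-\sigmad}{2}\int_0^T|\eta_s|ds-K_T$. The q.s. sandwich on $K_T$ forces $|Z|\le\tfrac{\sigmau-\sigmad}{2}\int_0^T|\eta_s|ds$, whence by monotonicity $J_0(V_0^\ast,\phi^\ast)=\esp{Z^2}\le\esp{\bigl(\tfrac{\sigmau-\sigmad}{2}\int_0^T|\eta_s|ds\bigr)^2}=\overline{J}$, i.e. $\inf J_0\le\overline{J}$.

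The only ingredients that are not pure bookkeeping are the observation that the centre $\tfrac{\sigmau-\sigmad}{2}\int_0^T|\eta_s|ds$ of the interval containing $K_T$ is itself a constant plus a $G$-stochastic integral (the sole role of the hypothesis $|\eta_t|=|\eta_0|+\int_0^t\mu_s dB_s$) and the $G$-Fubini identity; I expect verifying the latter, and — if one insists on the constraint $V_0\in\R_+$ — checking that $V_0^\ast$ is admissible or replacing it by a nearby admissible value at negligible cost, to be where the write-up needs some care. Everything else reduces to Propositions \ref{prop:useful} and \ref{jensen}, Lemma \ref{quadrato}, and the quasi-sure estimate $0\le K_T\le(\sigmau-\sigmad)\int_0^T|\eta_s|ds$.
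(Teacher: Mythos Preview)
Your proof is correct and takes essentially the same approach as the paper: the lower bound via $G$-Jensen (Lemma~\ref{quadrato} and Proposition~\ref{jensen}) is identical, and for the upper bound both arguments rely on the q.s.\ sandwich $0\le K_T\le(\sigmau-\sigmad)\int_0^T|\eta_s|\,ds$ together with the identity $\int_0^T|\eta_s|\,ds=|\eta_0|T+\int_0^T(T-s)\mu_s\,dB_s$ to arrive at the same candidate strategy. The only cosmetic difference is that the paper first bounds $Z^2$ pointwise by the maximum of its two endpoint values and then minimises the resulting expression over $(V_0,\phi)$, whereas you exhibit the minimising strategy directly and evaluate there.
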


\begin{proof}
We start with the computation of the upper bound for $J(V_0,\phi)$:
\begin{align}
\notag&\esp{\left(\esp{H}-V_0+\int_0^T\left(\theta_s-\phi_sX_s\right)dB_s+\int_0^T\eta_sd\langle B\rangle_s-2\int_0^TG(\eta_s)ds\right)^2}\\
&\notag\leq E_G\Bigg[ \left(\esp{H}-V_0+\int_0^T\left(\theta_s-\phi_sX_s\right)dB_s\right)^2\vee\\
\label{neqcomp1}&\qquad\qquad\left(\esp{H}-V_0+\int_0^T\left(\theta_s-\phi_sX_s\right)dB_s-(\sigmau-\sigmad)\int_0^T|\eta|_sds\right)^2\Bigg]
\end{align}
\begin{align}
&\notag=E_G\Bigg[ \left(\esp{H}-V_0+\int_0^T\left(\theta_s-\phi_sX_s\right)dB_s\right)^2\vee\Big(\esp{H}-V_0+\\
\label{neqcomp2}
&\quad\quad-|\eta_0|(\sigmau-\sigmad)T+\int_0^T\left(\theta_s-\phi_sX_s-(T-s)(\sigmau-\sigmad)\mu_s\right)dB_s\Big)^2\Bigg],
\end{align} 
where we used that 
\[
\int_0^T\eta_sd\langle B\rangle_s-2\int_0^TG(\eta_s)ds\;\in\;[-(\sigmau-\sigmad)\int_0^T|\eta|_sds,0]
\]
in~\eqref{neqcomp1} and that 
\[
\begin{split}
\int_0^T|\eta|_sds&=\int_0^T\left(|\eta_0|+\int_0^s\mu_udB_u\right)ds\\
&=|\eta_0|T+\int_0^T\int_0^s\mu_udB_uds\\
&=|\eta_0|T+T\int_0^T\mu_sdB_s-\int_0^Ts\mu_sdB_s\\
&=|\eta_0|T+\int_0^T(T-s)\mu_sdB_s
\end{split}
\]
in~\eqref{neqcomp2}. We now perform the same change of variables seen in Proposition~\ref{final1} by setting
\[
\begin{split}
\epsilon&:=\esp{H}-V_0-\frac{T}{2}(\sigmau-\sigmad)|\eta_0|,\\
\psi_t&:=	\theta_t-\phi_tX_t-\frac{(T-s)}{2}(\sigmau-\sigmad)\mu_t,
\end{split}
\] 
to rewrite~\eqref{neqcomp2} as
\[
\begin{split}
&\esp{\left(\frac{T}{2}(\sigmau-\sigmad)|\eta_0|+\int_0^T\frac{(T-s)}{2}(\sigmau-\sigmad)\mu_sdB_s+\left| \epsilon+\int_0^T\psi_sdB_s\right|\right)^2}\\
=&\esp{\left(\frac{(\sigmau-\sigmad)}{2}\int_0^T|\eta|_sds+\left| \epsilon+\int_0^T\psi_sdB_s\right|\right)^2}
\end{split}
\]
which is minimal when $\epsilon=0$ and $\psi\equiv 0$, see also the proof of Proposition~\ref{final1}. On the other hand a lower bound is obtained by means of the $G$-Jensen inequality. As in Theorem~\ref{1step} we get the following chain of inequalities
\begin{align}
\notag
&\esp{\left(\esp{H}-V_0+\int_0^T\left(\theta_s-\phi_sX_s\right)dB_s+\int_0^T\eta_sd\langle B\rangle_s-2\int_0^TG(\eta_s)ds\right)^2}\\
\label{extra3}
\geq &\left(\esp{H}-V_0\right)^2 \vee \left(\esp{H}-V_0+\esp{-\int_0^T\eta_sd\langle B\rangle_s+2\int_0^TG(\eta_s)ds}\right)^2
\\
\label{extra4}
\geq & \left(\frac{\esp{-\int_0^T\eta_sd\langle B\rangle_s+2\int_0^TG(\eta_s)ds}}{2}\right)^2,
\end{align}
where we have used Proposition~\ref{prop:useful} in~\eqref{extra3} and chosen  
\[
\bar{V}_0=\esp{H}-\frac{\esp{-\int_0^T\eta_sd\langle B\rangle_s+2\int_0^TG(\eta_s)ds}}{2}
\]
to minimize the expression over $V_0$ and obtain~\eqref{extra4}.
\end{proof}

\end{document}